\documentclass[11pt,a4paper]{article}
\usepackage{fullpage,amssymb,amsfonts,amsthm,paralist,graphicx,color,float,array,hyperref,amsmath,comment}
\usepackage[english]{babel}
\theoremstyle{theorem}
\newtheorem{theorem}{Theorem}[section]
\newtheorem{corollary}[theorem]{Corollary}
\newtheorem{lemma}[theorem]{Lemma}

\newtheorem{proposition}[theorem]{Proposition}
\theoremstyle{definition}
\newtheorem{definition}[theorem]{Definition}
\newtheorem{example}[theorem]{Example}
\newtheorem{remark}[theorem]{Remark}

\numberwithin{equation}{section}

\usepackage[normalem]{ulem}

\newcommand*{\dom}{{\rm dom}}

\newcommand*{\cl}{{\rm cl}}

\newcommand{\Reals}{\mathbb{R}}      
\newcommand{\Quot}{\mathbb{Q}}
\newcommand{\Erw}{\mathbb{E}}
\newcommand{\tn}{\textnormal}

\newcommand{\Ind}{\mathbf{1}}

\newcommand{\Prob}{\mathbb{P}}
\newcommand{\Lrho}{L^{\mathcal{R}}}
\newcommand{\Dualrho}{L^{\mathcal{R}*}}
\newcommand{\heart}{H^{\mathcal{R}}}

\newcommand{\Mrho}{M^{\mathcal{R}}}
\newcommand{\Rho}{\rho_{\mathcal R}}
\newcommand{\Norm}{\|\cdot\|}
\newcommand{\acc}{\mathcal A}
\newcommand{\N}{\mathbb N}
\newcommand{\Nat}{\mathbb{N}}
\newcommand{\eps}{\varepsilon}
\newcommand{\Bd}{\mathcal{L}^{\infty}}

\newcommand{\bounded}{\mathbf{ba}}
\newcommand{\countable}{\mathbf{ca}}
\newcommand{\cenv}{\mathbf{C}}
\newcommand{\Borel}{\mathbb{B}}
\newcommand{\Linfty}{L^{\infty}_{\mathbb{P}}}
\newcommand{\Pcal}{\mathcal P}
\newcommand{\Fcal}{\mathcal F}
\newcommand{\price}{\mathfrak p}
\newcommand{\Scal}{\mathcal S}
\newcommand{\Xcal}{\mathcal X}
\newcommand{\Ecal}{\mathcal E}
\newcommand{\tilrho}{\rho_{\tilde{\mathcal{R}}}}
\newcommand{\om}{\omega}
\DeclareMathOperator*{\esssup}{ess\,sup}

\title{\textsc{Model Spaces for Risk Measures}}
\date{\normalsize September 14, 2017}
\author{Felix-Benedikt Liebrich\thanks{E-mail: liebrich@math.lmu.de}\hspace{1.5cm}Gregor Svindland\thanks{E-Mail: svindla@math.lmu.de}\\
{\normalsize\textit{Department of Mathematics, LMU Munich, Germany}}}

\begin{document}
\maketitle

\abstract{We show how risk measures originally defined in a model free framework in terms of acceptance sets and reference assets imply a meaningful underlying probability structure. Hereafter we construct a maximal domain of definition of the risk measure respecting the underlying ambiguity profile. We particularly emphasise liquidity effects and discuss the correspondence between properties of the risk measure and the structure of this domain as well as subdifferentiability properties.\\ $\,$ \\
\textsc{Keywords:} Model free risk assessment, extension of risk measures, continuity properties of risk measures,  subgradients\\
\textsc{MSC (2010):} 46B42, 91B30, 91G80}

\section{Introduction}

\noindent There is an ongoing debate on the {\em right} model space for financial risk measures, i.e.\ about what an ideal domain of definition for risk measures would be. Typically---as risk occurs in face of randomness---the risks which are to be measured are identified with real-valued random variables on some measurable space $(\Omega,\Fcal)$. The question which causes debate, however, is which space of random variables one should use as model space.\\
Since {\em risk} is often understood as Knightian \cite{Knight1921} uncertainty about the underlying probabilistic mechanism, many scholars argue that model spaces should be robust in the sense of not depending too heavily on some specific probabilistic model. We refer to this normative viewpoint as \textit{paradigm of minimal model dependence}. The literature usually suggests one of the following model spaces: 
\begin{itemize}
\item[(i)] ${\cal L}^0$ or $L^0_\Prob$, the spaces of all random variables or $\Prob$-almost sure ($\Prob$-a.s.) equivalence classes of random variables for some probability measure $\Prob$ on $(\Omega,\cal F)$, respectively, see \cite{Delbaen2000, 6};
\item[(ii)] $\cal L^{\infty}$ or $\Linfty$, the spaces of all bounded random variables or $\Prob$-a.s.\ equivalence classes of bounded random variables, respectively, see  \cite{Delbaen2000, 6, FS2002, 5, Kusuoka, Laeven1} and the references therein; 
\item[(iii)] $L^p_\Prob$, $p\in [1, \infty)$, the space of $\Prob$-a.s.\ equivalence classes of random variables with finite $p$-th moment, or more generally Orlicz hearts, see e.g.\ \cite{Laeven2, 9, Frittelli, Rockafellar2006}. 
\end{itemize}
The spaces in (i) and (ii) satisfy minimal model dependence in that ${\cal L}^{0}$ and $\cal L^{\infty}$ are completely model free, whereas $ L^0_{\Prob}$ and $\Linfty$ in fact only depend on the null sets of the probability measure $\Prob$. 
The problem with choosing ${\cal L}^0$ or $L^0_\Prob$, however, is that these spaces are in general too large to reasonably define aggregation based risk measures on them. The latter would require some kind of integral to be well-defined. Moreover, if $(\Omega, {\cal F})$ is not finite, ${\cal L}^0$ or $L^0_\Prob$  do not allow for a locally convex topology which make them unapt for optimisation. Important applications of risk measures, however, use them as objective functions or constraints in optimisation problems. Since $\cal L^{\infty}$ and $\Linfty$ are Banach spaces---so in particular locally convex spaces---and satisfy minimal model dependence, these model spaces have become most popular in the literature, and amongst them in particular  $\Linfty$ due to nicer analytic properties; see  \cite{Delbaen2000, 6, FS2002, 5, Kusuoka, Laeven1} and the references therein. In applications, however, unbounded models for risks are standard, like the log-normal distribution in Black-Scholes market models, etc. Assuming frictionless markets, there is no upper bound on the volumes and thus value of financial positions. Hence unbounded distributions appear quite naturally as limiting objects of bounded distributions, and in statistical modeling of random payoffs, where no upper bound can be assumed {\em a priori}. Also, risks with unbounded support and potentially heavy-tailed distributions are commonly employed in the insurance business. From this point of view model spaces should satisfy the \textit{paradigm of maximal domain} in that they should at least be sufficiently large to include these standard unbounded models, and the model spaces in (iii) have been proposed to resolve this issue. Problematic though is the strong dependence of $L^p_{\Prob}$, $p\in [1, \infty)$, (or in general   Orlicz hearts) on the probability measure $\Prob$ in that they are not invariant under equivalent changes of measure anymore. Consequently, maximal domain and minimal model dependence seem to be conflicting paradigms. \\
In the special case of law-invariant risk measures the measured risk is fully determined by the distribution of the risk under a probability measure $\Prob$  on $(\Omega,\cal F)$. Thus law-invariance already entails the existence of a {\em meaningful} reference probability model $\Prob$, and the risk measurement is fully depending on  $\Prob$. Hence, the ambiguity structure is such that it is no conceptual problem to define these risk measures on, for instance, $L^1_{\Prob}$ (see \cite{7}).\footnote{ In fact, law-invariant risk measures are completely unambiguous.} The latter observation shows that the paradigms of minimal model dependence and maximal domain may not be as conflicting as they seem, as long as the underlying probability structure is determined by the considered risk measure. Clearly, a model space like $\Linfty$ is sufficiently robust to carry any kind of risk measure. But given a specific risk measure---say defined on $L^\infty_\Prob$---and 
the corresponding ambiguity attitude reflected by it, a model space which respects this ambiguity attitude, which also carries the risk measure, and which is probably larger than $L^\infty_\Prob$, is also a reasonable model space for that particular risk measure---like in the (unambiguous) case of a law-invariant risk measure and the model space $L^1_\Prob$.\\  
Our starting point is an {\em a priori} completely model free setting on the model space ${\cal L}^\infty$ and a generalised notion of risk measurement adopted from Farkas, Koch-Medina and Munari in, e.g., \cite{FKM2015} and Munari in \cite{Munari}: all it requires is a notion of acceptability of losses (encoded by an acceptance set $\acc\subseteq\Bd$), a portfolio of liquidly traded securities allowed for hedging (represented by a subspace $\Scal\subseteq\Bd$), and a set of observable prices for these securities (a linear functional $\price$ on $\Scal$). Using such a risk measurement regime $\mathcal R=(\acc, \Scal,\price)$, we can define the risk $\Rho(X)$ to be the minimal price one has to pay in order to secure the loss $X\in\Bd$ with a portfolio in $\Scal$. This approach has the indisputable advantage of a clear operational interpretation. 
Section~\ref{sec:preliminaries} introduces this kind of risk measurement in a unifyingly general framework. In Section~\ref{sec:weakstrong}, we observe that under a standard approximation property of finite risk measures---continuity from above---they automatically imply a reference probability measure $\Prob$ which allows us to view the risk measure as defined on  $\Linfty$ without any loss of information. The observation that this often assumed property necessarily implies that the framework is dominated sheds new and critical light on the current discussion on model free and robust finance. Next, we demonstrate that under some further conditions, e.g., sensitivity and strict monotonicity, we can even find a \textit{strong} reference probability measure $\Prob^*\approx\Prob$ such that additionally 
\begin{center}$\forall X\in\Linfty:\quad \Rho(X)\geq c\Erw_{\Prob^*}[X]$\end{center} holds for a suitable constant $c>0$. These strong reference probability measures serve as a class of benchmark models in that risk estimation with $\Rho$ is uniformly more conservative than using the linear risk estimation rules $X\mapsto c\Erw_{\Prob^*}[X]$. \\
In Section~\ref{sec:minkowski:def}, we discuss how these considerations lead to a Banach space $\Lrho$ typically much larger than $\Linfty$, which has a multitude of desirable properties, such as  
\begin{itemize}
\item invariance under all strong and weak reference probability models;
\item a geometry completely determined by the risk measure $\Rho$;
\item robustness in that it carries an extension of the initial risk criterion $\Rho$, denoted by $\tilrho$, which preserves the functional form of $\Rho$, the dual representation, and thus convexity and lower semicontinuity. Moreover, this extension $\tilrho$ is a capital requirement in terms of unchanged hedging securities and pricing functionals, but with a notion of acceptability obtained by consistently extending constraints defining $\acc$ to $\Lrho$.
\end{itemize}
In the latter sense, $\Lrho$ can be seen as a natural maximal domain of definition of the initial risk criterion on which the ambiguity attitude is preserved.\\
We also consider the following monotone extensions of $\Rho$ to unbounded loss profiles in $\Lrho$ given by 
\begin{equation*}
\xi(X)=\lim_{m\to \infty} \lim_{n\to \infty} \Rho((-n)\vee X\wedge m)=\sup_{m\in \Nat} \inf_{n\in \Nat} \Rho((-n)\vee X\wedge m)
\end{equation*}
and
\begin{equation*}
\eta(X)=\lim_{n\to \infty} \lim_{m\to \infty} \Rho((-n)\vee X\wedge m)=\inf_{n\in \N} \sup_{m\in \N}  \Rho((-n)\vee X\wedge m)
\end{equation*}
which have been studied in for instance \cite{6,4}.
One would maybe expect that always $\tilrho=\xi=\eta$, but it turns out that  $\tilrho=\xi$ always holds, whereas $\tilrho\neq \eta$ is possible, see Example~\ref{ex:etaneqtilrho}. We characterise the often desirable regular situation when monotone approximation of risks in the following sense  
\begin{equation}\label{eq:regular}
 \tilrho(X)=\eta(X)=\lim_{n\to  \infty} \Rho((-n)\vee X\wedge n)
\end{equation}
is possible, see Theorem~\ref{thm:tilrho:eta}, and show that \eqref{eq:regular} holds if $\tilrho$ shows sufficient continuity in the tail of the risk $X$. For instance, any risk measure to which some kind of monotone or dominated convergence rule can be applied will satisfy \eqref{eq:regular}.
In Section~\ref{sec:structure}, we decompose $\Lrho$ into subsets with a clear interpretation in terms of liquidity risk and show how $\Lrho$ allows to view properties of the risk measure $\tilrho$ through a topological lens. Finally, in Section~\ref{sec:subgrad}, we address the issue of subdifferentiability of $\tilrho$ on $\Lrho$ based on a brief treatment of the dual of $\Lrho$ in Section~\ref{sec:dual}. Subgradients play an important role in risk optimisation and appear as pricing rules in optimal risk sharing schemes, see e.g.\ \cite{JST,4}. We shall see that the topology on $\Lrho$ being determined by $\Rho$ is fine enough to guarantee a rich class of points where $\tilrho$ is subdifferentiable, thereby further illustrating how suited the model space $\Lrho$ is to $\Rho$. Beside their mere existence, we also aim for reasonable conditions guaranteeing that subgradients correspond to measures on $(\Omega, \Fcal)$---which means ruling out singular elements that may exist in the 
dual space of $\Lrho$. The motivation for this is the same as in case of $\Linfty$ which in general also admits singular elements in its dual space. It is questionable whether such singular dual elements are reasonable as, for instance, pricing rules, because their effect lies mostly in the tails of the distribution, and the lack of countable additivity contradicts the paradigm of diminishing marginal risk. Also, measures show a by far better analytic behavior which may prove to be crucial when solving optimisation problems. Our findings suggest that singular elements do not really matter in a wide range of instances. In particular, we will also see that the local equality \eqref{eq:regular} characterised in Theorem~\ref{thm:tilrho:eta} is closely related to regular subgradients of $\tilrho$ and $\eta$. In Section~\ref{sec:examples} we collect illustrating examples. Some cumbersome proofs are outsourced to the appendices \ref{appendix:A} and \ref{appendix:B}.

\section{Some preliminaries}\label{sec:preliminaries}

\textit{Notation and terminology:} Given a set $M\neq\emptyset$ and a function $f: M\to[-\infty,\infty]$, we define the \textsc{domain} of $f$ to be the set $\dom(f)=\{m\in M\mid f(m)<\infty\}$. $f$ is called \textsc{proper} if it does not attain the value $-\infty$ and $\dom(f)\neq\emptyset$.  \\
For a subset $A$ of a topological space $(\mathcal X,\tau)$, we denote by $\cl_{\tau}(A)$ and $\tn{int}_{\tau}(A)$ the closure and interior of $A$, respectively, with respect to the topology $\tau$. If $(\mathcal X,\tau)$ is a topological vector space and $\tau$ is generated by a norm $\Norm$ on $\mathcal X$, we will replace the subscript $\tau$ by $\Norm$.\\
A triple $(\mathcal X,\tau,\preceq)$ is called \textsc{ordered topological vector space} if $(X,\tau)$ is a topological vector space and $\preceq$ is a partial vector space order compatible with the topology in that the positive cone of $\mathcal X$, denoted by $\mathcal X_+:=\{X\in \mathcal X\mid 0\preceq X\}$, is $\tau$-closed. We define $\mathcal X_{++}:=\mathcal X_+\backslash\{0\}$, and $\Xcal_-$ and $\Xcal_{--}$ analogously. If $(\Xcal,\tau,\preceq)$ is a Riesz space and $X, Y\in\Xcal$, we set $X\vee Y:=\sup\{X,Y\}$, $X\wedge Y:=\inf\{X,Y\}$, $X^+:=X\vee 0$, and $X^-:=(-X)\vee 0$.\footnote{ For details concerning ordered vector spaces, we refer to Chapters 5 and 7 of \cite{Ali}. Since risk measures will appear in this treatment on different domains of definition---in all cases spaces of random variables endowed with a pointwise or almost sure order and with varying topologies---we define them as functionals on ordered topological vector spaces. However, the reader may think of $\mathcal X$ as a space of 
random variables and of $\preceq$ as a pointwise or almost sure order on the latter.} 

\smallskip
\noindent In this section we define risk measurement regimes and risk measures, discuss some properties a risk measure may enjoy, and introduce the building blocks for a duality theory.
\begin{definition}\label{R1}Let $(\Xcal,\tau,\preceq)$ be an ordered topological vector space. An \textsc{acceptance set} is a non-empty proper and convex subset $\acc$ of $\Xcal$ which is monotone, i.e. $\acc-\Xcal_+\subseteq \acc$. A \textsc{security space} is a finite-dimensional linear subspace $\Scal\subsetneq\Xcal$ containing a non-null positive element $U\in \mathcal S\cap \Xcal_{++}$. We refer to the elements $Z\in\mathcal S$ as security portfolios, or simply securities. A \textsc{pricing functional} on $\Scal$ is a positive linear functional $\price: \mathcal S\to\Reals$ such that $\price(Z)>0$ for all $Z\in \mathcal S\cap \Xcal_{++}$.\\
A triple $\mathcal R:=(\acc, \mathcal S, \price)$ is a \textsc{risk measurement regime} if $\acc$ is an acceptance set, $\mathcal S$ is a security space and $\price$ is a pricing functional on $\Scal$ such that 
\begin{equation}\label{regime}\forall X\in \Xcal:\quad\sup\{\price(Z)\mid Z\in\Scal, X+Z\in\acc\}<\infty.\end{equation}
The \textsc{risk measure} associated to a risk measurement regime $\mathcal R$ is the functional 
\begin{equation}\label{defRM}\Rho: \Xcal\rightarrow(-\infty,\infty],\quad X\mapsto\inf\left\{\price(Z)\mid Z\in\mathcal S, X-Z\in\mathcal A\right\}. \end{equation}
\end{definition}
\noindent Our definition of risk measures is inspired by \cite{FKM2015, Munari}. Note that:
\begin{itemize}
\item[(a)] The elements $X\in \Xcal$ model \textit{losses}, not gains. Thus $\Rho(X)$ is the minimal amount which has to be invested today in some security portfolio $Z\in\Scal$ with payoff $-Z$ today in order to reduce the loss $X$ tomorrow to an acceptable level.
\item[(b)] We prescribe convexity of the acceptance set $\mathcal A$ which means that diversification is not penalised: if $X$ and $Y$ are acceptable so is the diversified $\lambda X+ (1-\lambda)Y$ for any $\lambda\in (0,1)$. 
\item[(c)] The notion of a risk measurement regime depends on the interplay of $\acc$, $\Scal$ and $\price$ by means of \eqref{regime}; this condition guarantees that $\Rho$ is a proper function. \cite[Propositions 1 and 2]{FKM2015} yield criteria for $\mathcal R$ to be a risk measurement regime in our sense. 
\end{itemize}
If $\mathcal S=\Reals\cdot U$ for some $U\in \Xcal_{++}$ and $\price(mU)=m$, $m\in\Reals$, the setting of \cite{FKM2013,FKM2014} with a single liquid eligible asset can be recovered from Definition \ref{R1}. If $\Xcal$ is a Riesz space with weak unit $\Ind$, $\mathcal S=\Reals\cdot\Ind$ and $\price(m\Ind)=m$, $m\in\Reals$, the definition covers {\em convex monetary risk measures} as comprehensively discussed in \cite{5}.\footnote{ In the following, we will refer to this particular case with the term \textit{monetary risk measures}.} The following is easily verified: 
\begin{lemma} Let $\mathcal R=(\acc,\Scal,\price)$ be a risk measurement regime on an ordered topological vector space $\Xcal$. Then $\Rho$ is convex, \textsc{monotone},  i.e.\ $X\preceq Y$ implies $\Rho(X)\leq\Rho(Y)$, and \textsc{$\Scal$-additive}, i.e.\ $\Rho(X+Z)=\Rho(X)+\price(Z)$ holds for all $X\in\Xcal$ and all $Z\in\Scal$.
\end{lemma}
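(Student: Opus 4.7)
All three properties follow essentially by direct manipulation of the defining formula \eqref{defRM} together with the three structural ingredients we already have in hand: $\mathcal A$ is convex and monotone, $\mathcal S$ is a linear subspace, and $\price$ is linear on $\mathcal S$. I would dispatch the three claims in turn, using an $\eps$-argument wherever the infimum in \eqref{defRM} is not attained, and observing that \eqref{regime} together with $\acc \neq \emptyset$ already guarantees $\Rho > -\infty$, so that the inequalities below are never of the form $-\infty \leq -\infty$.

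For \emph{convexity}, I would fix $X,Y \in \Xcal$ with $\Rho(X),\Rho(Y) < \infty$ (otherwise the inequality is vacuous), $\lambda \in (0,1)$, and $\eps > 0$. Choose $Z_X, Z_Y \in \Scal$ with $X - Z_X, Y - Z_Y \in \acc$ and $\price(Z_X) \leq \Rho(X) + \eps$, $\price(Z_Y) \leq \Rho(Y) + \eps$. Then $\lambda Z_X + (1-\lambda) Z_Y \in \Scal$ since $\Scal$ is a subspace, and
\[
\lambda X + (1-\lambda) Y - \bigl(\lambda Z_X + (1-\lambda) Z_Y\bigr) = \lambda(X-Z_X) + (1-\lambda)(Y-Z_Y) \in \acc
\]
by convexity of $\acc$. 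Applying $\price$ and its linearity, then sending $\eps \downarrow 0$, yields the convexity estimate.

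For \emph{monotonicity}, let $X \preceq Y$, assume $\Rho(Y) < \infty$, and take any $Z \in \Scal$ with $Y - Z \in \acc$. Write $X - Z = (Y - Z) - (Y - X)$ and note that $Y - X \in \Xcal_+$, so $X - Z \in \acc - \Xcal_+ \subseteq \acc$ by the monotonicity of the acceptance set. Hence $\Rho(X) \leq \price(Z)$, and taking the infimum over admissible $Z$ yields $\Rho(X) \leq \Rho(Y)$.

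For \emph{$\Scal$-additivity}, fix $X \in \Xcal$ and $Z \in \Scal$. The substitution $W' := W - Z$ is a bijection of $\Scal$ onto itself, under which the constraint $X + Z - W \in \acc$ becomes $X - W' \in \acc$, while $\price(W) = \price(W') + \price(Z)$ by linearity of $\price$. Therefore
\[
\Rho(X+Z) = \inf\{\price(W') + \price(Z) \mid W' \in \Scal,\ X - W' \in \acc\} = \Rho(X) + \price(Z).
\]
There is no real obstacle here; the only minor care needed is in keeping track of the case $\Rho \equiv \infty$ on one side of an inequality and in verifying that $\Rho$ cannot take $-\infty$, which is exactly the content of \eqref{regime}.
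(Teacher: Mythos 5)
Your proof is correct and is precisely the routine verification the paper has in mind when it states that the lemma "is easily verified" without giving a proof: convexity of $\acc$ plus linearity of $\Scal$ and $\price$ give convexity, $\acc-\Xcal_+\subseteq\acc$ gives monotonicity, and the translation bijection $W\mapsto W-Z$ of $\Scal$ gives $\Scal$-additivity. Your remarks on \eqref{regime} ruling out the value $-\infty$ and on the degenerate cases are accurate.
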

\noindent In the same abstract setting we introduce further properties a risk measure can enjoy. 
\begin{definition}\label{properties1} 
Let $\mathcal R=(\acc,\Scal,\price)$ be a risk measurement regime on an ordered topological vector space $(\Xcal,\tau,\preceq)$ and let $\Rho$ be the associated risk measure.
\begin{itemize}
\item $\Rho$ is called \textsc{finite} if it only takes finite values, or equivalently $\acc+\Scal=\Xcal.~\footnote{ \cite[Propositions 1-3]{FKM2015} give further criteria to decide whether $\Rho$ is finite or not.}$
\item $\Rho$ is \textsc{normalised} if $\Rho(0)=0$, or equivalently $\sup_{Z\in\acc\cap\mathcal S}\price(Z)=0$.
\item $\Rho$ is  \textsc{coherent} if for any $X\in\Xcal$ and for any $t>0$ $\Rho(tX)=t\Rho(X)$ holds.
\item $\Rho$ is \textsc{sensitive} if it satisfies $\Rho(X)>\Rho(0)$ for all $X\in\Xcal_{++}$.
\item $\Rho$ is \textsc{lower semicontinuous} (l.s.c.) if every lower level set $\{X\in \Xcal\mid \Rho(X)\leq c\}$, $c\in\Reals$, is $\tau$-closed.
\item $\Rho$ is \textsc{continuous from above} if it is finite and for any $(X_n)_{n\in\Nat}\subseteq\Xcal$ with $X_n\downarrow X$ in order $\Rho(X)=\lim_n\Rho(X_n)$ holds.
\end{itemize}
\end{definition}

\begin{remark}\label{shapiro}
\begin{itemize}
\item[(i)]Normalisation implies that the negative cone $\Xcal_-$ (no losses) will be acceptable, which is economically sound. Every risk measure satisfying $\Rho(0)\in\Reals$ can be normalised by translating the acceptance set. Indeed, let $U\in\mathcal S\cap \Xcal_{++}$ and define $r:=\frac{\Rho(0)}{\price(U)}$ and $\check{\acc}:=\{X+rU\mid X\in\acc\}$. If $\mathcal R$ is a risk measurement regime, then so is $\check{\mathcal R}:=(\check\acc, \Scal,\price)$.
Moreover, \begin{center}$-\rho_{\check{\mathcal R}}(0)=\sup\{\price(Z)\mid Z\in\Scal, Z-rU\in\acc\}=\sup\{\price(W)+\price(rU)\mid W \in\Scal\cap\acc\}$.\end{center}
This implies that $-\rho_{\check{\mathcal R}}(0)=-\Rho(0)+\Rho(0)=0$ holds.
\item[(ii)] Recall that, in contrast to a large share of the literature on risk measures, random variables model losses, not gains, in our setting. Consequently, our notion of continuity from above is not the same as continuity from above in the sense of F\"ollmer and Schied (c.f. \cite[Lemma 4.21]{5}). The equivalent notion in the aforementioned monograph would be \textit{continuity from below} (c.f. \cite[Theorem 4.22]{5}), which together with lower semicontinuity of a risk measure implies the \textit{Lebesgue property}---see \cite{5}. 
\item[(iii)] Our notion of continuity from above means that approximating the risk of complex payoffs by the one of potentially easier but \textit{worse} financial instruments is meaningful as long as the payoffs range in a bounded regime.
\item[(iv)] Lower semicontinuity of $\Rho$ implies that $\{X\in \Xcal\mid\Rho(X)\leq 0\}=\cl_{\tau}(\acc+\ker(\price))$. In particular, it is implied by $\acc+\ker(\price)$ being closed (see \cite[Proposition 4]{FKM2015}) and invariant under translations of the acceptance set along $\Scal$. From an economic perspective this property is not too demanding: security spaces are always finite-dimensional in our setting, hence lower semicontinuity is, e.g.,  implied by the condition $\acc\cap\ker(\price)=\{0\}$ (cf. \cite[Proposition 5]{FKM2015}). The latter is sometimes referred to as \textit{absence of good deals of the first kind} (cf. \cite{Jaschke}).
\item[(v)] Note that in the case of $\Xcal$ being a Banach lattice with norm $\Norm$, every finite risk measure is norm-continuous and therefore also norm-l.s.c.\ This follows from \cite[Proposition 1]{Shapiro}: Suppose $\Xcal$ is a Banach lattice and $f:\Xcal\to(-\infty,\infty]$ is a proper convex and monotone function. Then $f$ is continuous on $\tn{int}\,\dom(f)$. We will make frequent use of this fact throughout the paper.
\end{itemize}
\end{remark}
\noindent For many questions a dual point of view on risk measures is crucial. In our case, its formulation requires the following concepts:
\begin{definition}\label{properties2}
Assume $\mathcal R=(\acc,\Scal, \price)$ is a risk measurement regime on an ordered topological vector space $(\Xcal,\tau,\preceq)$ with topological dual $\Xcal^*$. We define the \textsc{support function} of $\acc$ by
\begin{equation}\label{supportfunction}\sigma_{\acc}: \Xcal^*\to(-\infty,\infty],\quad \ell\mapsto\sup_{Y\in\acc}\ell(Y),\end{equation}
and $\mathcal B(\acc):=\dom(\sigma_{\acc})$. 
Moreover, the \textsc{extension set} will refer to the set of positive, continuous extensions of $\price$ to $\Xcal$, namely $\Ecal_{\price}:=\{\ell\in \Xcal^*_+\mid\,\ell|_{\Scal}=\price\}$.
\end{definition}
 
\section{Model spaces of bounded random variables, and weak and strong reference probability models}\label{sec:weakstrong}

\subsection{The model space  \texorpdfstring{$\mathbf{\Bd}$}{Bd} and weak reference probability measures}

Fix a measurable space $(\Omega,\mathcal{F})$ and let $\Bd:={\cal L}^\infty(\Omega, {\cal F})$ be the set of bounded measurable real-valued functions. We recall that $\Bd$ is a Banach lattice with norm $|X|_{\infty}:=\sup_{\omega\in\Omega}|X(\omega)|$ when equipped with the pointwise order $\leq$, so in particular an ordered topological vector space. On the level of Riesz spaces, $\Omega\ni\omega\mapsto 1$ is an order unit of $\Bd$.\footnote{ Recall that $e\in \Xcal_+$ is an order unit of a Riesz space $(\Xcal,\preceq)$ if $\{X\in \Xcal\mid\exists\lambda>0: |X|\preceq\lambda e\}=\Xcal$.} 
The dual space of $\Bd$ may be identified with $\bounded$, the space of all finitely additive set functions $\mu:{\mathcal F}\to \Reals$. As usual $\countable$ denotes the countably additive set functions in $\bounded$, and $\countable_+$ is the set of finite measures. In the following, the notation will not distinguish between $m\in\Reals$ and the function $\Omega\ni\omega\mapsto m$.\\
In this section we study risk measures on $\Bd$. First of all, note that in the $\langle \Bd,\bounded\rangle$-duality monotonicity of $\acc$ implies that $\mathcal B(\acc)\subseteq \bounded_+$ has to hold, and an application of the Hahn-Banach Separation Theorem shows \begin{equation}\label{acceptance}\cl_{|\cdot|_{\infty}}(\acc)=\{X\in\Bd\mid\forall\mu\in\mathcal B(\acc):~\int X\,d\mu\leq\sigma_{\acc}(\mu)\}.\end{equation}
We will mostly assume finiteness of $\Rho$, which is justified by the domain of definition ${\cal L}^\infty$---that is {\em bounded} losses which typically should be hedgeable at potentially large, but finite cost. $\Rho$ is for instance finite whenever the security space $\Scal$ contains some $U\in\Bd_{++}$ being uniformly bounded away from $0$, i.e. $U \geq \delta$ for some constant $\delta>0$. In \cite{FKM2013,FKM2014}, such securities are called \textsc{non-defaultable}. We will show that if the acceptance set is ``nice enough'', then any finite risk measure arising from it in an \textit{a priori} model-free framework like $\Bd$ indeed implies a probabilistic model, a so-called weak reference model; see Theorem~\ref{R2}. As a first step towards this result, we show now that continuity from above mainly depends on the geometry of the acceptance set $\acc$.
 To this end, let us recall the notion of the \textsc{dual conjugate} of $\Rho$ being defined as 
\begin{equation}\label{eq:dualrho}\Rho^*:\bounded\to (-\infty,\infty],\quad\mu\mapsto\sup_{X\in\Bd}\int X\, d\mu-\Rho(X).\end{equation}
\begin{proposition}\label{ancillary}
Assume $\mathcal R=(\acc,\Scal,\price)$ is a risk measurement regime such that $\Rho(0)\in\Reals$.
\begin{enumerate}
\item[(i)] If $\Rho$ is l.s.c., $\mathcal B(\acc)\cap\mathcal E_{\price}$ is non-empty and for all $\mu\in\bounded$ it holds that
\begin{equation}\label{support}\Rho^*(\mu)=\begin{cases}\sigma_{\acc}(\mu)\quad\textit{if }\mu\in\mathcal B(\acc)\cap\Ecal_{\price},\\
\infty\quad\textit{otherwise}.\end{cases}\end{equation}
For all $X\in \Bd$ we have \begin{equation}\label{eq:dualrep1} \Rho(X)=\sup_{\mu\in\dom(\Rho^*)}\int X\, d\mu-\Rho^*(\mu). \end{equation}
Moreover, if $\Rho$ is coherent, then \begin{equation}\label{support:coh}\Rho^*(\mu)=\begin{cases}0\quad\textit{if }\mu\in\mathcal B(\acc)\cap\Ecal_{\price},\\
\infty\quad\textit{otherwise}, \end{cases}\end{equation} and \begin{equation}\label{eq:dualrep:coh} \Rho(X)=\sup_{\mu\in\dom(\Rho^*)}\int X\, d\mu, \quad X\in \Bd.  \end{equation}
\item[(ii)] Assume $\mathcal R=(\acc,\Scal,\price)$ is a risk measurement regime such that $\Rho$ is finite, then for every $c\in\Reals$ the lower level set $E_c:=\{\mu\in\bounded\mid\Rho^*(\mu)\leq c\}$ of $\Rho^*$ is $\sigma(\bounded,\Bd)$-compact. 
\item[(iii)]Suppose the risk measure $\Rho$ associated to the risk measurement regime $\mathcal R=(\acc,\Scal,\price)$ is finite. Then $\Rho$ is continuous from above if and only if every lower level set $E_c$, $c\in\Reals$, of $\Rho^*$ is $\sigma(\countable,\Bd)$-compact. Hence, if $\mathcal B(\acc)\subseteq\countable$, then $\Rho$ is continuous from above. 
\item[(iv)] In the situation of (iii), if $\Scal$ is constrained to be one-dimensional, then $\Rho$ is continuous from above if and only if $\mathcal B(\acc)\subseteq\countable$. 
\end{enumerate}
\end{proposition}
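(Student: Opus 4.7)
Part (i) is standard convex duality on $\Bd$. Substituting the definition of $\Rho$ and re-parametrising via $X = Y + Z$ with $Y \in \acc$ and $Z \in \Scal$, the conjugate splits as
\[
\Rho^*(\mu) \;=\; \sup_{Y \in \acc}\int Y\,d\mu \;+\; \sup_{Z \in \Scal}\Big(\int Z\,d\mu - \price(Z)\Big).
\]
The first summand is $\sigma_\acc(\mu)$, forced to be $+\infty$ unless $\mu \in \bounded_+$ by monotonicity of $\acc$ via \eqref{acceptance}; the second is $0$ when $\mu|_\Scal = \price$ and $+\infty$ otherwise, being a linear functional on a subspace. This pins down $\dom(\Rho^*) = \mathcal B(\acc) \cap \Ecal_\price$ and gives \eqref{support}. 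Both \eqref{eq:dualrep1} and the non-emptiness of $\dom(\Rho^*)$ then follow from the Fenchel--Moreau theorem applied to the proper (since $\Rho(0)\in\Reals$), convex, l.s.c.\ function $\Rho$. The coherent sharpenings \eqref{support:coh} and \eqref{eq:dualrep:coh} are an immediate consequence of $\acc$ then being a cone, so $\sigma_\acc \in \{0,+\infty\}$.

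For (ii), Remark~\ref{shapiro}(v) gives that $\Rho$ is norm-continuous. Continuity at $0$ yields $r > 0$ and $M$ with $\Rho \leq M$ on $\{|X|_\infty \leq r\}$, whence $\int X\,d\mu \leq \Rho^*(\mu) + \Rho(X) \leq c + M$ for all such $X$ and $\mu \in E_c$; the sup over $|X|_\infty \leq r$ bounds the total variation of $\mu$ by $(c+M)/r$. Since $\Rho^*$ is the supremum of the $\sigma(\bounded,\Bd)$-continuous affine maps $\mu \mapsto \int X\,d\mu - \Rho(X)$, $E_c$ is weak*-closed, and Banach--Alaoglu closes the argument.

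The pivot for (iii) is the equivalence of continuity from above with $\dom(\Rho^*) \subseteq \countable$; the compactness formulation follows because a $\sigma(\bounded,\Bd)$-compact subset of $\countable$ is automatically $\sigma(\countable,\Bd)$-compact (subspace topologies agree). One direction is cheap: given continuity from above, $\mu \in \dom(\Rho^*)$, and $A_n \downarrow \emptyset$, the inequality $\Rho^*(\mu) \geq \lambda\mu(A_n) - \Rho(\lambda\Ind_{A_n})$ combined with $\Rho(\lambda\Ind_{A_n}) \to \Rho(0)$ as $n \to \infty$ and then $\lambda \to \infty$ forces $\mu(A_n) \to 0$. The converse is where the real work sits, and is the main obstacle. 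Fix $X_n \downarrow X$ and choose near-optimal $\mu_n^*$ satisfying $\Rho(X_n) \leq \int X_n\,d\mu_n^* - \Rho^*(\mu_n^*) + 1/n$. To confine $\{\mu_n^*\}$ to a single level set $E_M$, one must exploit finiteness of $\Rho$: from $\Rho(\lambda\Ind)<\infty$ one extracts $\mu(\Omega) \leq (\Rho^*(\mu) + \Rho(\lambda\Ind))/\lambda$, and feeding this into $\Rho^*(\mu_n^*) \leq |X_1|_\infty \mu_n^*(\Omega) - \Rho(X) + 1/n$ with any $\lambda > |X_1|_\infty$ closes a linear inequality in $\Rho^*(\mu_n^*)$ and yields the uniform bound. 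Once $\{\mu_n^*\} \subseteq E_M$ is achieved, the maps $\mu \mapsto \int (X_n - X)\,d\mu$ are $\sigma(\countable,\Bd)$-continuous on the compact $E_M \subseteq \countable$, decreasing in $n$, with pointwise limit $0$ (by countable additivity and dominated convergence); Dini's theorem upgrades to uniform convergence, and $\Rho(X_n) - \Rho(X) \leq \int (X_n - X)\,d\mu_n^* + 1/n \to 0$. The final clause of (iii) is immediate: $\mathcal B(\acc) \subseteq \countable$ forces $\dom(\Rho^*) \subseteq \countable$.

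Finally, (iv) is a short rescaling argument. Any $\mu \in \mathcal B(\acc)$ is automatically $\geq 0$ by (i). If $\mu(U) > 0$, the rescaling $(\price(U)/\mu(U))\mu$ lies in $\mathcal B(\acc) \cap \Ecal_\price \subseteq \countable$ by the just-proved (iii) (using that $\mathcal B(\acc)$ is a positive cone and $\sigma_\acc$ is positively homogeneous), so $\mu \in \countable$. If $\mu(U) = 0$, fix any $\mu_0 \in \mathcal B(\acc) \cap \Ecal_\price$ (non-empty by (i)); then $\mu + \mu_0$ satisfies $(\mu+\mu_0)(U) = \price(U)$, and subadditivity of $\sigma_\acc$ places it in $\mathcal B(\acc) \cap \Ecal_\price \subseteq \countable$, so $\mu = (\mu + \mu_0) - \mu_0 \in \countable$.
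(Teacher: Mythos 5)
Your proof is correct, and it departs from the paper's argument at several points in ways worth recording. In (i) you compute $\Rho^*$ directly via the reparametrisation $X=Y+Z$ and invoke Fenchel--Moreau, whereas the paper first verifies that $\price$ is bounded above on $\cl_{|\cdot|_{\infty}}(\acc+\ker(\price))\cap\Scal$ and then cites the representation theorems of Farkas, Koch-Medina and Munari; your route is self-contained and equally rigorous, since norm-l.s.c.\ of the convex $\Rho$ upgrades to $\sigma(\Bd,\bounded)$-l.s.c.\ so the biconjugate theorem applies. One small repair is needed there: coherence of $\Rho$ does \emph{not} imply that $\acc$ is a cone; the correct one-line justification of \eqref{support:coh} is that positive homogeneity makes $\{\int X\,d\mu-\Rho(X)\mid X\in\Bd\}$ stable under positive scaling, so $\Rho^*(\mu)\in\{0,\infty\}$ once one notes $\Rho^*\geq-\Rho(0)=0$, which then forces $\sigma_{\acc}=0$ on $\dom(\Rho^*)$ via \eqref{support} --- the conclusion stands, only your stated reason is off. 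The most substantial divergence is the converse direction of (iii): the paper extracts a $\sigma(\countable,\countable^*)$-convergent subsequence of maximisers via Eberlein--Smulian, builds a dominating measure and argues through uniform integrability of the densities, whereas your Dini argument (the maps $\mu\mapsto\int(X_n-X)\,d\mu$ are $\sigma(\countable,\Bd)$-continuous, nonnegative, decreasing and converge pointwise to zero on the compact $E_M\subseteq\countable_+$, hence uniformly) reaches the same conclusion with far less machinery; the confinement of the near-maximisers to a single level set is handled by both proofs with essentially the same self-referential inequality. Finally, in (iv) the paper proves $\int U\,d\mu>0$ for every $0\neq\mu\in\mathcal B(\acc)$ by deriving $\Rho(k)=\infty$ otherwise, while you bypass this case distinction by adding a reference element $\mu_0\in\mathcal B(\acc)\cap\Ecal_{\price}$ when $\mu(U)=0$ and using subadditivity of $\sigma_{\acc}$ together with the fact that $\countable$ is a linear subspace; both arguments are valid.
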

\begin{corollary}\label{cor:ancillary}Assume $\mathcal R=(\acc,\Scal,\price)$ is a risk measurement regime such that $\Rho$ is finite. Then $\Rho$ is continuous from above if and only if $\mathcal B(\acc)\cap \ker(\price)^\perp\subseteq\countable$. Here, 
$$\ker(\price)^\perp:=\left\{\mu\in\bounded\Big|\,\forall Z\in\ker(\price):\,\int Z\,d\mu=0\right\}$$
denotes the annihilator of $\ker(\price)$. 
\end{corollary}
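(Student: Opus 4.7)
The plan is to reduce to Proposition~\ref{ancillary}(iii) and then establish the equivalence between the level-set condition $\dom(\Rho^*)\subseteq\countable$ and the stated condition on $\mathcal B(\acc)\cap\ker(\price)^\perp$. First, since $\Rho$ is finite on the Banach lattice $\Bd$, Remark~\ref{shapiro}(v) yields norm-continuity and hence norm-lower-semicontinuity. Therefore Proposition~\ref{ancillary}(i) applies and identifies $\dom(\Rho^*)=\mathcal B(\acc)\cap\Ecal_\price$ as a non-empty set, while (ii) tells us that each level set $E_c$ is already $\sigma(\bounded,\Bd)$-compact. Since $\sigma(\countable,\Bd)$ is the restriction of $\sigma(\bounded,\Bd)$ to $\countable$, $\sigma(\countable,\Bd)$-compactness of every $E_c$ is equivalent to $E_c\subseteq\countable$ for every $c\in\Reals$, i.e.\ to $\dom(\Rho^*)=\mathcal B(\acc)\cap\Ecal_\price\subseteq\countable$. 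Consequently, by Proposition~\ref{ancillary}(iii), continuity from above of $\Rho$ is equivalent to $\mathcal B(\acc)\cap\Ecal_\price\subseteq\countable$, and the remaining task is to show that this is in turn equivalent to $\mathcal B(\acc)\cap\ker(\price)^\perp\subseteq\countable$.

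One direction is immediate: every $\mu\in\Ecal_\price$ satisfies $\mu|_\Scal=\price$, so in particular it annihilates $\ker(\price)$; hence $\Ecal_\price\subseteq\ker(\price)^\perp$ and $\mathcal B(\acc)\cap\ker(\price)^\perp\subseteq\countable$ implies $\mathcal B(\acc)\cap\Ecal_\price\subseteq\countable$.

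For the converse, I would exploit the fact that $\ker(\price)$ has codimension one in $\Scal$ (because $\price\neq 0$ on $\Scal$, as $\price(U)>0$). Given $\mu\in\mathcal B(\acc)\cap\ker(\price)^\perp$, monotonicity of $\acc$ forces $\mu\in\bounded_+$ (as recalled in the paper), and the codimension-one condition forces $\mu|_\Scal=\alpha\price$ for some scalar $\alpha$; evaluating at $U\in\Scal\cap\Bd_{++}$ and using $\mu(U)\geq 0$ and $\price(U)>0$ yields $\alpha\geq 0$. If $\alpha>0$, then positive homogeneity of $\sigma_\acc$ gives $\tfrac{1}{\alpha}\mu\in\mathcal B(\acc)\cap\Ecal_\price\subseteq\countable$ and hence $\mu\in\countable$. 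If $\alpha=0$, I pick any $\nu\in\mathcal B(\acc)\cap\Ecal_\price$ (non-empty by the argument above) and observe that $\mu+\nu\in\bounded_+$, $(\mu+\nu)|_\Scal=\price$, and $\sigma_\acc(\mu+\nu)\leq\sigma_\acc(\mu)+\sigma_\acc(\nu)<\infty$, so $\mu+\nu\in\mathcal B(\acc)\cap\Ecal_\price\subseteq\countable$; subtracting $\nu\in\countable$ yields $\mu\in\countable$.

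The only delicate step is the converse direction, specifically handling $\alpha=0$: without the additive perturbation by $\nu$, there would be no way to lift $\mu$ back into $\Ecal_\price$ to invoke the hypothesis. The argument hinges on $\Ecal_\price$ being an affine translate of a cone contained in $\ker(\price)^\perp\cap\bounded_+$, together with the fact that $\mathcal B(\acc)$ is a convex subcone of $\bounded_+$ closed under addition along the directions that $\sigma_\acc$ controls. Beyond this, the proof is essentially a bookkeeping exercise translating between the level-set formulation of Proposition~\ref{ancillary}(iii) and the geometric description of $\dom(\Rho^*)$ in (i).
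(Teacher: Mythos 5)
Your proof is correct, but it takes a genuinely different route from the paper's. The paper reduces the statement to the one-dimensional case of Proposition~\ref{ancillary}(iv): it passes to the enlarged acceptance set $\acc+\ker(\price)$ and the single security $U\in\Scal\cap\Bd_{++}$, invokes results from \cite{FKM2015} to see that $\mathcal R'=(\acc+\ker(\price),\Reals\cdot U,\price|_{\Reals\cdot U})$ is a risk measurement regime inducing the same risk measure $\Rho$, and then concludes via the identity $\mathcal B(\acc+\ker(\price))=\mathcal B(\acc)\cap\ker(\price)^\perp$. You instead stay entirely in the dual picture: using (i)--(iii) you characterise continuity from above as $\dom(\Rho^*)=\mathcal B(\acc)\cap\Ecal_\price\subseteq\countable$, and then prove directly that this is equivalent to $\mathcal B(\acc)\cap\ker(\price)^\perp\subseteq\countable$, exploiting that $\ker(\price)$ has codimension one in $\Scal$ so that any $\mu\in\mathcal B(\acc)\cap\ker(\price)^\perp$ satisfies $\mu|_{\Scal}=\alpha\price$ with $\alpha\geq 0$, rescaling when $\alpha>0$ and translating by some $\nu\in\mathcal B(\acc)\cap\Ecal_\price$ when $\alpha=0$ (both steps legitimate since $\sigma_{\acc}$ is positively homogeneous and subadditive and $\countable$ is a vector space). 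The paper's argument is shorter because it reuses part~(iv) and the external machinery of \cite{FKM2015}; yours is self-contained modulo Proposition~\ref{ancillary}(i)--(iii), avoids part~(iv) and the citations altogether, and makes explicit why the annihilator condition is the correct security-space-free reformulation of $\dom(\Rho^*)\subseteq\countable$. Both are valid; the $\alpha=0$ translation trick is exactly the point that a careless version of this equivalence would miss, and you handle it properly.
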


\noindent For the special case of a monetary risk measure, parts of Proposition~\ref{ancillary} are well-known, see e.g.\ \cite[Theorem 4.22 and Corollary 4.35]{5}. However, to our knowledge, so far there is no proof of Proposition~\ref{ancillary} and Corollary~\ref{cor:ancillary} in this general form in the literature. As the proofs of these results are quite technical and thus lengthy we provide them in Appendix~\ref{appendix:A}. Note that the representation \eqref{eq:dualrep1} is in terms of pricing rules consistent with $(\Scal,\price)$ in that $\mu \in \dom(\Rho^*)$ only if $\mu|_{\Scal}=\price$. If $S=\Reals$ and $\price=id_{\Reals}$, these pricing rules can be identified with probability measures. Finally, we remark that continuity from above is indeed mainly a property of the acceptance set as Proposition~\ref{ancillary}(iii) and (iv) and Corollary~\ref{cor:ancillary} show: If $\acc$ is regular in the sense that $\mathcal B(\acc)\subseteq \countable$, then every finite risk measure is continuous from above. In particular, taking a single hedging asset or multiple ones will have no effect on continuity from above provided $\acc$ is properly chosen. Non-regularity of the acceptance set in that $\mathcal B(\acc)\backslash \countable\neq\emptyset$, however, is equivalent to the fact that no finite risk measure with a single security is continuous from above; higher-dimensional security spaces may smooth out the irregularity of $\acc$, as illustrated in Example \ref{ex:counterex}. \\
The following theorem is the already advertised main result of this section. As facilitating notation, for non-empty sets of set functions $M, M'\subseteq\bounded$, we write $M\ll M'$ if and only if $\nu(A)=0$ for all $\nu\in M'$ implies $\mu(A)=0$ for all $\mu\in M$, $A\in\cal F$. We set $M\approx M'$ to mean that both $M\ll M'$ and $M'\ll M$. Instead of $\{\mu\}\ll\{\nu\}$ or $\{\mu\}\approx\{\nu\}$, we shall write $\mu\ll \nu$ and $\mu\approx \nu$. Finally, we define $\bounded_\nu:=\{\mu\in\bounded\mid\mu\ll\nu\}$, and $\countable_\nu$ analogously.
\begin{theorem}\label{R2} Let $\mathcal R=(\acc,\Scal,\price)$ be a risk measurement regime such that $\Rho$ is finite and continuous from above. 
\begin{enumerate}
\item[(i)] There exists a \textsc{weak reference probability measure} $\Prob$, that is a probability measure $\Prob$ on $(\Omega, {\cal F})$ such that $\Rho^*(c\Prob)<\infty$ for a suitable $c>0$ and 
\begin{equation}\label{100}\Prob \approx \dom(\Rho^*).\end{equation}
\item[(ii)] For $\Prob$ as in (i) we have that $\dom(\Rho^*)\subseteq (\countable_{\Prob})_+$. 
\item[(iii)] If $\Rho$ is normalised, then $E_0=\{\mu \in \countable \mid \Rho^* (\mu)=0\}\neq \emptyset$.
\end{enumerate}
\end{theorem}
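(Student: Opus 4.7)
My plan is to first establish (i) by a Halmos--Savage type extraction inside each sublevel set of $\Rho^*$, then deduce (ii) and (iii) as direct consequences. Proposition~\ref{ancillary}(iii) applied under our hypotheses shows that every sublevel set $E_c=\{\mu\in\bounded\mid\Rho^*(\mu)\leq c\}$ is $\sigma(\countable,\Bd)$-compact, so $\dom(\Rho^*)\subseteq(\countable)_+$ (positivity coming from $\dom(\Rho^*)\subseteq\Ecal_{\price}$ via Proposition~\ref{ancillary}(i)) and $\dom(\Rho^*)=\bigcup_{n\in\Nat}E_n$ decomposes as a countable union of convex weakly compact sets of positive countably additive measures.

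For each $n$ I invoke the Bartle--Dunford--Schwartz characterisation of relatively weakly compact subsets of $\countable$ to produce a positive measure $\lambda_n$ dominating $E_n$, and then the classical Halmos--Savage theorem yields $(\mu_k^{(n)})_{k\in\Nat}\subseteq E_n$ with $E_n\ll\sum_{k\in\Nat}2^{-k}\mu_k^{(n)}$. I then form
\begin{equation*}
\mu^*:=\sum_{n,k\in\Nat}2^{-(n+k)}\mu_k^{(n)}.
\end{equation*}
Evaluating the definition of $\Rho^*$ at $X=\mathbf 1$ shows $\mu(\Omega)\leq\Rho^*(\mu)+\Rho(\mathbf 1)$, hence $\mu_k^{(n)}(\Omega)\leq n+\Rho(\mathbf 1)$ and $\mu^*\in\countable_+$ has finite total mass. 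To bound $\Rho^*(\mu^*)$, I exploit the $\sigma(\bounded,\Bd)$-lower semicontinuity of $\Rho^*$ (it is a supremum of continuous affine functionals): applied to the normalised finite convex combinations $\nu_N:=s_N^{-1}\sum_{n+k\leq N}2^{-(n+k)}\mu_k^{(n)}$ with $s_N\uparrow 1$, finite convexity of $\Rho^*$ gives a uniform bound, and passage to the limit yields
\begin{equation*}
\Rho^*(\mu^*)\leq\sum_{n,k\in\Nat}2^{-(n+k)}\Rho^*(\mu_k^{(n)})\leq\sum_{n\in\Nat}n\cdot 2^{-n}<\infty.
\end{equation*}
Choosing any strictly positive $U\in\Scal\cap\Bd_{++}$, the identity $\mu_k^{(n)}(U)=\price(U)>0$ (from $\mu_k^{(n)}\in\Ecal_{\price}$) forces $\mu^*(\Omega)>0$, so $\Prob:=\mu^*/\mu^*(\Omega)$ is a probability measure satisfying $c\Prob=\mu^*\in\dom(\Rho^*)$ for $c=\mu^*(\Omega)$. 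The equivalence $\Prob\approx\dom(\Rho^*)$ finally follows from the chain $\Prob\approx\mu^*\approx(\mu_k^{(n)})_{n,k}\approx\dom(\Rho^*)$, each step being evident from strict positivity of the weights together with the Halmos--Savage property and the inclusion $(\mu_k^{(n)})_{n,k}\subseteq\dom(\Rho^*)$.

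Part (ii) is then immediate: $\Prob\approx\dom(\Rho^*)$ forces $\dom(\Rho^*)\ll\Prob$, which combined with $\dom(\Rho^*)\subseteq(\countable)_+$ yields $\dom(\Rho^*)\subseteq(\countable_\Prob)_+$. For (iii), normalisation $\Rho(0)=0$ forces $\Rho^*(\mu)\geq -\Rho(0)=0$ everywhere, so the set $E_0$ in the theorem coincides with the non-strict sublevel set at level $0$ restricted to $\countable$. Evaluating the dual representation~\eqref{eq:dualrep1} at $X=0$ gives $\inf_{\mu\in\dom(\Rho^*)}\Rho^*(\mu)=-\Rho(0)=0$; a minimising sequence lies eventually in the $\sigma(\countable,\Bd)$-compact $E_1$, and passing to a cluster point together with lower semicontinuity of $\Rho^*$ supplies a minimiser in $E_0\cap\countable$.

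The delicate step is the Halmos--Savage extraction: producing, inside each weakly compact convex $E_n\subseteq\countable$, a countable subfamily equivalent to the whole set. This rests on the Bartle--Dunford--Schwartz characterisation of relative weak compactness in $\countable$, which supplies the single reference measure for $E_n$. Once this classical input is granted, the remaining ingredients---bounding $\mu^*(\Omega)$ via the test function $\mathbf 1$, promoting finite to countable convexity of $\Rho^*$ through its lower semicontinuity, and propagating equivalence of null sets through the sums---are routine.
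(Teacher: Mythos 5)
Your argument follows essentially the same route as the paper's: extract a countable equivalent subfamily from each sublevel set $E_n$, form a weighted countable sum, bound its total mass by testing $\Rho^*$ against the constant $1$, bound its conjugate value by combining finite convexity with $\sigma(\bounded,\Bd)$-lower semicontinuity along the normalised partial sums, and normalise; your treatments of (ii) and of (iii) (cluster point of a minimising sequence in the compact $E_1$, versus the paper's finite-intersection-property argument on the nested sets $E_k$, $k\in(0,1]$) are equivalent.

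The one step you assert but do not justify is that each $E_n$ is relatively \emph{weakly} compact in $\countable$, i.e.\ compact for $\sigma(\countable,\countable^*)$. Proposition~\ref{ancillary}(iii) only delivers compactness in the strictly weaker topology $\sigma(\countable,\Bd)$, and this is genuinely weaker: a $\sigma(\countable,\Bd)$-compact set of positive measures need not a priori be uniformly countably additive, which is exactly the hypothesis the Bartle--Dunford--Schwartz characterisation needs before it can supply your control measure $\lambda_n$ (and without a dominating measure, Halmos--Savage does not apply either). The paper bridges precisely this point: $\sigma(\countable,\Bd)$-compactness together with convexity yields \emph{countable} convexity of $E_n$ as in \eqref{eq:countconv}, and then \cite[Theorem 4.7.25, (iv) $\Rightarrow$ (i)]{13} upgrades this to relative weak compactness, closedness in the weaker topology doing the rest. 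With that implication (or an equivalent citation) inserted, your BDS-plus-Halmos--Savage extraction is a legitimate substitute for the paper's direct appeal to the proof of \cite[Theorem 4.7.25, (i) $\Rightarrow$ (ii)]{13}, and the remainder of your proof is correct.
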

\begin{proof}
For (i), recall from Proposition~\ref{ancillary} that the assumption on $\Rho$ implies that any lower level set $E_c:=\{\mu\in\countable_+\mid\Rho^*(\mu)\leq c\}$, $c\in\Reals$, is $\sigma(\countable,\Bd)$-compact. Together with convexity, this implies \textit{countable convexity}, i.e. 
\begin{equation}\label{eq:countconv}(\lambda_k)_{k\in\Nat}\subseteq [0,1],~\sum_{k=1}^\infty\lambda_k=1,~(\mu_k)_{k\in\Nat}\subseteq E_c~\Longrightarrow~\sum_{k=1}^\infty\lambda_k\mu_k\in E_c.\end{equation}
By \cite[Theorem 4.7.25,(iv) $\Rightarrow$ (i)]{13}, $E_n$, $n\in\Nat$, also has compact closure in the weak topology $\sigma(\countable,\countable^*)$. As $E_n$ is already closed in the weaker topology $\sigma(\countable,\Bd)$, $E_n$ has to be weakly compact. The proof of \cite[Theorem 4.7.25, (i) $\Rightarrow$ (ii)]{13} shows the existence of a sequence $(\mu^n_l)_{l\in\Nat}\subseteq E_n$ such that $E_n\approx\{\mu_l^n\mid l\in\Nat\}$. We set $\nu_n:=\sum_{l=1}^\infty2^{-l}\mu_l^n$, which lies in $E_n$ by \eqref{eq:countconv}, and satisfies $\nu_n\approx E_n$. By \eqref{eq:dualrep1}, the sequence $(\nu_n)_{n\in\Nat}$ satisfies $\nu_n(\Omega)\leq \nu_n(\Omega)-\Rho^*(\nu_n)+n\leq\Rho(1)+n$. Define 
\[\nu:=\sum_{n\in\Nat}2^{-n}\nu_n,\quad c_N:=\sum_{n=1}^N2^{-n},\quad \zeta_N:=c_N^{-1}\sum_{n=1}^N2^{-n}\nu_n,\quad N\in\Nat.\]
$\nu\in\countable_+$ follows from the estimate $\nu(\Omega)\leq \sum_{n=1}^\infty2^{-n}(\Rho(1)+n)=\Rho(1)+2.$
Every non-trivial scalar multiple of $\nu$ satisfies \eqref{100}, and moreover, $\nu=\lim_N\zeta_N$ with respect to $\sigma(\countable, \mathcal L^{\infty})$. Lower semicontinuity and convexity of $\Rho^*$ and $\Rho^*(\nu_n)\leq n$ imply
\[\Rho^*(\nu)\leq \liminf_{N\to\infty}\Rho^*(\zeta_N)\leq\lim_{N\to\infty}c_N^{-1}\sum_{n=1}^N2^{-n}n=2.\]
Choosing $c:=\nu(\Omega)$, the probability measure $\Prob:=\frac 1 c\nu$ is a weak reference probability model. \\
(ii) is an immediate consequence of \eqref{100}. In order to prove (iii) note that normalisation implies $0=\Rho(0)=-\inf\{\Rho^*(\mu)\mid \mu\in\dom(\Rho^*)\}$. Hence, $\Rho^*\geq 0$ and the family of subsets $(E_k)_{k\in (0,1]}$ of the compact set $E_1$ has the finite intersection property. Therefore $E_0=\bigcap_{k\in (0,1]}E_k\neq \emptyset$.
\end{proof}

\begin{remark}Continuity from above is sufficient but not necessary for the existence of weak reference probability models. However, without continuity from above anything can happen. For example, let $(\Omega,\Fcal)$ be the open unit interval $(0,1)$ endowed with its Borel sets $\Borel((0,1))$, and let $\Prob$ be the Lebesgue measure on $(0,1)$. Consider  
\[\esssup(X):=\sup\{m\in\Reals\mid \Prob(X\leq m)=1\},\]
and set 
$$\acc_1:=\{X\in\Bd\mid \esssup(X)\leq 0\},\quad\acc_2:=\{X\in\Bd\mid\sup_{\om\in\Omega}X(\om)\leq 0\}.$$ 
The triples $\mathcal R_i=(\acc_i,\Reals,id_{\Reals})$, $i=1,2$, are risk measurement regimes. In the first case, $\dom(\rho_{\mathcal R_1}^*)=(\bounded_\Prob)_+$, and in the second $\dom(\rho_{\mathcal R_2}^*)=\bounded_+$. Thus neither $\rho_{\mathcal R_1}$ nor $\rho_{\mathcal R_2}$ is continuous from above. $\Prob$, however, is a weak reference probability model for $\rho_{\mathcal R_1}$, whereas in the case of $\rho_{\mathcal R_2}$  there is no weak reference probability model as $\Omega$ is uncountable.
\end{remark}

\noindent Whenever a probability measure $\Prob$ satisfies \eqref{100} and $X,Y\in \Bd$ are equal $\Prob$-almost surely ($\Prob$-a.s.), \eqref{eq:dualrep1} shows that $\Rho(X)=\Rho(Y)$. Hence, we may view $\Rho$ as a function on the space of equivalence classes $\Linfty:=L^\infty(\Omega, {\mathcal F}, \Prob)$ with the corresponding properties. We recall that the least upper bound $$\|X\|_{\infty}:=\inf\{m\in \Reals\mid\,\Prob(|X|\leq m)=1\}, \quad X\in \Linfty, $$ is a norm on $\Linfty$, making it into a Banach lattice together with the $\Prob$-almost sure order, and the equivalence class generated by $\Omega\ni\omega\mapsto 1$ is an order unit of $\Linfty$. Its dual may be identified with $\bounded_\Prob$. Let $\iota: \Bd\rightarrow\Linfty$ be the canonical embedding, then it is straightforward to prove the following result. 

\begin{corollary}\label{R5}
In the situation of Theorem~\ref{R2} define $\rho: \Linfty\rightarrow\Reals$ by $\rho(\tilde X)=\Rho(X)$, where $X\in \Bd$ satisfies $\iota(X)=\tilde X$. Then $\rho$ is well-defined and agrees with the risk measure $\rho_{(\iota(\acc),\iota(\Scal),\bar\price)}$ on $\Linfty$, where $\bar\price(\tilde Z)=\price(Z)$ whenever $\tilde Z=\iota(Z)$. It is norm-continuous and continuous from above. The dual function \begin{equation}\label{eq:dual}\rho^*(\mu):=\sup_{ \tilde X\in\Linfty}\int X\, d\mu -\rho(\tilde X), \quad \mu\in \bounded_\Prob, \end{equation}  where $X$ denotes an arbitrary representative of $\tilde X$, agrees with $\Rho^*|_{\bounded_\Prob}$. Also $$\rho(\tilde X)=\sup_{\mu\in \dom(\Rho^*)}\int  X\, d\mu - \Rho^*(\mu),\quad \tilde X\in \Linfty,$$where $X$ and $\tilde X$ are related as before.
\end{corollary}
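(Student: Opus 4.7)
The plan is to dispatch the five assertions in order, with everything resting on the observation that any two elements of $\Bd$ agreeing $\Prob$-a.s.\ have the same $\Rho$-value. Indeed, if $\iota(X)=\iota(Y)$ then $X=Y$ $\mu$-a.s.\ for every $\mu\in\dom(\Rho^*)$ by Theorem~\ref{R2}(ii), and the dual representation \eqref{eq:dualrep1} forces $\Rho(X)=\Rho(Y)$, making $\rho$ well-defined. The same principle, combined with $\dom(\Rho^*)\subseteq\Ecal_{\price}$ from Proposition~\ref{ancillary}(i) and the non-emptiness of $\mathcal B(\acc)\cap\Ecal_{\price}$, yields $\price(Z)=\int Z\,d\mu$ for any $Z\in\Scal$ and any such $\mu$, so $\bar\price$ is well-defined on $\iota(\Scal)$; positivity of $\bar\price$ on $\iota(\Scal)\cap\Linfty_{++}$ and the finite-dimensionality of $\iota(\Scal)$ transfer from $\mathcal R$ by the same route.

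For the identification $\rho=\bar\rho$, where $\bar\rho:=\rho_{(\iota(\acc),\iota(\Scal),\bar\price)}$, the inequality $\bar\rho(\tilde X)\leq\rho(\tilde X)$ is immediate because every admissible hedge $Z\in\Scal$ for $X$ produces an admissible hedge $\iota(Z)$ for $\tilde X$. Conversely, if $\tilde X-\iota(Z)\in\iota(\acc)$, i.e.\ $X-Z=A$ $\Prob$-a.s.\ for some $A\in\acc$, then well-definedness and $\Scal$-additivity give $\Rho(X)-\price(Z)=\Rho(X-Z)=\Rho(A)\leq 0$, where the final inequality follows from $\acc\subseteq\{\Rho\leq 0\}$ (take $Z=0$ in the infimum defining $\Rho(A)$). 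Hence $\rho\leq\bar\rho$. Norm-continuity of $\rho$ is then automatic from Remark~\ref{shapiro}(v), since $\rho$ is a finite convex monotone function on the Banach lattice $\Linfty$.

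Continuity from above is the least mechanical point and the one I expect will need genuine care: given $\tilde X_n\downarrow\tilde X$ in $\Linfty$, I would pick arbitrary representatives $X_n,X\in\Bd$, intersect the countably many $\Prob$-full-measure sets on which $X_{n+1}\leq X_n$ and $\inf_n X_n=X$ simultaneously hold, and redefine each $X_n$ to equal $X$ on the $\Prob$-null complement. The new sequence still represents $\tilde X_n$ but satisfies $X_n\downarrow X$ pointwise on $\Omega$, whence continuity from above of $\Rho$ on $\Bd$ gives $\rho(\tilde X_n)=\Rho(X_n)\to\Rho(X)=\rho(\tilde X)$. Finally, for $\mu\in\bounded_\Prob$ the integral $\int X\,d\mu$ depends only on the $\Prob$-equivalence class of $X$, so $\rho^*(\mu)=\sup_{\tilde X\in\Linfty}(\int X\,d\mu-\rho(\tilde X))$ equals $\sup_{X\in\Bd}(\int X\,d\mu-\Rho(X))=\Rho^*(\mu)$; the closing dual representation is then \eqref{eq:dualrep1} combined with the inclusion $\dom(\Rho^*)\subseteq\bounded_\Prob$ from Theorem~\ref{R2}(ii).
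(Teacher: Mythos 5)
Your proof is correct and follows exactly the route the paper intends: the paper itself omits the argument as ``straightforward,'' having already recorded the key point (well-definedness of $\rho$ via the dual representation \eqref{eq:dualrep1} and $\Prob\approx\dom(\Rho^*)$) in the paragraph preceding the corollary. All remaining steps in your write-up—transfer of the regime along $\iota$, the two inequalities identifying $\rho$ with $\rho_{(\iota(\acc),\iota(\Scal),\bar\price)}$, the null-set modification for continuity from above, and the identification of the conjugates—are sound.
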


\subsection{The model space \texorpdfstring{$\mathbf{\Linfty}$}{Linfty} and strong reference probability measures}\label{sec:strong:ref}

Supported by our results in Proposition~\ref{R2} and Corollary~\ref{R5} we will from now on consider the model space $\Linfty$, acceptance sets $\acc\subseteq\Linfty$, security spaces $\mathcal S\subseteq\Linfty$, pricing functionals $\price: \mathcal S\to\Reals$ and resulting finite risk measures $\Rho$ directly defined on $\Linfty$, where $\Prob$ is a weak reference probability model for $\Rho$.  Moreover, in the following we will stick to the usual convention of identifying an equivalence class of random variables in $\Linfty$ with an arbitrary representative of that class.\\
By similar reasoning as in Proposition \ref{ancillary} and Theorem~\ref{R2} we have the following result. 

\begin{lemma}\label{lem:strongref}Let ${\cal R}=(\acc,\Scal,\price)$ be a risk measurement regime on $\Linfty$ such that $\Rho$ is finite and normalised. Define \begin{equation}\label{eq:dualrep3}\Rho^*(\mu):=\sup_{X\in\Linfty}\int X\, d\mu - \Rho(X), \quad \mu\in \bounded_\Prob. \end{equation} Then $\Rho$ is continuous from above if and only if all lower level sets $E_c:=\{\mu\in\bounded_\Prob\mid\Rho^*(\mu)\leq c\}$, $c\in\Reals$, of $\Rho^*$ are $\sigma(\countable_\Prob, \Linfty)$-compact, which is in particular implied by
\[\mathcal B(\acc):=\left\{\mu\in\bounded_\Prob\Big| \sup_{Y\in\acc}\int Y\,d\mu<\infty\right\}\subseteq \countable_{\Prob}.\]
In that case 
\begin{equation}\label{eq:dualrep2}\Rho(X)=\sup_{\mu\in (\countable_{\Prob})_+}\int  X\, d\mu - \Rho^*(\mu),\quad  X\in \Linfty.\end{equation} In particular, $\dom(\Rho^*)=\mathcal B(\acc)\cap\{\nu\in(\countable_\Prob)_+\mid\,\forall Z\in\Scal: \int Z\,d\nu=\price(Z)\}$, and $E_0\neq \emptyset$. If $\Rho$ is positively homogeneous, then the analogues of \eqref{support:coh} and \eqref{eq:dualrep:coh} hold as well.
\end{lemma}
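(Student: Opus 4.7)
The strategy is to transfer the proofs of Proposition~\ref{ancillary} and Theorem~\ref{R2} to the setting of $\Linfty$, replacing the dualities $\langle\Bd,\bounded\rangle$ and $\langle\Bd,\countable\rangle$ by $\langle\Linfty,\bounded_\Prob\rangle$ and $\langle\Linfty,\countable_\Prob\rangle$. Since $\Linfty$ is a Banach lattice with order unit and $\Rho$ is finite, convex, and monotone, Remark~\ref{shapiro}(v) makes $\Rho$ norm-continuous (hence norm-lower-semicontinuous), and a standard Fenchel--Moreau argument gives $\Rho(X)=\sup_{\mu\in\dom(\Rho^*)}\int X\,d\mu-\Rho^*(\mu)$. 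Monotonicity of $\acc$ forces $\dom(\Rho^*)\subseteq(\bounded_\Prob)_+$ and $\Scal$-additivity pins down $\nu|_\Scal=\price$, while the Hahn--Banach characterisation analogous to \eqref{acceptance} identifies $\dom(\Rho^*)=\mathcal B(\acc)\cap\{\nu\in(\bounded_\Prob)_+\mid\nu|_\Scal=\price\}$.

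Next, I would replicate Proposition~\ref{ancillary}(ii): norm-continuity of $\Rho$ at the origin provides a local Lipschitz bound which, via the Fenchel inequality, forces each $E_c$ to be norm-bounded in $\bounded_\Prob$; Banach--Alaoglu together with weak$^*$-lower-semicontinuity of $\Rho^*$ yields $\sigma(\bounded_\Prob,\Linfty)$-compactness of $E_c$. Since $\sigma(\countable_\Prob,\Linfty)$ is simply the restriction of $\sigma(\bounded_\Prob,\Linfty)$ to $\countable_\Prob$, the required compactness is equivalent to $E_c\subseteq\countable_\Prob$, i.e.\ $\dom(\Rho^*)\subseteq\countable_\Prob$. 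The ``only if'' direction of the equivalence with continuity from above follows by applying the Fenchel--Young inequality to $X=\lambda\Ind_{A_n}$ for $A_n\downarrow\emptyset$ and $\lambda>0$: $\lambda\mu(A_n)\leq\Rho(\lambda\Ind_{A_n})+\Rho^*(\mu)$, and $\Rho(\lambda\Ind_{A_n})\to 0$ by continuity from above, whence $\limsup_n\mu(A_n)\leq\Rho^*(\mu)/\lambda$ for all $\lambda>0$. For the converse, given $X_n\downarrow X$, dominated convergence furnishes $\int X_n\,d\mu\downarrow\int X\,d\mu$ on every $\mu\in\countable_\Prob$, and a Dini-type argument on the compact set $E_c$ with $c=\Rho(X_1)-\Rho(X)$, applied to the monotone family $\mu\mapsto\int X_n\,d\mu-\Rho^*(\mu)$, upgrades this to uniform convergence and so to $\Rho(X_n)\to\Rho(X)$.

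The sufficient condition $\mathcal B(\acc)\subseteq\countable_\Prob$ is immediate because $\dom(\Rho^*)\subseteq\mathcal B(\acc)$; representation \eqref{eq:dualrep2} is the dual formula restricted to $(\countable_\Prob)_+$. For $E_0\neq\emptyset$, normalisation gives $0=\Rho(0)=-\inf_\mu\Rho^*(\mu)$, so the non-empty compact sets $(E_{1/k})_{k\in\Nat}$ enjoy the finite intersection property and $E_0=\bigcap_k E_{1/k}$ is non-empty. In the positively homogeneous case, $\Rho^*(t\mu)=t\Rho^*(\mu)$ for $t>0$ forces $\Rho^*$ to be $\{0,\infty\}$-valued, so the analogues of \eqref{support:coh} and \eqref{eq:dualrep:coh} follow as in Proposition~\ref{ancillary}(i).

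The main obstacle I anticipate is the ``if'' direction of the continuity-from-above equivalence: making the Dini-type upgrade rigorous requires checking that compactness in $\sigma(\countable_\Prob,\Linfty)$ interacts correctly with the pointwise monotone convergence $\int X_n\,d\mu\to\int X\,d\mu$, and in particular that the supremum over the compact set is attained in the limit. Everything else essentially amounts to repeating the arguments of Proposition~\ref{ancillary} and Theorem~\ref{R2} in the $\Prob$-a.s.\ setting.
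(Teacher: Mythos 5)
Your overall strategy is exactly what the paper intends: Lemma~\ref{lem:strongref} is stated with the remark that it follows ``by similar reasoning as in Proposition~\ref{ancillary} and Theorem~\ref{R2}'', and your transfer of those arguments to the $\langle\Linfty,\bounded_\Prob\rangle$-duality, including the Fenchel--Moreau representation, the Banach--Alaoglu compactness of the level sets, the Fenchel--Young argument on indicators of vanishing sets for the ``only if'' direction, and the finite-intersection-property argument for $E_0\neq\emptyset$, matches the paper step for step. The one place where you genuinely depart from the paper is the converse direction of the continuity-from-above equivalence: the paper's proof of Proposition~\ref{ancillary}(iii) extracts, via Eberlein--Smulian, a $\sigma(\countable,\countable^*)$-convergent subsequence of near-maximizers and then uses uniform integrability of the Radon--Nikodym densities to pass to the limit, whereas you propose a Dini-type argument on the compact level set. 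Your route is viable and arguably cleaner, but note that classical Dini does not apply verbatim: the functions $g_n(\mu)=\int X_n\,d\mu-\Rho^*(\mu)$ are only upper semicontinuous (since $\Rho^*$ is merely l.s.c.), and their pointwise limit need not be continuous. What you actually need is only convergence of the suprema, and that does follow from compactness: the sets $\{\mu\in E_c\mid g_n(\mu)\geq \sup_{E_c}g+\eps\}$ are closed, nested and (if the suprema did not converge) non-empty, so their intersection would contain a point contradicting pointwise convergence. Spelling this out would close the gap; as written, ``Dini-type argument'' glosses over the failure of continuity of $\Rho^*$.

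Two smaller corrections. First, your uniform level-set bound $c=\Rho(X_1)-\Rho(X)$ does not work: Fenchel--Young applied directly to $X_1$ only yields $\Rho(X_n)-\delta\leq\Rho(X_1)$ and gives no control on $\Rho^*(\mu)$. You need the paper's trick of writing $\int X_1\,d\mu=\tfrac12\int 2X_1\,d\mu\leq\tfrac12\Rho(2X_1)+\tfrac12\Rho^*(\mu)$, which produces the $n$-independent bound $\Rho^*(\mu)\leq 2+\Rho(2X_1)-2\Rho(X)$ for all near-maximizers. Second, in the positively homogeneous case the relevant identity is $\Rho^*(\mu)=t\Rho^*(\mu)$ for all $t>0$, obtained by scaling $X$ inside the supremum defining $\Rho^*$, not $\Rho^*(t\mu)=t\Rho^*(\mu)$ (the domain of $\Rho^*$ is contained in the non-conic set $\mathcal E_\price$, so scaling the measure is not available); the conclusion that $\Rho^*$ is $\{0,\infty\}$-valued is nevertheless correct.
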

\noindent We devote the remainder of this section to the question whether there is a \textsc{strong reference model}, i.e. whether there an element in 
$$\Pcal:=\{\mu\in E_0\mid \mu\approx \Prob\}.$$ This notion is well-known in the case of law-invariant monetary risk measures, and the result can be generalised in our setting: 
\begin{proposition}Let  ${\cal R}=(\acc,\Scal,\price)$ be a risk measurement regime on $\Linfty$ such that $\Rho$ is normalised, and assume
\begin{enumerate}
\item[(i)] the underlying probability space $(\Omega,\Fcal,\Prob)$ is atomless;
\item[(ii)] $\acc$ is the acceptance set $\{X\in\Linfty\mid \mathfrak r(X)\leq 0\}$ of a normalised, $\Prob$-law-invariant monetary risk measure $\mathfrak r$ which is continuous from above;
\item[(iii)] $\price=c\Erw_\Prob[\cdot]$ for a suitable constant $c>0$. 
\end{enumerate}
Then $\Prob\in\mathcal P$. 
\end{proposition}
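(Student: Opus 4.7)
The plan is to exhibit an explicit element of $\Pcal$ in the equivalence class of $\Prob$ by combining the dual characterisation from Lemma~\ref{lem:strongref} with a Jensen-type inequality for law-invariant convex risk measures on atomless probability spaces.

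The first step is to establish
\[ \mathfrak{r}(Y) \geq \Erw_\Prob[Y], \qquad Y\in\Linfty. \]
Atomlessness of $(\Omega,\Fcal,\Prob)$ together with $\Prob$-law-invariance, convexity, and the Fatou-type property furnished by continuity from above yields dilatation monotonicity $\mathfrak{r}(\Erw_\Prob[Y\mid\mathcal G])\leq\mathfrak{r}(Y)$ for every sub-$\sigma$-algebra $\mathcal G$ (this is the loss-convention analogue of \cite[Cor.~4.59]{5}). Specialising to $\mathcal G=\{\emptyset,\Omega\}$ and using cash-additivity with the normalisation $\mathfrak{r}(0)=0$ turns the left-hand side into the constant $\Erw_\Prob[Y]$, giving the claim.

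Next I would test the candidate $\mu^*:=c\,\Prob$; since $\mu^*\approx\Prob$ it suffices to show $\mu^*\in E_0$. For every $Y\in\acc$ we have $\mathfrak{r}(Y)\leq 0$ and hence, by the Jensen bound of the previous step, $\int Y\,d\mu^*=c\,\Erw_\Prob[Y]\leq 0$; thus $\sigma_\acc(\mu^*)\leq 0$, so $\mu^*\in\mathcal B(\acc)$. Hypothesis (iii) gives $\int Z\,d\mu^*=c\,\Erw_\Prob[Z]=\price(Z)$ for every $Z\in\Scal$, so Lemma~\ref{lem:strongref} applies and yields $\mu^*\in\dom(\Rho^*)$ with $\Rho^*(\mu^*)=\sigma_\acc(\mu^*)\leq 0$. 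Normalisation of $\Rho$ also gives $\Rho^*\geq -\Rho(0)=0$, whence $\Rho^*(\mu^*)=0$ and therefore $\mu^*\in\Pcal$, which is the assertion (identifying the probability measure $\Prob$ with the scaled representative $c\,\Prob\in E_0$ of its equivalence class).

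The main obstacle is the Jensen-type bound of the first step: one has to translate the dilatation monotonicity result (traditionally stated in the gain/utility convention, e.g.\ $\rho(\Erw[X\mid\mathcal G])\leq\rho(X)$) into the loss convention used here, and to match ``continuity from above'' with the Fatou / continuity-from-below assumption appearing in the standard reference. Once that translation is in place, the verification via Lemma~\ref{lem:strongref} is essentially bookkeeping.
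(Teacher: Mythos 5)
Your proposal is correct and follows essentially the same route as the paper: both derive dilatation monotonicity of $\mathfrak r$ from law-invariance, atomlessness and continuity from above, specialise to the trivial $\sigma$-algebra to get $\Erw_\Prob[Y]\leq\mathfrak r(Y)$, and conclude $\sigma_\acc(c\Prob)\leq 0$ so that $c\Prob\in E_0$ with $c\Prob\approx\Prob$. Your version merely spells out the bookkeeping (membership in $\mathcal B(\acc)\cap\Ecal_\price$, $\Rho^*\geq 0$ from normalisation, and the identification of $\Prob$ with its scaled representative) that the paper leaves implicit.
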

\begin{proof}
By \cite[Proposition 1.1]{continuity} and \cite[Corollary 4.65]{5}, $\mathfrak r$ is dilatation monotone: for every sub-$\sigma$-algebra $\mathcal G\subseteq\Fcal$ and every $X\in\Linfty$, the estimate $\mathfrak r(\Erw_{\Prob}[X|\mathcal G])\leq\mathfrak r(X)$ holds. Thus, for every $X\in\Linfty$, $\Erw_{\Prob}[X]=\mathfrak r\left(\Erw_{\Prob}[X|\{\emptyset,\Omega\}]\right)\leq \mathfrak r(X)$. We conclude $\sigma_\acc(\Prob)=\sup_{Y\in\Linfty:~\mathfrak r(Y)\leq 0}\Erw_\Prob[Y]=0$. 
\end{proof}
\noindent Clearly, sensitivity (c.f. Definition \ref{properties1}) is necessary to have $\Pcal\neq \emptyset$, but apart from the coherent case it is not sufficient. As an example consider two probability measures $\Quot\ll\Prob$ such that $\Quot\not \approx \Prob$. Define
\[\Prob_{\beta}:=\beta\Quot+(1-\beta)\Prob,\quad\rho:\Linfty\ni X\mapsto\sup_{\beta\in[0,1]}\Erw_{\Prob_{\beta}}[X]-(1-\beta)^2.\]
For $\mathcal R=(\{X\mid \rho(X)\leq 0\}, \Reals, id_{\Reals})$, we have that $\rho=\Rho$ is a sensitive risk measure with $E_0=\{\Quot\}$ and $\Pcal=\emptyset$. \\
In the following we will use the notation $\Fcal_+:=\{A\in\Fcal\mid\Prob(A)>0\}$. 
\begin{lemma}\label{lem:cohpcal}
Let $\mathcal R=(\acc,\Scal,\price)$ be a risk measurement regime such that $\Rho$ is finite, continuous from above, and coherent. Then $\Pcal\neq \emptyset$ if and only if $\Rho$ is sensitive.
\end{lemma}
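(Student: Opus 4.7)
The plan is to use the coherent dual representation $\Rho(X) = \sup_{\mu \in E_0} \int X\,d\mu$ for all $X \in \Linfty$, which is provided by Lemma \ref{lem:strongref} (since coherence forces $\Rho^* \in \{0, \infty\}$) together with the observation that coherence implies normalisation (the identity $\Rho(0) = t\Rho(0)$ for $t > 0$ forces $\Rho(0) = 0$). I would split the proof into the two implications.

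For the easy direction, I would assume $\Quot \in \Pcal$. Then for any $X \in \Linfty_{++}$, equivalence $\Quot \approx \Prob$ yields $\int X\,d\Quot > 0$, and the coherent dual representation gives $\Rho(X) \geq \int X\,d\Quot > 0 = \Rho(0)$, establishing sensitivity.

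For the converse, I would first record the following auxiliary claim: for every $A \in \Fcal_+$, there exists $\mu_A \in E_0$ with $\mu_A(A) > 0$. If this failed for some $A$, the dual representation would yield $\Rho(\Ind_A) = \sup_{\mu \in E_0}\int \Ind_A\,d\mu = 0 = \Rho(0)$, contradicting sensitivity since $\Ind_A \in \Linfty_{++}$. Next I would run a Halmos--Savage exhaustion argument: writing $\textnormal{supp}(\mu) := \{d\mu/d\Prob > 0\}$ for $\mu \in E_0 \subseteq (\countable_\Prob)_+$, set
\[
\alpha := \sup\{\Prob(\textnormal{supp}(\mu)) \mid \mu \in E_0\},
\]
pick $(\mu_n)_n \subseteq E_0$ with $\Prob(\textnormal{supp}(\mu_n)) \to \alpha$, and form $\mu^\star := \sum_n 2^{-n}\mu_n$. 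The total mass $\sum_n 2^{-n}\mu_n(\Omega)$ is finite because $E_0 \subseteq E_1$ is $\sigma(\countable_\Prob, \Linfty)$-compact and thus bounded in total variation. The key step is that $\mu^\star \in E_0$; this uses the countable convexity of $E_0$, which follows exactly as for the sets $E_n$ in the proof of Theorem \ref{R2} from weak compactness and convexity. Since $\textnormal{supp}(\mu^\star) = \bigcup_n \textnormal{supp}(\mu_n)$, we have $\Prob(\textnormal{supp}(\mu^\star)) = \alpha$. If $\Prob(\textnormal{supp}(\mu^\star)^c) > 0$, the auxiliary claim applied to $A = \textnormal{supp}(\mu^\star)^c$ yields $\nu \in E_0$ with $\nu(A) > 0$; then $\tfrac{1}{2}\mu^\star + \tfrac{1}{2}\nu \in E_0$ has support of strictly larger $\Prob$-measure than $\alpha$, contradicting maximality. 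Hence $\mu^\star \approx \Prob$ and $\mu^\star \in \Pcal$.

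The main conceptual obstacle is the reverse direction, and within it the delicate point is guaranteeing that the countable convex combination $\mu^\star$ remains in $E_0$; once countable convexity of $E_0$ is in hand (granted by the same weak-compactness argument employed in the proof of Theorem \ref{R2}), the Halmos--Savage exhaustion runs smoothly and sensitivity feeds in precisely through the auxiliary claim on indicator functions.
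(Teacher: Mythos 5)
Your proof is correct and follows essentially the same route as the paper's: the coherent dual representation over $E_0$, sensitivity applied to indicator functions to obtain $E_0\approx\Prob$, and countable convexity of the compact convex set $E_0$ to assemble an equivalent element. The only differences are presentational: you unpack the Halmos--Savage exhaustion argument explicitly where the paper simply cites the Halmos--Savage theorem, and you include the easy necessity direction, which the paper records in the discussion preceding the lemma and omits from the proof.
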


\begin{proof} We only prove sufficiency.
If $\Rho$ is coherent, then $\dom(\Rho^*)=E_0$; see \eqref{support:coh}. Moreover, continuity from above implies $E_0\subseteq (\countable_\Prob)_+$. 
As $\Rho$ is sensitive, we have that $0<\Rho(\Ind_A)=\sup_{\mu\in E_0}\mu(A)$ for all $A\in \Fcal_+$. Consequently, there is $\mu_A\in E_0$ such that $\mu_A(A)>0$. In other words, $E_0\approx \Prob$. The Halmos-Savage Theorem \cite[Theorem~1.61]{5} shows that there is a countable family $(\mu_n)_{n\in \N}\subseteq E_0$ such that $\{\mu_n\mid n\in \N\}\approx \Prob$. \eqref{eq:countconv} ensures that also $\nu:=\sum_{n\in \Nat}\frac{1}{2^n}\mu_n\in E_0$, i.e.\ $\Pcal\neq \emptyset$.
\end{proof}

\noindent The following theorems state sufficient conditions under which $\cal P\neq\emptyset$ without requiring coherence of $\Rho$. First, we characterise the strong condition $E_0=\Pcal$ with the ability of $\Rho$ to identify arbitrage.

\begin{theorem}\label{R10} Let $\mathcal R=(\acc,\Scal,\price)$ be a risk measurement regime, and suppose that $\Rho$ is  finite, normalised, continuous from above, and sensitive. 
The following are equivalent:
\begin{enumerate}
\item[(i)] $E_0=\Pcal$;
\item[(ii)] For all $A\in\mathcal F_+$ we have $\Rho(-k\Ind_A)<0$ for $k>0$ sufficiently large;
\item[(iii)] For all $X\in(\Linfty)_{++}$ we have $\Rho(-X)<0$.
\end{enumerate}
Moreover, $E_0=\Pcal$ if $\Rho$ is strictly monotone, i.e. $\Rho(X)<\Rho(Y)$ whenever $Y-X\in (\Linfty)_{++}$.
\end{theorem}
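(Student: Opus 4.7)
The plan is to establish the cyclic chain $(\text{iii}) \Rightarrow (\text{ii}) \Rightarrow (\text{i}) \Rightarrow (\text{iii})$ and then observe that strict monotonicity forces (iii) directly. The main tool throughout will be the dual representation from Lemma~\ref{lem:strongref},
\[
\Rho(Y) = \sup_{\mu \in \dom(\Rho^*)}\left(\int Y\, d\mu - \Rho^*(\mu)\right),\qquad Y \in \Linfty,
\]
in which $\Rho^* \geq 0$ (by normalisation) vanishes precisely on $E_0$, and $\dom(\Rho^*) \subseteq (\countable_\Prob)_+$ by continuity from above.

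The implication (iii) $\Rightarrow$ (ii) is immediate upon applying (iii) to $k\Ind_A \in (\Linfty)_{++}$. For (ii) $\Rightarrow$ (i), note that $\Pcal \subseteq E_0$ holds by definition, so pick $\mu \in E_0$; then $\mu \ll \Prob$ is already guaranteed, and only $\Prob \ll \mu$ remains to prove. Supposing $\mu(A) = 0$ for some $A \in \Fcal_+$, choose $k > 0$ large enough that (ii) yields $\Rho(-k\Ind_A) < 0$. The dual representation then gives
\[
0 > \Rho(-k\Ind_A) \geq \int (-k\Ind_A)\, d\mu - \Rho^*(\mu) = -k\mu(A) - 0 = 0,
\]
the desired contradiction.

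The hard part is (i) $\Rightarrow$ (iii), since the supremum defining $\Rho(-X)$ runs over non-positive quantities and a term-by-term bound only delivers $\Rho(-X) \leq 0$. My strategy is an attainment argument. Fix $X \in (\Linfty)_{++}$ and set $h(\mu) := -\int X\,d\mu - \Rho^*(\mu)$. Since $\int X\,d\mu \geq 0$ on $\dom(\Rho^*)$, any sequence $(\mu_n)$ with $h(\mu_n) \to \Rho(-X)$ eventually satisfies $\Rho^*(\mu_n) \leq 1 - \Rho(-X)$, so lies in some level set $E_c$ of $\Rho^*$. By Lemma~\ref{lem:strongref}, $E_c$ is $\sigma(\countable_\Prob, \Linfty)$-compact, and $h$ is upper semicontinuous for $\sigma(\countable_\Prob, \Linfty)$ because $\Rho^*$ is lower semicontinuous and $\mu \mapsto \int X\,d\mu$ continuous; a cluster point $\mu^* \in E_c$ thus satisfies $h(\mu^*) = \Rho(-X)$. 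A case split on whether $\mu^* \in E_0$ finishes the argument: if yes, then $\mu^* \approx \Prob$ by $E_0 = \Pcal$, forcing $\int X\, d\mu^* > 0$; if no, then $\Rho^*(\mu^*) > 0$. Either way $h(\mu^*) < 0$. Finally, strict monotonicity yields (iii) at once: for $X \in (\Linfty)_{++}$, the relation $0 - (-X) = X \in (\Linfty)_{++}$ gives $\Rho(-X) < \Rho(0) = 0$.
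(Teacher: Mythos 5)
Your proof is correct and follows essentially the same route as the paper: the same cyclic chain $(\text{iii})\Rightarrow(\text{ii})\Rightarrow(\text{i})\Rightarrow(\text{iii})$, with the hard implication handled by attaining the supremum in the dual representation over a $\sigma(\countable_\Prob,\Linfty)$-compact level set of $\Rho^*$ and then observing that the maximiser forces a strictly negative value. The only cosmetic difference is that the paper phrases $(\text{i})\Rightarrow(\text{iii})$ as a contradiction argument while you argue directly via a case split on whether the maximiser lies in $E_0$.
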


\begin{proof}
(iii) trivially implies (ii). Now assume (i) does not hold, i.e. there is some $\mu\in E_0\backslash\Pcal$, hence $\mu(A)=0$ for some $A\in\mathcal F_+$. From $0=\Rho(0)\geq\Rho(-k\Ind_A)\geq-k\mu(A)=0$ we infer $\Rho(-k\Ind_A)=0$ for all $k>0$, 
contradicting (ii). This shows that (ii) implies (i). In order to show that (iii) is implied by (i), assume we can find a $X\neq 0$ in the negative cone with $\Rho(X)=0$. As the level sets of $\Rho^*$ are $\sigma(\countable_\Prob,\Linfty)$-compact, we can find a $\mu\in\dom(\Rho^*)$ such that $0=\Rho(X)=\int X\,d\mu-\Rho^*(\mu)$. This implies $\Rho^*(\mu)=0=\int X\,d\mu$, a \textsc{contradiction} to $E_0=\mathcal P$. Finally, strict monotonicity clearly implies (iii) by normalisation.
\end{proof}

\noindent The next aim is a characterisation of $\Pcal\neq \emptyset$ in terms of the components of the risk measurement regime $\mathcal R=(\acc,\Scal,\price)$. 

\begin{theorem}\label{R1111}
Suppose that $\Rho$ is finite, continuous from above, normalised, and sensitive. Let $\cenv\subseteq\Linfty$ be the smallest weakly* closed convex cone containing $\acc+\ker(\price)$.
\begin{enumerate}
\item[(i)] $\cal P\neq \emptyset$ if and only if $~\cenv\cap (\Linfty)_{++}=\emptyset$.
\item[(ii)] $\Pcal\neq \emptyset$ if 
 $\acc+\ker(\price)$ satisfies the \textsc{rule of equal speed of convergence}: Let $(X_n)_{n\in\Nat}\subseteq\acc$ and $(Z_n)_{n\in\Nat}\subseteq\ker(\price)$ be sequences such that $\|X_n+Z_n\|_{\infty}\leq 1$ for all $n$. Suppose $(t_n)_{n\in\Nat}$ is such that $t_n\uparrow \infty$. If the rescaled vectors 
\[V_n:=t_n(X_n+Z_n)\]
satisfy $V_n^-\rightarrow 0$ in probability, then for all sets $B\in\Fcal_+$, it holds that 
\[\limsup_{n\rightarrow\infty}\Prob(B\cap\{V_n^+\geq \eps\})<\Prob(B).\]
\item[(iii)] $\cal P=\emptyset$ if there are sequences $(X_n)_{n\in\Nat}, (Z_n)_{n\in\Nat}$ and $(t_n)_{n\in\Nat}$ such that 
\[\sup_{n\in\Nat}\|t_n(X_n+Z_n)\|_{\infty}<\infty\]
violating the rule of equal speed of convergence 
\end{enumerate}
\end{theorem}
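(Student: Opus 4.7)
My plan is to prove (i) first via a Hahn--Banach separation argument, and then derive (ii) and (iii) from (i) by translating between the weak*-geometric condition $\cenv\cap(\Linfty)_{++}=\emptyset$ and the sequential hypothesis RESC, with a Koml\'{o}s-type passage to convex combinations as the bridge. For (i), the forward direction is short: if $\mu\in\Pcal$ then $\mu$ annihilates $\ker(\price)$ and $\sigma_\acc(\mu)=0$ by Lemma~\ref{lem:strongref}, so $\int X\,d\mu\leq 0$ for all $X\in\acc+\ker(\price)$ and, by positive scaling plus continuity of the integration functional on $L^1_\Prob$, for all $X\in\cenv$; any $X\in\cenv\cap(\Linfty)_{++}$ would then satisfy $\int X\,d\mu>0$ by $\mu\approx\Prob$, a contradiction. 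For the converse, I fix $\tilde\mu\in E_0$ (nonempty by Lemma~\ref{lem:strongref}) and, for each $A\in\Fcal_+$, apply Hahn--Banach to the $\sigma(\Linfty,L^1_\Prob)$-closed convex cone $\cenv$ and the point $\Ind_A\notin\cenv$, obtaining $g_A\in L^1_\Prob$ with $\Erw_\Prob[g_A\Ind_A]>0\geq\Erw_\Prob[g_A X]$ for all $X\in\cenv$. The inclusion $-(\Linfty)_+\subseteq\cenv$ (which follows from $\Rho(0)=0$, monotonicity, and the identification $\{X\in\Linfty:\Rho(X)\leq 0\}=\cl_{w^*}(\acc+\ker(\price))$ granted by continuity from above) forces $g_A\geq 0$, while $\ker(\price)\subseteq\cenv$ forces the associated measure $\mu_A$ to annihilate $\ker(\price)$. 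Hence $\mu_A|_\Scal=\alpha_A\price$ with $\alpha_A\geq 0$, and $\nu_A:=(\tilde\mu+\mu_A)/(1+\alpha_A)$ lies in $E_0$ with $\nu_A(A)>0$. This gives $E_0\approx\Prob$, and a Halmos--Savage selection combined with the countable convexity of the $\sigma(\countable_\Prob,\Linfty)$-compact set $E_0$ (as in the proof of Theorem~\ref{R2}) produces $\mu=\sum_n 2^{-n}\nu_n\in E_0$ equivalent to $\Prob$, i.e., $\mu\in\Pcal$.

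For (iii), let $V_n:=t_n(X_n+Z_n)\in\cenv$ be as in the statement, uniformly bounded in $\Linfty$, with $V_n^-\to 0$ in probability and $\Prob(B\cap\{V_n^+\geq\eps\})\to\Prob(B)$ along a subsequence for some $B\in\Fcal_+$ and $\eps>0$. Koml\'{o}s's theorem applied to the bounded $L^1_\Prob$-sequence supplies a further subsequence whose Ces\`{a}ro means $\tilde V_k:=k^{-1}\sum_{j=1}^k V_{n_j}$ converge $\Prob$-a.s.\ to some $V\in\Linfty$; dominated convergence upgrades this to $\sigma(\Linfty,L^1_\Prob)$-convergence, so the weak*-closedness of $\cenv$ gives $V\in\cenv$. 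The estimate $\tilde V_k^-\leq k^{-1}\sum_j V_{n_j}^-\to 0$ in $L^1_\Prob$ forces $V\geq 0$, and the chain
\[
\Erw_\Prob[V_n\Ind_B]\geq\eps\Prob(B\cap\{V_n^+\geq\eps\})-\Erw_\Prob[V_n^-\Ind_B]
\]
combined with taking Ces\`{a}ro limits yields $\Erw_\Prob[V\Ind_B]\geq\eps\Prob(B)>0$, so $V\neq 0$. Thus $V\in\cenv\cap(\Linfty)_{++}$, and (i) implies $\Pcal=\emptyset$.

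For (ii), I would run this in reverse by contrapositive. Assuming $\Pcal=\emptyset$, (i) delivers $X\in\cenv\cap(\Linfty)_{++}$, presented as a $\sigma(\Linfty,L^1_\Prob)$-limit $X=\lim_\beta s_\beta(X_\beta+Z_\beta)$ along a bounded net. The sensitivity of $\Rho$ rules out a bounded subnet of $(s_\beta)$: a limit $s<\infty$ would place $X/s$ in $\cl_{w^*}(\acc+\ker(\price))=\{X:\Rho(X)\leq 0\}$, forcing $\Rho(X/s)=0$ by normalisation and monotonicity and contradicting $X/s\in(\Linfty)_{++}$. Hence $s_\beta\to\infty$ and, after rescaling, $\|X_\beta+Z_\beta\|_\infty\leq 1$. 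Under the mild separability assumption that $L^1_\Prob$ is separable---so that bounded subsets of $\Linfty$ are $w^*$-metrizable---the net can be replaced by a sequence, and a Koml\'{o}s extraction as in (iii) then produces $(\tilde X_n)\subseteq\acc$, $(\tilde Z_n)\subseteq\ker(\price)$, $\tilde t_n\uparrow\infty$ with $\|\tilde X_n+\tilde Z_n\|_\infty\leq 1$ and $\tilde V_n:=\tilde t_n(\tilde X_n+\tilde Z_n)\to X$ $\Prob$-a.s. On $B:=\{X\geq 2\eps\}\in\Fcal_+$ the a.s.\ convergence forces $\Prob(B\cap\{\tilde V_n^+\geq\eps\})\to\Prob(B)$ and $\tilde V_n^-\to X^-=0$ a.s., so $(\tilde X_n,\tilde Z_n,\tilde t_n)$ violates RESC.

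The principal obstacle is the conversion between the abstract weak*-geometric description of $\cenv$ and the strictly sequential language of RESC, since $\sigma(\Linfty,L^1_\Prob)$-closures are generically attained only by nets; the Koml\'{o}s-type passage to convex combinations is the key mechanism for reconciling $\sigma(\Linfty,L^1_\Prob)$-limits with $\Prob$-almost sure limits and thereby manufacturing the sequences demanded by RESC. A secondary subtlety in (i) is that Hahn--Banach only produces measures that vanish on $\ker(\price)$ and that are a priori merely proportional to $\price$ on $\Scal$, so the nonemptiness of $E_0$ from Lemma~\ref{lem:strongref} must be invoked to correct the scaling before the Halmos--Savage step.
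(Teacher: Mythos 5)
Your parts (i) and (iii) are essentially correct, though routed differently from the paper. For (i), the paper identifies $\{c\mu\mid c\geq 0,\ \mu\in E_0\}$ as the one-sided polar of $\cenv$, applies the bipolar theorem to get $\cenv=\{X\mid\forall\mu\in E_0:\int X\,d\mu\leq 0\}$, and then reduces to the coherent case (Lemma~\ref{lem:cohpcal}) via the auxiliary regime $(\cenv,\Scal,\price)$; your direct Hahn--Banach separation of each $\Ind_A$ from $\cenv$, followed by the rescaling with a fixed $\tilde\mu\in E_0$ and Halmos--Savage, reaches the same conclusion and is sound (the inclusions $\ker(\price)\subseteq\cenv$ and $-(\Linfty)_+\subseteq\cenv$ do hold, since $\{\Rho\leq 0\}=\cl_{\sigma(\Linfty,\countable_\Prob)}(\acc+\ker(\price))$ by weak* lower semicontinuity of $\Rho$ under continuity from above). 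For (iii), the paper argues directly that $\limsup_n\int V_n\,d\mu>0$ for every candidate $\mu\in\Pcal$; your Koml\'os--Ces\`aro construction of an element of $\cenv\cap(\Linfty)_{++}$ followed by an appeal to (i) is a valid, if longer, alternative.

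The genuine gap is in (ii). You present $X\in\cenv\cap(\Linfty)_{++}$ as a weak* limit ``along a bounded net'' of elements $s_\beta(X_\beta+Z_\beta)$ of the cone generated by $\acc+\ker(\price)$, and then invoke metrizability of bounded sets to replace the net by a sequence. But nothing guarantees that an approximating net can be chosen norm-bounded: $\cenv$ is the weak* closure of an \emph{unbounded} convex cone, and a point of that closure need not lie in the weak* closure of any bounded slice of the cone (weak*-convergent nets, unlike sequences, are not automatically bounded). Metrizability of bounded sets---even granting separability of $L^1_\Prob$, which is not among the hypotheses---therefore does not yield the sequence you need, and Koml\'os cannot be applied before a bounded sequence is in hand. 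This is exactly the difficulty the paper's proof is built around: it introduces the set $\mathbf D$ of limits \emph{in probability} of sequences $t_n(X_n+Z_n)$ with no boundedness restriction, checks that each bounded slice $\mathbf D_r$ is $d_\Prob$-closed, and invokes Grothendieck's lemma (Lemma~\ref{R11}, a Krein--\v{S}mulian-type criterion) to conclude that $\mathbf D$ is weak*-closed, whence $\cenv\subseteq\mathbf D$ and $\Ind_B$ is a $d_\Prob$-limit of a genuine sequence from the cone. Without this step, or an equivalent device, your contrapositive argument for (ii) does not go through. The remaining steps---normalising $\|X_n+Z_n\|_\infty\leq 1$, using sensitivity to force $t_n\uparrow\infty$, and reading off the violation of the rule of equal speed of convergence---match the paper and are fine; note also that convergence in probability already suffices there, so the Koml\'os upgrade to almost sure convergence is superfluous.
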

\noindent A proof is given in Appendix \ref{appendix:B}.

\section{The Minkowski domain of a risk measure}\label{sec:Minkowski}

\subsection{Construction of the Minkowski domain and extension results}\label{sec:minkowski:def}

Throughout Section~\ref{sec:Minkowski} fix an acceptance set $\acc\subseteq\Linfty$, a security space $\Scal\subseteq \Linfty$, and let $\price:\Scal\to\Reals$ be a pricing functional such that $\Rho:\Linfty\to \Reals$ is a normalised finite risk measure which is continuous from above. Based on the results in Section~\ref{sec:weakstrong}, we  assume that $\Prob$ is a weak reference probability model, i.e. $\gamma\Prob\in\dom(\Rho^*)$ for a suitable constant $\gamma>0$. The aim of this section is to lift $\Rho$ to a domain of definition denoted by $\Lrho$ whose structure is completely characterised by $\Rho$ and thus consistent with the initial risk measurement regime, although it is in general strictly bigger than $\Linfty.~$The typical argument for restricting risk measures to bounded random variables---namely, that this space is robust and thus not conflicting with the ambiguity expressed by $\Rho$---is not valid in this case, since $\Lrho$ will completely reflect the ambiguity as perceived
  by $\Rho$.
To this end, we remark that \begin{equation}\label{eq:g1} \rho(|X|):=\sup_{\mu\in \dom(\Rho^*)}\int |X|\,d\mu-\Rho^*(\mu),\end{equation} where $\Rho^\ast$ is given in \eqref{eq:dualrep3}, is well-defined for all $X\in L^0_\Prob:=L^0(\Omega, {\cal F}, \Prob)$, possibly taking the value $\infty$. In this sense the objects appearing in the following definition are well-defined.

\begin{definition}For $c>0$ and $X\in L^0_\Prob$ let
\begin{center}$\|X\|_{c,\mathcal R}:=\inf\left\{\lambda>0\Big|\,\rho\left(\frac{|X|}\lambda\right)\leq c\right\}\quad (\inf\emptyset:=\infty),$\end{center}
and $\|X\|_{\mathcal R}:=\|X\|_{1,\mathcal R}$. The \textsc{Minkowski domain} for $\Rho$ is the set
\[\Lrho:=\{X\in  L^0_\Prob\mid\|X\|_{\mathcal R}<\infty\}.\]
\end{definition}

\noindent
Note that we may interpret $\Norm_{\mathcal R}$ as a Minkowski functional given the level set
\begin{center}$\rho(|\cdot|)^{-1}(-\infty,1]$,\end{center} and its domain $\Lrho$ is thus called the \textit{Minkowski domain}.

\begin{proposition}\label{R12} 
\begin{enumerate}
\item[(i)] For all $c>0$ there exist constants $A_c,B_c>0$ such that $$A_c \Norm_{c,\mathcal R}\leq \Norm_{\mathcal R}\leq B_c \Norm_{c,\mathcal R}.$$ In particular $\Lrho=\{X\in L^0_\Prob\mid \|X\|_{c,\mathcal R}<\infty\}$ for all $c>0$, and $(\Norm_{c,\mathcal R})_{c>0}$  is a family of equivalent norms on $\Lrho$.\\
Moreover, $\|X\|_\infty\geq B_{\Rho(1)}^{-1}\|X\|_{\mathcal R}$, $X\in L^\infty_\Prob$, and thus $\Linfty\subseteq \Lrho$.
\item[(ii)]$X\in\Lrho$ if and only if $\int |X|\,d\mu<\infty$ for all $\mu\in\dom(\Rho^*)$.
\item[(iii)] $(\Lrho,\Norm_{\mathcal R})$ is a Banach lattice.
\item[(iv)] $|X|\leq |Y|$ implies $\|X\|_{c,\mathcal R}\leq\|Y\|_{c,\mathcal R}$ and thus $\Lrho$ is solid. In particular, $\Lrho$ is invariant under rearrangements of profits and losses, i.e.\ if $\varphi\in\Linfty$ attaining values in $[-1,1]$, then $\varphi\cdot X\in\Lrho$ with $\|\varphi X\|_{c,\mathcal R}\leq\|X\|_{c,\mathcal R}.$
\end{enumerate}
\end{proposition}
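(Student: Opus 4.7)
The plan is to treat the four parts in the order (i), (iv), (iii), (ii), since (iv) feeds into (iii) and (ii) is independent but calls for the most delicate argument. Throughout I will exploit three elementary properties of the functional $\rho$ from \eqref{eq:g1}: $\rho$ is convex (as a supremum of affine maps), monotone on the positive cone of $L^0_\Prob$ (because $\dom(\Rho^*)\subseteq(\countable_\Prob)_+$), and satisfies $\rho(0)=0$ (since $\Rho^*\geq 0$ by normalisation of $\Rho$ and $E_0\neq\emptyset$ by Lemma \ref{lem:strongref}). Convexity applied to $tY=tY+(1-t)\cdot 0$ yields the key scaling inequality $\rho(tY)\leq t\rho(Y)$ for $t\in[0,1]$ and $Y\geq 0$.

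For (i), pick $\lambda>\|X\|_{c,\mathcal R}$ so that $\rho(|X|/\lambda)\leq c$. If $c'\geq c$ already $\|X\|_{c',\mathcal R}\leq\lambda$; if $c'<c$, the scaling inequality with $t=c'/c\in(0,1)$ yields $\rho((c'/c)|X|/\lambda)\leq c'$, whence $\|X\|_{c',\mathcal R}\leq(c/c')\lambda$. Letting $\lambda\downarrow\|X\|_{c,\mathcal R}$ and combining the two cases gives $\|X\|_{c',\mathcal R}\leq\max(1,c/c')\|X\|_{c,\mathcal R}$. Specialising $c$ or $c'$ to $1$ produces the constants $A_c=\min(1,c)$ and $B_c=\max(1,c)$. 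For $\Linfty\subseteq\Lrho$, monotonicity of $\rho$ together with $|X|\leq\|X\|_\infty\cdot\mathbf{1}$ gives $\rho(|X|/\|X\|_\infty)\leq\rho(\mathbf{1})=\Rho(1)$, hence $\|X\|_{\Rho(1),\mathcal R}\leq\|X\|_\infty$, and the equivalence delivers the stated bound $\|X\|_\infty\geq B_{\Rho(1)}^{-1}\|X\|_\mathcal R$. Part (iv) is immediate from monotonicity: $|X|\leq|Y|$ combined with $\rho(|Y|/\lambda)\leq c$ forces $\rho(|X|/\lambda)\leq c$, and $|\varphi X|\leq|X|$ covers the rearrangement statement. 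As a consequence $\Lrho$ is solid.

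For (iii), positive homogeneity is a change of variable $\lambda\mapsto\lambda/|c|$ in the defining infimum. For the triangle inequality, if $\rho(|X|/\lambda)\leq 1$ and $\rho(|Y|/\mu)\leq 1$, the pointwise estimate
\[ |X+Y|/(\lambda+\mu)\leq\tfrac{\lambda}{\lambda+\mu}\cdot(|X|/\lambda)+\tfrac{\mu}{\lambda+\mu}\cdot(|Y|/\mu) \]
together with monotonicity and convexity of $\rho$ gives $\rho(|X+Y|/(\lambda+\mu))\leq 1$, so $\|X+Y\|_\mathcal R\leq\|X\|_\mathcal R+\|Y\|_\mathcal R$. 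Definiteness uses $\Prob\approx\dom(\Rho^*)$ from Theorem \ref{R2}: $\|X\|_\mathcal R=0$ means $\rho(t|X|)\leq 1$ for every $t>0$, so $t\int|X|\,d\mu-\Rho^*(\mu)\leq 1$ for every $\mu\in\dom(\Rho^*)$, forcing $\int|X|\,d\mu=0$, and the equivalence of null sets then gives $X=0$ in $L^0_\Prob$. The lattice norm property comes from (iv). For completeness, $\gamma\Prob\in\dom(\Rho^*)$ yields the continuous embedding $\Lrho\hookrightarrow L^1_\Prob$ via $\int|X|\,d\Prob\leq\gamma^{-1}(1+\Rho^*(\gamma\Prob))\|X\|_\mathcal R$, so any Cauchy sequence in $\Lrho$ has a subsequence converging $\Prob$-a.s.\ to some $X\in L^0_\Prob$; a Fatou argument applied pointwise in $\mu$ inside the supremum defining $\rho$ upgrades $\|X_n-X_m\|_\mathcal R<\eps$ (for large $n,m$) into $\rho(|X_n-X|/\eps)\leq 1$, proving $X_n\to X$ in $\Lrho$ and $X\in\Lrho$.

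The principal obstacle is the nontrivial direction of (ii). The easy implication is direct: $\|X\|_\mathcal R\leq\lambda$ gives $\int|X|\,d\mu\leq\lambda(1+\Rho^*(\mu))<\infty$ for every $\mu\in\dom(\Rho^*)$. For the converse I argue contrapositively. If $\|X\|_\mathcal R=\infty$, then $\rho(|X|/n)>1$ for every $n$, so for each $n$ there exists $\mu_n\in\dom(\Rho^*)$ with $\int|X|\,d\mu_n>n(1+\Rho^*(\mu_n))$. Set $a_n=c_0/(n^2(1+\Rho^*(\mu_n)))$, with the normaliser $c_0>0$ chosen so that $\sum_n a_n=1$ (which exists because $a_n\leq c_0/n^2$ is summable). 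Using $\mu_n(\Omega)\leq\Rho(1)+\Rho^*(\mu_n)$ the measure $\nu:=\sum_n a_n\mu_n$ is finite, positive, countably additive and $\Prob$-absolutely continuous, and the dual formula combined with $\sum_n a_n=1$ gives
\[ \Rho^*(\nu)=\sup_{Y\in\Linfty}\sum_n a_n\Bigl(\int Y\,d\mu_n-\Rho(Y)\Bigr)\leq\sum_n a_n\Rho^*(\mu_n)\leq c_0\sum_n n^{-2}<\infty, \]
so $\nu\in\dom(\Rho^*)$; on the other hand $a_n\int|X|\,d\mu_n\geq c_0/n$, whence $\int|X|\,d\nu=\infty$, contradicting the hypothesis. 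The subtlety lies in the design of the weights: one must dominate the possibly unbounded sequence $\Rho^*(\mu_n)$ (via the factor $(1+\Rho^*(\mu_n))^{-1}$) while still amplifying the defect $n(1+\Rho^*(\mu_n))$, and the contrast between the convergence of $\sum n^{-2}$ and the divergence of $\sum n^{-1}$ is precisely what makes the choice $a_n\sim n^{-2}(1+\Rho^*(\mu_n))^{-1}$ work.
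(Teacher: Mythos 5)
Your proof is correct, and parts (i) and (iv) follow essentially the paper's route: the key scaling inequality $\rho(tY)\leq t\rho(Y)$ for $t\in[0,1]$, which you derive from convexity and $\rho(0)=0$, is exactly what the paper extracts from $\Rho^*\geq 0$, and (iv) is the same one-line monotonicity observation. The genuine divergences are in (ii) and (iii). For (iii) the paper simply cites \cite[Proposition 4.10]{3}; your direct argument --- continuous embedding into $L^1_\Prob$ via $\int|X|\,d\mu\leq(c+\Rho^*(\mu))\|X\|_{c,\mathcal R}$ with $\mu=\gamma\Prob$, extraction of a $\Prob$-a.s.\ convergent subsequence, and a Fatou argument applied for each $\mu\in\dom(\Rho^*)$ inside the supremum defining $\rho$ --- is sound and makes the completeness claim self-contained. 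For (ii) the paper argues by cases: when $\sup_{\mu\in E_1}\int|X|\,d\mu=\infty$ it uses equal weights $2^{-n}$ together with countable convexity \eqref{eq:countconv} of the level set $E_1$, and it then rules out the complementary case by building a second series and invoking $\sigma(\countable_\Prob,\Linfty)$-lower semicontinuity of $\Rho^*$ to bound $\Rho^*$ of its sum. Your single construction with weights $a_n\sim n^{-2}\bigl(1+\Rho^*(\mu_n)\bigr)^{-1}$ handles both regimes at once, and your estimate $\Rho^*\bigl(\sum_n a_n\mu_n\bigr)\leq\sum_n a_n\Rho^*(\mu_n)$ follows directly from the definition of $\Rho^*$ as a supremum (interchanging supremum and sum in the favourable direction, using $\sum_n a_n=1$), so you bypass both the case distinction and the appeal to lower semicontinuity and compactness of level sets. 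What the paper's route buys is reuse of machinery already established for Theorem~\ref{R2}; what yours buys is a shorter, unified and more elementary argument, with the divergence of the harmonic series doing the work of forcing $\int|X|\,d\nu=\infty$.
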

\begin{proof}
First we set $\Lambda_c(X):=\{\lambda>0\mid\rho(\lambda^{-1}|X|)\leq c\}$, i.e. $\|X\|_{c,\mathcal R}=\inf\Lambda_c(X).$

\smallskip\noindent
(i): Suppose that $c\in (0,1)$ and let $X\in L^0_\Prob$. Note that $\|X\|_{\mathcal R}=\infty$ if and only if $\Lambda_1(X)=\emptyset$, which implies $\Lambda_c(X)=\emptyset$ or equivalently $\|X\|_{c,\mathcal R}=\infty$. Now assume $\|X\|_{\mathcal R}<\infty$, and pick $\lambda\in\Lambda_1(X)$. As $\Rho^*\geq 0$, we have $$\rho(c|X|/\lambda)=\sup_{\mu\in \dom(\Rho^*)}\int \frac{c}{\lambda}|X|\,d\mu-\Rho^*(\mu)\leq c\rho(|X|/\lambda)\leq c,$$ which implies $\|X\|_{\mathcal R}\geq c \|X\|_{c,\mathcal R}$. Trivially, $\Lambda_c(X)\subseteq\Lambda_1(X)$ and therefore $\|X\|_{c,\mathcal R}\geq \|X\|_{\mathcal R}$. Hence, we may choose $A_c=c$ and $B_c=1$. The case $c>1$ is treated similarly.\\
Monotonicity implies that $\rho(|X|/\|X\|_\infty)\leq \Rho(1)$ for all $X\in \Linfty$, which yields $\|X\|_\infty\geq \|X\|_{\Rho(1),\mathcal R}\geq B_{\Rho(1)}^{-1}\|X\|_{\mathcal R}$ and $\Linfty\subseteq \Lrho$.\\
$\Norm_{c,\mathcal R}$ is indeed a norm: The verification of the triangle inequality and homogeneity are straightforward. For the definiteness of $\Norm_{c,\mathcal R}$, let $\mu\in\dom(\Rho^*)$ be arbitrary. As for all $\lambda\in\Lambda_c(X)$ we obtain the estimate $\|\lambda^{-1}X\|_{L^1_\mu}-\Rho^*(\mu)\leq\rho(\lambda^{-1}|X|)\leq c$, we can infer
\begin{equation}\label{16}\frac 1{c+\Rho^*(\mu)}\|X\|_{L^1_\mu}\leq\|X\|_{c,\mathcal R}.\end{equation}
Choosing $\mu\in\dom(\Rho^*)$ such that  $\mu=\gamma\Prob$ yields definiteness of $\Norm_{c,\mathcal R}$.

\smallskip
\noindent(ii): It follows from \eqref{16} that for all $X\in\Lrho$ and all $\mu\in\dom(\Rho^*)$ the integrability condition $\int |X|\,d\mu<\infty$ holds. For the converse implication, let $X\in L^0_\Prob\backslash \Lrho$ be arbitrary, the latter being equivalent to $\rho(t|X|)=\infty$ for all $t>0$. As before, we set $E_c:=\{\mu\in\countable_\Prob\mid \Rho^*(\mu)\leq c\}$, $c\in\Reals$, and will show that there is a $\nu\in E_1$ such that $\int |X|\,d\nu=\infty$. First assume that $\sup_{\mu\in E_1}\int|X|\,d\mu=\infty$. Choose a sequence $(\mu_n)_{n\in\Nat}\subseteq E_1$ such that $\int |X|\,d\mu_n\geq 2^{2n}$, $n\in\Nat$, and set $\nu=\sum_{n=1}^\infty2^{-n}\mu_n$, which is itself an element of $E_1$ by \eqref{eq:countconv}. Moreover, 
$$\int |X|\,d\nu=\sum_{n=1}^\infty2^{-n}\int |X|\,d\mu_n\geq\sum_{n=1}^\infty2^n=\infty.$$
Hence, $X$ is not $\nu$-integrable. In a second step, we show that the case $\sup_{\mu\in E_1}\int |X|\,d\mu<\infty$ cannot occur. Assume for contradiction that $\sup_{\mu\in E_1}\int |X|\,d\mu<\infty$. If there were a constant $\kappa>0$ such that for all $\mu\in\dom(\Rho^*)\backslash E_1$ the estimate
$$\int |X|\,d\mu\leq \kappa\Rho^*(\mu)$$
holds, one could estimate
$$\rho(\kappa^{-1}|X|)\leq\frac 1{\kappa}\sup_{\mu\in E_1}\int |X|\,d\mu<\infty,$$
and thus $X\in\Lrho$. Thus, there must be a sequence $(\mu_n)_{n\in\Nat}\subseteq\dom(\Rho^*)$ such that $\Rho^*(\mu_n)>1$ and $\int |X|\,d\mu_n\geq 2^{2n}\Rho^*(\mu_n)$, $n\in\Nat$. 
We set $C:=\sum_{n=1}^\infty\frac 1{2^n\Rho^*(\mu_n)} \in (0,1)$, and 
$$\zeta:=\sum_{n=1}^\infty\frac 1{2^n\Rho^*(\mu_n)C}\mu_n.$$
As $\mu_n(\Omega)\leq\Rho(1)+\Rho^*(\mu_n)$, $\zeta(\Omega)$ is finite. Moreover, by $\sigma(\countable_\Prob,\Linfty)$-lower semicontinuity of $\Rho^*$, $\Rho^*(\zeta)\leq \frac 1C\sum_{n=1}^\infty2^{-n}=\frac 1C.$
Note that 
$$\int |X|\,d\zeta\geq\sum_{n=1}^\infty\frac{2^{2n}\Rho^*(\mu_n)}{2^n\Rho^*(\mu_n)C}=\frac 1C\sum_{n=1}^\infty 2^n=\infty,$$
Hence, for $\nu:=C\zeta+(1-C)\mu_0\in E_1$, where $\mu_0\in E_0$ is chosen arbitrarily, we also obtain $\int |X|\,d\nu=\infty$. This is the desired \textsc{contradiction}.

\smallskip
\noindent
(iii) follows from \cite[Proposition 4.10]{3}, and (iv) is an immediate consequence of the monotonicity of $\rho(|\cdot|)$.
\end{proof}

\noindent The proof of Proposition~\ref{thm:eta} will clarify the reason for introducing the norms $\Norm_{c,\mathcal R}$ instead of just $\Norm_{\mathcal R}$.

\begin{remark}\label{neededonce}\begin{enumerate}
\item[(i)] In the coherent case, we can infer from Proposition \ref{R12}(ii) that $\Norm_{c,\mathcal R}=c^{-1}\rho(|X|)=c^{-1}\sup_{\mu\in\dom(\Rho^*)}\|X\|_{L^1_\mu}$.
\item[(ii)] The Minkowski norm $\Norm_{c,\mathcal R}$ can be interpreted as a generalisation of the so-called \textit{Aumann-Serrano economic index of riskiness} (see \cite{14} and \cite[Example 3]{17}).
\item[(iii)] The Minkowski domain and similar spaces have appeared in \cite{KupSvi, 3, 4}. The definition of $\Lrho$ depends on the null sets of the probability measure $\Prob$ only, and thus is invariant under any choice of the underlying probability measure $\Prob\approx\dom(\Rho^*)$, in particular under changes of weak and strong reference probability models. 
\end{enumerate}
\end{remark}

\noindent The main purpose for introducing the Minkowski domain $\Lrho$ is to extend $\Rho$ to a larger domain than $\Linfty$ in a robust way in terms of the fundamentals, i.e.\ the risk measurement regime $\mathcal R=(\acc, \Scal,\price)$. 
There is a canonical candidate for this given by $\tilde{\mathcal R}:=(\tilde\acc, \mathcal S,\price)$ where  
\begin{equation}\label{tildeA}
     \tilde\acc:=\{X\in\Lrho\mid \forall \mu\in\dom(\Rho^*):~\int X\,d\mu\leq\Rho^*(\mu)\},
\end{equation}
so $\tilde\acc$ is given by lifting---and thus also preserving---the acceptability criteria $\int X\,d\mu\leq\Rho^*(\mu)$, $\mu\in\dom(\Rho^*)$, from $\Linfty$ to $\Lrho$. Indeed the following Theorem~\ref{lem:tildeA} shows that $\tilde{\mathcal R}$ is a risk measurement regime, and that the corresponding risk measure $\tilrho$ preserves the dual representation of $\Rho$. Dual approaches to extending convex functions are commonly used in the literature; see, e.g., \cite{7,3}.  Note that $\tilrho$ also preserves any functional form $\Rho$ may have, as for instance in the case of the entropic risk measure in Example~\ref{ex:entropic} below. 
 
\begin{theorem}\label{lem:tildeA}$\tilde{\mathcal R}:=(\tilde\acc, \mathcal S,\price)$ is a risk measurement regime on the Banach lattice $(\Lrho,\Norm_{\mathcal R})$. $\tilrho$ can be expressed as 
\begin{equation}\label{eq:g2}\tilrho(X)=\sup_{\mu\in \dom(\Rho^*)}\int X\, d\mu-\Rho^*(\mu), \quad X\in \Lrho,\end{equation}
where $\Rho^\ast$ is defined as in \eqref{eq:dualrep3}. Moreover, for every $\mu\in\dom(\Rho^*)$, the linear functional $\int\cdot\,d\mu$ is bounded on $(\Lrho,\Norm_{\mathcal R})$. \tn{A fortiori}, $\tilrho|_{\Linfty}=\Rho$. Moreover, $\tilrho$ is l.s.c.\ on $(\Lrho,\Norm_{\mathcal R})$, and satisfies 
\begin{equation}\label{approx}\tilrho(X)=\sup_{m\in\Nat}\tilrho(X\wedge m).\end{equation} 
\end{theorem}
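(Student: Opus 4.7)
The plan is to first verify that $\tilde{\mathcal R}$ is a risk measurement regime, then establish the dual representation \eqref{eq:g2}, and read off the remaining claims from it. The set $\tilde\acc$ is convex, monotone, and non-empty since it contains $\acc$ by the very definition of $\Rho^*$. Properness follows because any constant $k > \Rho^*(\gamma\Prob)/\gamma$ violates the acceptability criterion for the weak reference measure $\gamma\Prob \in \dom(\Rho^*)$. Central to everything is the bound
\begin{equation*}
\left|\int X\, d\mu\right| \leq (1 + \Rho^*(\mu))\,\|X\|_{\mathcal R}, \quad \mu \in \dom(\Rho^*),\ X \in \Lrho,
\end{equation*}
which is exactly inequality~\eqref{16} from the proof of Proposition~\ref{R12}. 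It immediately yields the asserted boundedness of $\int \cdot\, d\mu$ on $(\Lrho, \Norm_{\mathcal R})$; and, applied to $\mu = \gamma\Prob$ together with $\int Z\, d\mu = \price(Z)$ (valid because $\dom(\Rho^*) \subseteq \Ecal_\price$ by Proposition~\ref{ancillary}), it also establishes condition~\eqref{regime} for $\tilde{\mathcal R}$.

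The heart of the proof is formula \eqref{eq:g2}. Denote its right-hand side by $\rho^\sharp(X)$. First note $\rho^\sharp(X) > -\infty$ for every $X \in \Lrho$: picking any $\mu_0 \in E_0$ (non-empty by Lemma~\ref{lem:strongref}), we have $\rho^\sharp(X) \geq \int X\, d\mu_0 > -\infty$ thanks to Proposition~\ref{R12}(ii). For $\rho^\sharp \leq \tilrho$, whenever $Z \in \Scal$ satisfies $X - Z \in \tilde\acc$ and $\mu \in \dom(\Rho^*)$, combining the constraint $\int(X - Z)\, d\mu \leq \Rho^*(\mu)$ with $\int Z\, d\mu = \price(Z)$ rearranges to $\int X\, d\mu - \Rho^*(\mu) \leq \price(Z)$; passing first to the supremum over $\mu$ and then to the infimum over $Z$ delivers the inequality. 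For the reverse, assume $\rho^\sharp(X) \in \Reals$. Since $\price$ is non-zero on the finite-dimensional space $\Scal$ and hence surjective onto $\Reals$, for every $\epsilon > 0$ we can pick $Z_\epsilon \in \Scal$ with $\price(Z_\epsilon) = \rho^\sharp(X) + \epsilon$, and then $\int(X - Z_\epsilon)\, d\mu = \int X\, d\mu - \rho^\sharp(X) - \epsilon \leq \Rho^*(\mu)$ for every $\mu \in \dom(\Rho^*)$ shows $X - Z_\epsilon \in \tilde\acc$ and hence $\tilrho(X) \leq \rho^\sharp(X) + \epsilon$. The main subtlety is precisely ruling out $\rho^\sharp(X) = -\infty$; without this, the construction of $Z_\epsilon$ would collapse.

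The remaining assertions follow with little effort. The identity $\tilrho|_{\Linfty} = \Rho$ is immediate from \eqref{eq:g2} and the dual representation~\eqref{eq:dualrep2} supplied by Lemma~\ref{lem:strongref}. Lower semicontinuity of $\tilrho$ on $(\Lrho, \Norm_{\mathcal R})$ holds because $\tilrho$ is a pointwise supremum of the norm-continuous affine functionals $X \mapsto \int X\, d\mu - \Rho^*(\mu)$. For \eqref{approx}, monotonicity gives $\sup_m \tilrho(X \wedge m) \leq \tilrho(X)$; for the reverse, for each $\mu \in \dom(\Rho^*)$ the sequence $(X \wedge m)_m$ is increasing with $\mu$-integrable lower bound $-X^-$, so monotone convergence yields $\int(X \wedge m)\, d\mu \uparrow \int X\, d\mu$. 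Hence $\sup_m \tilrho(X \wedge m) \geq \int X\, d\mu - \Rho^*(\mu)$ for every $\mu \in \dom(\Rho^*)$, and taking the supremum over $\mu$ closes the argument.
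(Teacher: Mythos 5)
Your proof is correct and follows essentially the same route as the paper: the key integral bound (inequality \eqref{16}), verification of \eqref{regime} via a dual element, lower semicontinuity as a pointwise supremum, and monotone convergence for \eqref{approx}. The only difference is that you spell out the derivation of \eqref{eq:g2} (including the non-trivial point that the supremum is never $-\infty$), which the paper dismisses as ``straightforward''; your argument there is sound.
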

\begin{proof}Note that for arbitrary $\mu\in\dom(\Rho^*)$ and all $X\neq 0$, we have
\begin{equation*}\label{domindual}\int \frac{|X|}{\|X\|_{\mathcal R}}d\mu=\sup_{\eps>0}\int \frac{|X|}{\|X\|_{\mathcal R}+\eps}d\mu\leq\sup_{\eps>0}\,\rho\left(\frac{|X|}{\|X\|_{\mathcal R}+\eps}\right)+\Rho^*(\mu)\leq 1+\Rho^*(\mu),\end{equation*}
hence $\int\cdot\,d\mu$ is a bounded linear functional on $\Lrho$. For arbitrary $X\in\Lrho$ and $\mu\in E_0$ we have 
\begin{center}$\sup\{\price(Z)\mid Z\in \Scal, X+Z\in \tilde\acc\}\leq \sup\{\price(Z)\mid Z\in \Scal, \price(Z) \leq  -\int X d\mu \}= -\int X d\mu<\infty,$\end{center}
where the finiteness of the bound is due to Proposition \ref{R12}(ii). Thus, $\tilde{\cal R}$ satisfies \eqref{regime} and is indeed a risk measurement regime, because $\tilde \acc$ is monotone by $\dom(\Rho^*)\subseteq(\countable_{\Prob})_+$, and convex as intersection of convex subsets of $\Lrho$. It is straightforward to show \eqref{eq:g2}, so $\tilrho$ is l.s.c.\ as pointwise supremum of a family of continuous functions. In order to prove \eqref{approx}, let $\mu\in\dom(\Rho^*)$ be arbitrary and note that by the Monotone Convergence Theorem and monotonicity of $\tilrho$, we have
\[\int X\,d\mu-\Rho^*(\mu)=\sup_{m\in\Nat}\int (X\wedge m)d\mu-\Rho^*(\mu)\leq\sup_{m\in\Nat}\tilrho(X\wedge m)\leq \tilrho(X).\]
Now take the supremum over $\mu\in\dom(\Rho^*)$ on the left-hand side.  
\end{proof}
\noindent Another way to extend $\Rho$ could be considering      
\begin{equation}\label{clA}
\overline\acc:=\cl_{\Norm_{\mathcal R}}(\acc). 
\end{equation}
and $\overline{\mathcal R}=(\overline\acc, \mathcal S, \price)$. We will discuss this approach in Remark~\ref{closureextension} where we show that $\overline{\mathcal R}$ is no risk measurement regime on $\Lrho$ in general, and that, where $\rho_{\overline{\mathcal R}}$ makes sense, it indeed equals $\tilrho$.
As announced in the introduction, we also consider the following extensions of $\Rho$ given by monotone approximation procedures:  
\begin{equation*}\label{eq:xi}
\xi(X):=\sup_{m\in\Nat}\inf_{n\in\Nat}\Rho((-n)\vee X\wedge m),\quad X\in\Lrho,\end{equation*}
and 
\begin{equation*}
\eta(X):=\inf_{n\in\Nat}\sup_{m\in\Nat}\Rho((-n)\vee X\wedge m),\quad X\in\Lrho.\end{equation*}
The question is under which conditions we have \begin{equation}\label{eq:eta:eq:tilrho}\tilrho(X)=\xi(X)=\eta(X)=\lim_{n\to\infty}\Rho((-n)\vee X\wedge n).\end{equation} Note that as a byproduct of \eqref{approx}, we obtain the estimate 
\begin{equation}\label{apriori}\tilrho\leq\xi\leq\eta\quad\tn{and}\quad\forall X\in\Lrho:~\tilrho(|X|)=\xi(|X|)=\eta(|X|)=\rho(|X|).\end{equation} The following Theorem~\ref{mainresult} shows that $\tilrho$ possesses some regularity in terms of monotone approximation in that always $\tilrho=\xi$.\\
\begin{theorem}\label{mainresult}For all $X\in\Lrho$ and all $U\in\Linfty$ we have 
\begin{equation}\label{xiistilde}\tilrho(X+U)=\sup_{m\in\Nat}\inf_{n\in\Nat}\Rho((-n)\vee X\wedge m +U).\end{equation}
\textnormal{A fortiori}, the equality $\tilrho=\xi$ holds, and $\tilrho$ can equivalently be interpreted as the risk measure associated to the risk measurement regime $\mathcal R_\xi:=(\mathcal A_\xi,\Scal,\price)$ on $(\Lrho,\Norm_{\cal R})$, where 
\[\acc_\xi:=\{X\in\Lrho\mid\xi(X)\leq 0\}=\{X\in\Lrho\mid\sup_{m\in\Nat}\inf_{n\in\Nat}\Rho((-n)\vee X\wedge m)\leq 0\}.\] 
\end{theorem}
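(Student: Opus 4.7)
I plan to prove the identity \eqref{xiistilde}, from which $\tilrho=\xi$ (by setting $U=0$) and the risk-measurement-regime statement follow as routine bookkeeping. Write $X_{n,m}:=(-n)\vee X\wedge m\in\Linfty$. The easy direction $\tilrho(X+U)\leq\sup_m\inf_n\Rho(X_{n,m}+U)$ is a dual-representation variant of \eqref{approx}. For $\mu\in\dom(\Rho^*)$, Proposition~\ref{R12}(ii) gives $X\in L^1_\mu$; both $X\wedge m\uparrow X$ and $X_{n,m}\downarrow X\wedge m$ are dominated by $|X|$, so dominated convergence yields $\int(X+U)\,d\mu=\sup_m\inf_n\int(X_{n,m}+U)\,d\mu$. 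As $X_{n,m}+U\in\Linfty$, Theorem~\ref{lem:tildeA} gives $\int(X_{n,m}+U)\,d\mu-\Rho^*(\mu)\leq\Rho(X_{n,m}+U)$, and taking the supremum over $\mu$ closes this direction.

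For the hard direction, the pointwise bound $X_{n,m}\leq X\vee(-n)$ combined with monotonicity of $\tilrho$ and $\tilrho|_{\Linfty}=\Rho$ gives $\Rho(X_{n,m}+U)\leq\tilrho(X\vee(-n)+U)$. The right-hand side being independent of $m$, $\sup_m\inf_n\Rho(X_{n,m}+U)\leq\inf_n\tilrho(X\vee(-n)+U)$, so it suffices to prove the continuity-from-above identity
\[\inf_n\tilrho(X\vee(-n)+U)=\tilrho(X+U),\]
whose $\geq$ half is immediate by monotonicity.

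This continuity-from-above identity is the main technical obstacle and is where the family $\{\|\cdot\|_{c,\mathcal R}\}_{c>0}$ comes into play. For each $n$, weak-$*$ compactness of the level sets $E_c$ of $\Rho^*$ (Lemma~\ref{lem:strongref}) ensures the dual supremum in $\tilrho(X\vee(-n)+U)$ is attained at some $\mu_n\in\dom(\Rho^*)$. The Minkowski-type estimate $\int|Y|\,d\mu\leq\|Y\|_{c,\mathcal R}(c+\Rho^*(\mu))$ inherent in the definition of $\|\cdot\|_{c,\mathcal R}$, together with $|X\vee(-n)|\leq|X|$ and $\tilrho(X\vee(-n)+U)\geq\tilrho(X+U)$, yields
\[\bigl(1-\|X\|_{c,\mathcal R}-\|U\|_{c,\mathcal R}\bigr)\,\Rho^*(\mu_n)\leq c\bigl(\|X\|_{c,\mathcal R}+\|U\|_{c,\mathcal R}\bigr)-\tilrho(X+U).\]
Since $\|Y\|_{c,\mathcal R}\to 0$ as $c\to\infty$ for every $Y\in\Lrho$, I choose $c$ so large that the bracket on the left is positive; this produces a uniform bound $\Rho^*(\mu_n)\leq C$ and places $(\mu_n)$ in the compact set $E_C$. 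Extracting a $\sigma(\countable_\Prob,\Linfty)$-convergent subnet $\mu_{n_k}\to\mu_\star$, a tail bound $\int((X\vee(-n_k))-X)\,d\mu_{n_k}\leq\int|X|\Ind_{\{|X|>n_k\}}\,d\mu_{n_k}$ controlled uniformly on $E_C$ (via the same Minkowski bound and a bootstrapping Dini argument that shows $\||X|\Ind_{\{|X|>n\}}\|_{c,\mathcal R}\to 0$), combined with an $\Linfty$-truncation of $X+U$, shows $\int(X\vee(-n_k)+U)\,d\mu_{n_k}\to\int(X+U)\,d\mu_\star$. Lower semicontinuity of $\Rho^*$ then yields $\lim_k\tilrho(X\vee(-n_k)+U)\leq\int(X+U)\,d\mu_\star-\Rho^*(\mu_\star)\leq\tilrho(X+U)$.

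Finally, setting $U=0$ in the established \eqref{xiistilde} gives $\tilrho=\xi$, whence $\acc_\xi=\{X\in\Lrho\mid\tilrho(X)\leq 0\}$. By \eqref{tildeA} and the dual representation \eqref{eq:g2}, this set coincides with $\tilde\acc$, so $\mathcal R_\xi=\tilde{\mathcal R}$, and Theorem~\ref{lem:tildeA} identifies the associated risk measure as $\tilrho$.
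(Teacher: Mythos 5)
Your easy direction is fine (it is essentially \eqref{apriori}), but the ``hard direction'' contains a fatal gap. After the bound $\Rho((-n)\vee X\wedge m+U)\leq\tilrho(X\vee(-n)+U)$ you discard the dependence on $m$ and reduce the problem to the identity $\inf_n\tilrho(X\vee(-n)+U)=\tilrho(X+U)$. But by \eqref{approx} (and the cutoff bookkeeping of \eqref{cutoff1}--\eqref{cutoff2}) one has $\tilrho(X\vee(-n)+U)=\sup_m\Rho((-n)\vee X\wedge m+U)$, so the quantity you are now trying to dominate by $\tilrho(X+U)$ is $\inf_n\sup_m\Rho(\cdots)=\eta(X+U)$, not $\xi(X+U)=\sup_m\inf_n\Rho(\cdots)$. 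The whole point of the theorem is that these two iterated limits can differ: Example~\ref{ex:etaneqtilrho} exhibits $X$ with $\tilrho(X)=0<\tfrac12\leq\eta(X)=\inf_n\tilrho(X\vee(-n))$, so your claimed ``continuity-from-above identity'' is false already for $U=0$. The compactness argument you sketch to establish it cannot be repaired: the uniform bound on $\Rho^*(\mu_n)$ relies on $\|Y\|_{c,\mathcal R}\to0$ as $c\to\infty$, which holds only for $Y\in\heart$ (for general $Y\in\Lrho$ the limit is $\inf\{\lambda>0\mid\rho(|Y|/\lambda)<\infty\}>0$), and the tail estimate $\||X|\Ind_{\{|X|>n\}}\|_{\mathcal R}\to0$ is precisely the characterisation of $\Mrho$ in Corollary~\ref{Mrhocharac}, which fails for $X\in\Lrho\setminus\Mrho$ (see Example~\ref{inclusions}).

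The correct argument must keep $m$ fixed and interchange $\inf_n$ with the dual supremum for the upper-truncated variable $X^m=X\wedge m$; this is what the paper does. One first shows that the dual optimisers for $\Rho((-n)\vee X^m)$ all lie in a single level set $E_c$ with $c$ independent of $n$ (using only the truncation from above at $m$ and the finiteness of $\tilrho(-X^-)$), then applies Fan's minimax theorem on the $\sigma(\countable_\Prob,\Linfty)$-compact set $E_c$ to obtain $\inf_n\max_{\mu\in E_c}=\max_{\mu\in E_c}\inf_n$, and finally uses dominated convergence for each fixed $\mu$ to conclude $\xi(X^m)\leq\tilrho(X^m)$; letting $m\to\infty$ via \eqref{approx} settles the case $U=0$, and the cutoff identities \eqref{cutoff1}--\eqref{cutoff2} then deliver \eqref{xiistilde} for general $U$. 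Your hard direction needs to be rebuilt along these lines.
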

\noindent For the sake of brevity, we shall in the remainder of our investigations often use the following piece of notation: for random variables $U,V\in(\Linfty)_+$ and $X\in\Lrho$, we set $X_U:=X\vee(-U)$ and $X^V:=X\wedge V$.
\begin{proof}
We show first that $\tilrho=\xi$ holds. Let $X\in\Lrho$, $m\in\Nat$ be fixed and $n\in\Nat$ be arbitrary. Let $\mu\in\dom(\Rho^*)$ be such that 
\[\tilrho(-X^-)-1\leq\Rho(X_n^m)-1\leq \int X^m_nd\mu-\Rho^*(\mu)\leq\int(X^+)^md\mu-\Rho^*(\mu).\]
Of course, the first and last inequalities in the latter estimate always hold by monotonicity. For $\eps>0$ arbitrary we can thus estimate
\begin{align*}\label{eq:controlsup}\begin{split}\Rho^*(\mu)-1&\leq \int (X^+)^md\mu-\tilrho(-X^-)=\frac 1 {1+\eps}\int (1+\eps)(X^+)^md\mu-\tilrho(-X^-)\\
&\leq \frac 1{1+\eps}\Rho((1+\eps)(X^+)^m)+\frac 1 {1+\eps}\Rho^*(\mu)-\tilrho(-X^-).\end{split}
\end{align*}
Rearranging this inequality, we obtain
\[\Rho^*(\mu)\leq\frac 1{\eps}\Rho((1+\eps)(X^+)^m)+\frac{1+\eps}\eps\left(1-\tilrho(-X^-)\right)=:c,\]
a bound which is independent of $n\in\Nat$. Since $E_c=\{\mu\in\countable_\Prob\mid \Rho^*(\mu)\leq c\}$ is $\sigma(\countable_\Prob,\Linfty)$-compact by Lemma \ref{lem:strongref}, we conclude for all $n\in\Nat$ that $\Rho(X^m_n)=\max_{\mu\in E_c}f(\mu,n)$, where the function $f$ is given by 
\[f: E_c\times\Nat\to\Reals,\quad f(\mu,n):=\int X^m_nd\mu-\Rho^*(\mu),\]
Our aim is to apply Fan's Minimax Theorem \cite[Theorem 2]{16} to the function $f$ in order to infer 
\begin{equation}\label{positivity}\xi(X^m)=\inf_n\max_{\mu\in E_c}f(\mu, n)=\max_{\mu\in E_c}\inf_{n\in\Nat}f(\mu, n)=\max_{\mu\in E_c}\inf_{n\in\Nat}\int X^m_nd\mu-\Rho^*(\mu).\end{equation}
To this end we have to check the following conditions:
\begin{itemize}
\item $E_c$ is a compact Hausdorff space when endowed with the relative $\sigma(\countable_\Prob, \Linfty)$-topology. This follows from continuity from above. 
\item $f$ is convex-like on $\Nat$ in that for all $n_1,n_2\in\Nat$ and all $0\leq t\leq 1$ there is a $n_0\in\Nat$ such that 
\[\forall\mu\in E_c:~f(\mu, n_0)\leq tf(\mu, n_1)+(1-t)f(\mu, n_2).\]
Indeed, choose $n_0:=\max\{n_1, n_2\}$ and note that 
\begin{align*}tf(\mu, n_1)+(1-t)f(\mu, n_2)&=t\int X^m_{n_1}d\mu+(1-t)\int X^m_{n_2}d\mu-\Rho^*(\mu)\\
&\geq (t+1-t)\int X^m_{n_0}d\mu-\Rho^*(\mu)=f(\mu, n_0).\end{align*}
\item $f$ is concave-like on $E_c$, which is defined analogous to convex-like. Indeed, let $\mu_1, \mu_2\in E_c$ and define $\mu_0=t\mu_1+(1-t)\mu_2\in E_c$ (by convexity of $E_c$). Then for all $n\in\Nat$, convexity of $\Rho^*$ implies
\begin{align*}tf(\mu_1,n)+(1-t)f(\mu_2, n)&=\int X^m_nd\mu_0-t\Rho^*(\mu_1)-(1-t)\Rho^*(\mu_2)\\
&\leq\int X^m_nd\mu_0-\Rho^*(\mu_0)=f(\mu_0,n).\end{align*}
\item For all $n\in\Nat$, the mapping $\mu\mapsto f(\mu,n)$ is upper semicontinuous. This follows from the continuity of $\mu\mapsto\int X^m_n d\mu$ and the lower semicontinuity of $\Rho^*$. 
\end{itemize}
From \eqref{positivity}, by the positivity of $\mu$ and, e.g., dominated convergence, 
\[\xi(X^m)=\max_{\mu\in E_c}\int X^md\mu-\Rho^*(\mu)\leq\tilrho(X^m),\]
and $\tilrho(X^m)=\xi(X^m)$ holds by \eqref{apriori}. Taking the limit $m\to\infty$, we obtain from the definition of $\xi$ and \eqref{approx} that $\tilrho(X)=\xi(X)$. \\
Now, let $X\in\Lrho$ and $U\in\Linfty$ be arbitrary and assume $m,n\geq u:=\|U\|_{\infty}$. We obtain  
\begin{align}\label{cutoff1}\begin{split}(X+U)_n&=(X+U)\Ind_{\{X\geq -U-n\}}-n\Ind_{\{X<-U-n\}}\\
&=X\Ind_{\{X\geq -U-n\}}-(n+U)\Ind_{\{X<-U-n\}}+U=X_{U+n}+U,\end{split}\end{align}
and in addition
\begin{align}\label{cutoff2}\begin{split}(X+U)^m&=(X+U)\Ind_{\{X\leq m-U\}}+m\Ind_{\{X>m-U\}}\\
&=X\Ind_{\{X\leq m-U\}}+(m-U)\Ind_{\{X>m-U\}}+U=X^{m-U}+U.\end{split}\end{align} 
From these two equations \eqref{cutoff1} and \eqref{cutoff2} we infer
\begin{align*}\xi(X+U)&=\sup_{m\geq u}\inf_{n\geq u}\Rho((X_{U+n}+U)^m)=\sup_{m\geq u}\inf_{n\geq u}\Rho(X^{m-U}_{U+n}+U).\end{align*}
This implies that 
\begin{align*}\sup_{m\in\Nat}\inf_{n\in\Nat}\Rho(X^m_n+U)&=\sup_{m\geq u}\inf_{n\geq u}\Rho(X^{m-U}_{U+n}+U)=\sup_{m\geq u}\inf_{n\geq u}\Rho((X+U)^m_n)\\
&=\xi(X+U)=\tilrho(X+U).\end{align*}
\eqref{xiistilde} is proved. $\xi=\tilrho$ being $\Scal$-additive, monotone, and proper, directly implies $\mathcal R_\xi$ is a risk measurement regime. The equality $\tilrho=\xi=\rho_{\mathcal R_\xi}$ obviously holds true. 
\end{proof}
\noindent Theorem~\ref{mainresult} appeared as \cite[Lemma 2.8]{4} in the context of law-invariant monetary risk measures. Our proof not only serves as an alternative to the one given in \cite{4}, relying irreducibly on law-invariance, but also generalises the result to a much wider class of risk measures.\\
In contrast to Theorem~\ref{mainresult}, we demonstrate in Example~\ref{ex:etaneqtilrho} that $\tilrho\neq\eta$ may happen. Before we study conditions under which $\tilrho$ displays regularity in the sense of \eqref{eq:eta:eq:tilrho}, we show the following properties of $\eta$: 
\begin{proposition}\label{thm:eta}
Define the acceptance set \[\acc_\eta:=\{X\in\Lrho\mid\inf_{n\in\Nat}\tilrho((-n)\vee X)\leq 0\}\subsetneq\Lrho.\]
Then $\eta$ is the risk measure associated to the risk measurement regime $\mathcal R_\eta:=(\acc_\eta,\Scal,\price)$. 
Moreover, \begin{equation}\label{approx:alt}\forall\,  X\in\Lrho:\quad \eta(X)=\inf_{n\in\Nat}\tilrho((-n)\vee X), \end{equation} and  $$\Gamma:=\{X\in \Lrho\mid \exists\, \eps>0:\, \rho((1+\eps) X^+)<\infty \} = \tn{int}\, \dom(\eta)\subseteq  \tn{int}\, \dom(\tilrho).$$ 
\end{proposition}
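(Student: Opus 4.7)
The plan breaks into three blocks: (i) the reduction formula \eqref{approx:alt} and the verification that $\mathcal R_\eta$ is a risk measurement regime, (ii) the identity $\eta=\rho_{\mathcal R_\eta}$, and (iii) the characterisation $\Gamma=\tn{int}\,\dom(\eta)\subseteq\tn{int}\,\dom(\tilrho)$.

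For (i), I apply \eqref{approx} of Theorem~\ref{lem:tildeA} to $(-n)\vee X\in\Lrho$ for fixed $n$: since $(-n)\vee X\wedge m\in\Linfty$ and $\tilrho$ agrees with $\Rho$ there, $\tilrho((-n)\vee X)=\sup_{m}\Rho((-n)\vee X\wedge m)$; taking $\inf_n$ yields \eqref{approx:alt}. Monotonicity, convexity, and non-emptiness of $\acc_\eta$ are routine from the corresponding properties of $\tilrho$ (using that $-(\Linfty)_+\subseteq\acc_\eta$), and $\acc_\eta\subsetneq\Lrho$ is witnessed by a sufficiently large positive constant, whose $\Rho$-value grows unboundedly because $\gamma\Prob\in\dom(\Rho^*)$. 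For the finiteness condition \eqref{regime}, fix $X+Z\in\acc_\eta$ and pick $\mu\in E_0\neq\emptyset$: boundedness of $\int\cdot\,d\mu$ on $\Lrho$ (Theorem~\ref{lem:tildeA}) combined with monotone convergence, justified by Proposition~\ref{R12}(ii), gives
\[
\int (X+Z)\,d\mu=\lim_{n}\int (-n)\vee(X+Z)\,d\mu\leq\inf_{n}\tilrho((-n)\vee(X+Z))\leq 0,
\]
hence $\price(Z)\leq -\int X\,d\mu<\infty$.

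For block (ii), I first establish $\Scal$-additivity $\eta(X-Z)=\eta(X)-\price(Z)$ for $Z\in\Scal$. The truncation identity $(X-Z)_n=X\vee(Z-n)-Z$, which follows from $(X+U)_n=X_{U+n}+U$ (proof of Theorem~\ref{mainresult}) with $U=-Z$, together with $\Scal$-additivity of $\tilrho$ gives $\tilrho((X-Z)_n)=\tilrho(X\vee(Z-n))-\price(Z)$. Sandwiching $Z$ by $\pm\|Z\|_\infty$ and invoking monotonicity of $\tilrho$, both outer bounds $\tilrho(X\vee(-n\pm\|Z\|_\infty))$ tend to $\eta(X)$ as $n\to\infty$, so $\eta(X-Z)=\eta(X)-\price(Z)$. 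Since $\acc_\eta=\{\eta\leq 0\}$ and $\price(\Scal)=\Reals$ (from the existence of $U\in\Scal$ with $\price(U)>0$), the definition of $\rho_{\mathcal R_\eta}(X)$ collapses to $\eta(X)$.

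For block (iii), the inclusion $\tn{int}\,\dom(\eta)\subseteq\tn{int}\,\dom(\tilrho)$ is immediate from $\eta\geq\tilrho$; see \eqref{apriori}. To see $\Gamma\subseteq\tn{int}\,\dom(\eta)$, given $X\in\Gamma$ with $\rho((1+\eps)X^+)<\infty$, I combine $\eta(Y)\leq\tilrho(Y^+)\leq\rho(X^++|Y-X|)$ with the convex decomposition $X^++|Y-X|=\tfrac{1}{1+\eps}(1+\eps)X^++\tfrac{\eps}{1+\eps}\cdot\tfrac{1+\eps}{\eps}|Y-X|$; for $\|Y-X\|_{\mathcal R}<\tfrac{\eps}{1+\eps}$ the scaling of $\Norm_{\mathcal R}$ forces $\rho(\tfrac{1+\eps}{\eps}|Y-X|)\leq 1$, so $\eta(Y)<\infty$. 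The main obstacle is the reverse inclusion $\tn{int}\,\dom(\eta)\subseteq\Gamma$: for $X\in\tn{int}\,\dom(\eta)$ and $\eps>0$ small, the perturbation $Y:=X+2\eps X^+$ lies in $\dom(\eta)$, and a case analysis on $\{X\geq 0\}$ versus $\{X<0\}$ yields $(-n)\vee Y=(1+2\eps)X^+-X^-\wedge n$, whence there is $n_1\in\Nat$ with $\tilrho((1+2\eps)X^+-X^-\wedge n_1)<\infty$. The decisive trick is the explicit convex identity
\[
(1+\eps)X^+=\tfrac{1+\eps}{1+2\eps}\bigl[(1+2\eps)X^+-X^-\wedge n_1\bigr]+\tfrac{\eps}{1+2\eps}\cdot\tfrac{1+\eps}{\eps}\,X^-\wedge n_1,
\]
whose second summand lies in $\Linfty$ (bounded by $\tfrac{(1+\eps)n_1}{\eps}$); convexity of $\tilrho$ then bounds $\rho((1+\eps)X^+)$ by a finite quantity, so $X\in\Gamma$.
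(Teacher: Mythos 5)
Your proof is correct and follows essentially the same route as the paper's: the reduction \eqref{approx:alt} via \eqref{approx}, $\Scal$-additivity of $\eta$ through the truncation identity \eqref{cutoff1}, and the convex-combination trick that transfers finiteness between $(1+2\eps)X^+ - X^-\wedge n$ and $(1+\eps)X^+$ in both directions of $\Gamma=\tn{int}\,\dom(\eta)$. The only (harmless) variations are that you verify \eqref{regime} directly by integrating against some $\mu\in E_0$, and that you obtain $\Gamma\subseteq\tn{int}\,\dom(\eta)$ from the pointwise estimate $Y^+\leq X^+ + |Y-X|$ rather than via the paper's auxiliary open set $\mathbf{B}$ and monotonicity.
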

\begin{proof}
From \eqref{approx} and $\eta|_{\Linfty}=\Rho=\tilrho|_{\Linfty}$, we immediately obtain that for all $X\in\Lrho$ the equality $\eta(X)=\inf_{n\in\Nat}\tilrho((-n)\vee X)$ holds. \eqref{apriori} shows that $\acc_\eta\subsetneq\Lrho$ and that $\eta$ is a proper function. In order to prove the theorem, it suffices to check $\Scal$-additivity, convexity and monotonicity. 
Let $Z\in\Scal$ and $X\in\Lrho$. From the $\Scal$-additivity of $\tilrho$ and \eqref{cutoff1} we obtain, using the notational conventions introduced before the proof of Theorem~\ref{mainresult}, that 
$$\eta(X)=\inf_{n\geq\|Z\|_\infty}\tilrho(X_{Z+n}+Z)=\inf_{n\geq\|Z\|_\infty}\tilrho(X_{Z+n})+\price(Z)=\eta(X)+\price(Z).$$
For each $n\in\Nat$, $f_n(x):=(-n)\vee x$ is convex and monotone, thus $\eta=\lim_n\tilrho\circ f_n$ is convex and monotone. Next we show that $\Gamma\subseteq \tn{int}\, \dom(\eta)$. To this end we first show that  
$\mathbf{B}:=\bigcup_{c>0}\{Y\in\Lrho\mid\|Y\|_{c,\mathcal R}<1\}\subseteq \tn{int}\, \dom(\eta)$. Indeed for any $X$ with $\|X\|_{c,\mathcal R}<1$, there is $\lambda<1 $ such that $\rho(|X|/\lambda)\leq c$, and thus 
$$\eta(X)\leq \eta(|X|)= \rho(|X|) \leq \lambda \rho(|X|/\lambda)\leq \lambda c<\infty,$$
\noindent so $\mathbf{B}\subseteq \dom(\eta)$. Moreover, by definition $\mathbf{B}$ is open in $(\Lrho, \Norm_{\mathcal R})$. Now, let $X\in \Gamma$, and thus $X^+\in \mathbf{B}$. Hence, there is $\delta>0$ and a ball $B_\delta(0):=\{Y\in \Lrho\mid \|Y\|_{\mathcal R}<\delta\}$ such that $\{X^+\}+ B_\delta(0)\subseteq \dom(\eta)$. By monotonicity of $\eta$ it now follows that also $\{X\}+B_\delta(0)=\{X^+\}+B_\delta(0)- \{X^-\}\subseteq \dom(\eta)$, so $X\in \tn{int}\, \dom(\eta)$.\\
 In order to show $\Gamma\supseteq \tn{int}\, \dom(\eta)$ let $X\in \tn{int}\, \dom(\eta)$. Then there is $\eps>0$ such that $(1+2\eps)X\in \dom(\eta)$ and thus also $(1+\eps)X\in \dom(\eta)$, and by \eqref{approx:alt} there must  be $n\in \N$ such that $(1+2\eps)((-n)\vee X)\in \dom(\tilrho)$ and $(1+\eps)((-n)\vee X)\in \dom(\tilrho)$. Let  $X_n:= (-n)\vee X$ and $Y=(1+\eps)(X^-\wedge n)\in \Linfty$, so we have $(1+\eps)X^+=(1+\eps)X_n+ Y$. If $\delta>0$ satisfies $(1+\delta)(1+\eps)=1+2\eps$, convexity implies
\begin{align}
\rho((1+\eps)X^+)&= \tilrho((1+\eps)X_n+ Y)\;  = \; \tilrho\left(\frac{1+\delta}{1+\delta} (1+\eps)X_n +\frac{\delta(1+\delta)}{\delta(1+\delta)}Y\right) \nonumber \\ \label{eq:finite} &\leq \frac{1}{1+\delta} \tilrho((1+2\eps)X_n) +  \frac{\delta}{(1+\delta)} \tilrho\left( \frac{(1+\delta)}{\delta}Y\right)<\infty. 
\end{align} 
Hence,  $X\in \Gamma$. $\tn{int}\, \dom(\eta)\subseteq \tn{int}\, \dom(\tilrho)$ follows from $\tilrho\leq \eta$, see \eqref{apriori}. 
\end{proof}
\noindent The following Theorem~\ref{thm:tilrho:eta} states conditions under which \eqref{eq:eta:eq:tilrho} holds.  
\begin{theorem}\label{thm:tilrho:eta}
Let $X\in \Gamma$. Consider the following conditions:
\begin{itemize}
\item[(i)] there is $s>0$ such that for all $n\in \N$ we have 
\begin{center}$\tilrho((-n)\vee X)=\lim_{m\to \infty}\tilrho((-n)\vee X+s X^+\Ind_{\{X^+\geq m\}})$;\end{center}
\item[(ii)] there is $s>0$ such that $\eta(X)=\lim_{m\to \infty}\eta(X+s X^+\Ind_{\{X^+\geq m\}})$;
\item[(iii)] for all $n\in \N$ we have $\lim_{m\to \infty}\rho(n X\Ind_{\{X\geq m\}})=0$.
\end{itemize}
Any of the conditions (i)-(iii) implies \eqref{eq:eta:eq:tilrho}.
\end{theorem}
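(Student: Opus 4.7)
Set $X_n := (-n)\vee X$ and observe that $X_n^+ = X^+$, so $X_n \in \Gamma \subseteq \tn{int}\,\dom(\tilrho)$; by Remark~\ref{shapiro}(v) $\tilrho$ is norm-continuous at each $X_n$, and by Proposition~\ref{thm:eta} we have $\eta(X) = \inf_n \tilrho(X_n) = \lim_n \tilrho(X_n) \geq \tilrho(X)$. Everything hinges on proving $\lim_n \tilrho(X_n) \leq \tilrho(X)$. My plan is to show (iii)$\Rightarrow$(i), then establish (i)$\Rightarrow$\eqref{eq:eta:eq:tilrho} by a Fan's minimax argument patterned on the proof of Theorem~\ref{mainresult}, and handle (ii)$\Rightarrow$\eqref{eq:eta:eq:tilrho} along the same lines.

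For \textbf{(iii) $\Rightarrow$ (i)}: The hypothesis $\rho(nX\Ind_{\{X\geq m\}}) \to 0$ for every $n \in \N$ together with $sX^+\Ind_{\{X^+\geq m\}} = sX\Ind_{\{X\geq m\}}$ (when $m>0$) immediately yields, from the very definition of $\Norm_{c,\mathcal R}$, that $\|sX^+\Ind_{\{X^+\geq m\}}\|_{c,\mathcal R} \to 0$ for every $c > 0$, so $sX^+\Ind_{\{X^+\geq m\}} \to 0$ in $\Norm_{\mathcal R}$ by Proposition~\ref{R12}(i). Norm-continuity of $\tilrho$ at $X_n$ then delivers (i).

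For \textbf{(i) $\Rightarrow$ \eqref{eq:eta:eq:tilrho}}: Fix $\delta > 0$ and, for each $n$, pick $\mu_n \in \dom(\Rho^*)$ with $\tilrho(X_n) - \delta \leq \int X_n\,d\mu_n - \Rho^*(\mu_n)$. Since $X_n \leq X^+$ and $X \in \Gamma$ supplies some $\eps > 0$ with $\rho((1+\eps)X^+) < \infty$, the Young-type bound $\int X^+\,d\mu \leq \tfrac{1}{1+\eps}\bigl[\rho((1+\eps)X^+) + \Rho^*(\mu)\bigr]$ (read off the definition of $\Rho^*$) rearranges to a uniform ceiling $\Rho^*(\mu_n) \leq c(\delta)$ independent of $n$. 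Hence $\{\mu_n\} \subseteq E_{c(\delta)}$, which is $\sigma(\countable_\Prob,\Linfty)$-compact by Lemma~\ref{lem:strongref}. Condition (i) is then invoked to obtain upper semicontinuity of $\mu \mapsto \int X_n\,d\mu$ on $E_{c(\delta)}$: decomposing $\int X_n\,d\mu = \int X_n^k\,d\mu + \int (X-k)^+\,d\mu$, the bounded truncation is weakly continuous, and the perturbation identity (i) combined with the Young bound forces the tail term to vanish uniformly over $E_{c(\delta)}$ as $k \to \infty$. Setting $f(\mu, n) := \int X_n\,d\mu - \Rho^*(\mu)$, one verifies that $f(\cdot, n)$ is upper semicontinuous and concave-like on $E_{c(\delta)}$, while $f(\mu, \cdot)$ is convex-like on $\N$ (take $n_0 := \max(n_1, n_2)$; then $X_{n_0} \leq X_{n_i}$ and the desired inequality reduces to a pointwise comparison). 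Fan's Minimax Theorem~\cite[Theorem 2]{16} then yields
\[
\inf_n \sup_{\mu \in E_{c(\delta)}} f(\mu, n) = \sup_{\mu \in E_{c(\delta)}} \inf_n f(\mu, n) = \sup_{\mu \in E_{c(\delta)}} \Bigl[\int X\,d\mu - \Rho^*(\mu)\Bigr] \leq \tilrho(X),
\]
where the middle equality follows from $X_n \downarrow X$ and dominated convergence, since $X \in L^1_\mu$ for $\mu \in \dom(\Rho^*)$ by Proposition~\ref{R12}(ii). The left-hand side is $\geq \eta(X) - \delta$ by choice of $\mu_n$; letting $\delta \downarrow 0$ gives $\eta(X) \leq \tilrho(X)$ and hence equality.

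The equality $\eta(X) = \lim_n \Rho(X_n^n)$ then follows from the sandwich $\inf_k \Rho(X_k^n) \leq \Rho(X_n^n) \leq \tilrho(X_n)$, together with $\sup_n \inf_k \Rho(X_k^n) = \xi(X) = \tilrho(X)$ (Theorem~\ref{mainresult}) and $\inf_n \tilrho(X_n) = \eta(X)$. For \textbf{(ii) $\Rightarrow$ \eqref{eq:eta:eq:tilrho}} the same minimax machinery applies: monotone decrease of $m \mapsto X_n + sX^+\Ind_{\{X^+\geq m\}}$ in $m$ interchanges $\inf_m$ with $\inf_n$, so (ii) supplies the analogous uniform control on the positive tail over near-optimizers of $\eta$. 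The hardest technical step is the passage from the pointwise perturbation identity (i) to genuine uniform integrability of $X^+$ over the compact set $E_{c(\delta)}$ --- combining the Young bound, norm-continuity of $\tilrho$ at $X_n$, and (i) in a single careful estimate is the main obstacle I expect.
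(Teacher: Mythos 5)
Your reduction of (iii) to (i) is sound: (iii) says exactly that $X^+\in\Mrho$ (Corollary~\ref{Mrhocharac}), the tails $sX^+\Ind_{\{X^+\geq m\}}$ then tend to $0$ in $\Norm_{\mathcal R}$, and norm-continuity of $\tilrho$ on $\tn{int}\,\dom(\tilrho)\supseteq\Gamma\ni X_n$ gives (i). Moreover, under (iii) your minimax argument can actually be completed, because for $\mu\in E_c$ one has $\lambda\int X^+\Ind_{\{X^+\geq m\}}\,d\mu\leq\rho(\lambda X^+\Ind_{\{X^+\geq m\}})+c$, and letting $m\to\infty$ and then $\lambda\to\infty$ gives precisely the uniform integrability of $X^+$ over $E_c$ that you need.

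The gap is exactly where you suspect it, and it is not merely technical: condition (i) does not yield uniform vanishing of the tail over $E_{c(\delta)}$. Unwinding (i) via the dual representation gives, for \emph{every} $\mu\in E_{c(\delta)}$, only $s\int X^+\Ind_{\{X^+\geq m\}}\,d\mu\leq\tilrho(X_n)+\eps_m-\int X_n\,d\mu+\Rho^*(\mu)$ with $\eps_m\to0$; the right-hand side is uniformly bounded (since $-\int X_n\,d\mu\leq\int X^-\,d\mu\leq\lambda(1+c(\delta))$ for a suitable $\lambda>0$, because $X^-\in\Lrho$), but it does not tend to $0$. Condition (i) asserts that the \emph{supremum} over $\dom(\Rho^*)$ is insensitive to the perturbation, which says nothing about individual measures far from the optimizers. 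Without uniform tail control, $\mu\mapsto\int X_n\,d\mu$ is only lower semicontinuous on $E_{c(\delta)}$ (a pointwise supremum of the continuous bounded truncations), the maximum need not be attained, and Fan's theorem is not applicable; the same objection hits your treatment of (ii). The paper avoids this entirely by a subgradient argument: $X\in\Gamma$ gives $\partial\eta(X)\neq\emptyset$ (Theorem~\ref{thm:subdiff:general}); (i) implies (ii), which is tail continuity of $\eta$ at $X$ along $sX^+$, so Proposition~\ref{subgradproj3} shows the singular part of any $\ell=\mu\oplus\lambda\in\partial\eta(X)$ satisfies $\lambda(X^+)=0$, whence $\int X\,d\mu\geq\ell(X)$; Proposition~\ref{subgradproj1} then places the regular part $\mu$ itself in $\partial\eta(X)$, and the second part of Lemma~\ref{lem:eta=tilrho} converts this regular subgradient into $\eta(X)=\tilrho(X)$. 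Your concluding sandwich for $\lim_n\Rho(X_n^n)$ is fine once that equality is in hand, but to salvage a minimax proof under (i) or (ii) alone you would need to manufacture a compact set of measures on which the tails of $X^+$ are uniformly integrable, and I do not see how to extract one from these hypotheses.
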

\noindent The set $\Gamma$ appears to be a set of reasonable risks in that they can at least be leveraged by a small amount and still remain hedgeable. Risks outside $\Gamma$ should probably not be considered by any sound agent. Note that the conditions (i)-(iii) are satisfied whenever  monotone or dominated convergence results can be applied to $\tilrho$, as is the case for many risk measures used in practice like the entropic risk measure in Example~\ref{ex:entropic} or Average Value at Risk based risk measures in Example~\ref{ex:avar}. The proof of Theorem~\ref{thm:tilrho:eta} is based on a study of subgradients of $\tilrho$ and $\eta$, respectively, and therefore postponed to the end of Section~\ref{sec:subgrad}. It turns out that the regularity condition  \eqref{eq:eta:eq:tilrho} is closely related to the existence of regular subgradients for $\eta$ and $\tilrho$.

\subsection{The structure of the Minkowski domain}\label{sec:structure}

In this section, we will decompose $\Lrho$ into parts with clear operational meanings. 
\begin{definition}We denote the closure of $\Linfty$ in $\Lrho$ by $\Mrho:=\tn{cl}_{\Norm_{\mathcal R}}(\Linfty)$, and define the \textsc{heart} of the Minkowski domain to be
\begin{center}$\heart:=\{X\in \Lrho\mid \rho(k|X|)<\infty~\tn{for all }k>0\}.$\end{center} 
\end{definition}

\noindent $\heart$, a concept which clearly adapts the idea of an Orlicz heart,\footnote{ For an introduction to Orlicz space theory we refer to \cite{Rao}.} is the set of risky positions which can be hedged at any quantity with finite cost. 
\begin{proposition}\label{R17} $\Mrho$ and $\heart$ are solid Banach sublattices of $\Lrho$ and $\Mrho\subseteq\heart$. 
Moreover, $\heart\subseteq\Gamma$, and both $\tilrho|_{\heart}$ and $\eta|_{\heart}$ are continuous.
\end{proposition}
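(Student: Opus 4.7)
The plan is to prove the structural claims about $\Mrho$ first, then about $\heart$, and finally deduce the continuity of $\tilrho|_{\heart}$ and $\eta|_{\heart}$ via the Shapiro–type result recalled in Remark~\ref{shapiro}(v).

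For $\Mrho$, completeness is immediate as $\Mrho$ is closed by definition inside the Banach space $(\Lrho,\Norm_{\mathcal R})$. Being a sublattice is routine from the Lipschitz estimate $||X|-|X'||\le|X-X'|$ and Proposition~\ref{R12}(iv). The only non-trivial point is solidity: given $Y\in\Mrho$ and $X\in\Lrho$ with $|X|\leq|Y|$, I would pick approximating $Y_n\in\Linfty$ with $\|Y-Y_n\|_{\mathcal R}\to 0$ and define the bounded random variables $X_n:=(-|Y_n|)\vee X\wedge |Y_n|\in\Linfty$. A case distinction on the sets $\{X>|Y_n|\}$, $\{-|Y_n|\le X\le|Y_n|\}$, $\{X<-|Y_n|\}$ combined with the hypothesis $|X|\le|Y|$ yields the pointwise bound $|X-X_n|\leq |Y-Y_n|$, and Proposition~\ref{R12}(iv) then gives $\|X-X_n\|_{\mathcal R}\to 0$, so $X\in\Mrho$.

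Next, I would show $\Mrho\subseteq\heart$ and the solidity/closedness of $\heart$ by the same convexity trick. Fix $X\in\Mrho$ and $k>0$; approximate $X$ by $Y_n\in\Linfty$ so that eventually $\|X-Y_n\|_{\mathcal R}<\tfrac{1}{2k}$, hence $\rho(2k|X-Y_n|)\leq 1$ by the definition of the Minkowski norm. Convexity of $\rho(|\cdot|)$ gives
\[\rho(k|X|)\;\leq\;\tfrac12\rho(2k|X-Y_n|)+\tfrac12\rho(2k|Y_n|)\;\leq\;\tfrac12+\tfrac12\rho(2k\|Y_n\|_\infty)\;<\;\infty,\]
since $\Rho$ is finite on $\Linfty$. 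This simultaneously shows $X\in\heart$ and, by replacing $Y_n$ there with elements of $\heart$, proves $\heart$ closed in $(\Lrho,\Norm_{\mathcal R})$. Solidity of $\heart$ is immediate from monotonicity of $\rho(|\cdot|)$, and closedness plus solidity makes $\heart$ a Banach sublattice. The inclusion $\heart\subseteq\Gamma$ is then trivial: if $X\in\heart$, then $\rho(2X^+)\leq\rho(2|X|)<\infty$, so $\eps:=1$ works in the definition of $\Gamma$.

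Finally, for continuity of $\tilrho|_{\heart}$ and $\eta|_{\heart}$, I would invoke Remark~\ref{shapiro}(v) applied to the Banach lattice $(\heart,\Norm_{\mathcal R})$. Both $\tilrho|_{\heart}$ and $\eta|_{\heart}$ are proper, convex and monotone (inherited from $\tilrho$ and $\eta$ on $\Lrho$), and I claim they are finite on the whole of $\heart$: indeed, for $X\in\heart$, the a~priori estimate \eqref{apriori} gives $\tilrho(X)\leq\eta(X)\leq\eta(|X|)=\rho(|X|)<\infty$. Therefore $\dom(\tilrho|_{\heart})=\dom(\eta|_{\heart})=\heart=\tn{int}\,\heart$ (the interior taken within the Banach lattice $\heart$ itself), and Shapiro's theorem yields norm–continuity of both restrictions on all of $\heart$. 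I anticipate the only mildly delicate step to be the pointwise bound $|X-X_n|\leq|Y-Y_n|$ in the solidity argument for $\Mrho$, as it is the one place where the constraint $|X|\leq|Y|$ must really be used; all subsequent steps are variations on the same convexity-plus-Minkowski-norm template.
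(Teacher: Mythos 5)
Your proof is correct and follows essentially the same route as the paper: the paper dismisses the structural claims as ``easily verified'' (your truncation argument for solidity of $\Mrho$ and the convexity estimate for $\Mrho\subseteq\heart$ and closedness of $\heart$ are exactly the intended verifications), and the continuity of $\tilrho|_{\heart}$ and $\eta|_{\heart}$ is obtained identically via Remark~\ref{shapiro}(v) applied to the Banach lattice $(\heart,\Norm_{\mathcal R})$. The only cosmetic difference is that for $\heart\subseteq\Gamma$ you argue directly from the definition of $\Gamma$ (taking $\eps=1$), whereas the paper routes through the set $\mathbf B$ from the proof of Proposition~\ref{thm:eta}; both are equally valid.
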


\begin{proof}The first assertions are easily verified. Recall the set $\mathbf B$ from the proof of Proposition~\ref{thm:eta} for which we know that $\mathbf{B}\subseteq \Gamma$. For the inclusion $\heart\subseteq\mathbf B$, let $0\neq X\in \heart$ and note that $\rho(2|X|)<\infty$. The latter means 
$\|X\|_{c,\mathcal R}\leq\frac 1 2<1$ for some $c>0$, and thus $\heart\subseteq\mathbf B$. Finally, as $(\heart, \Norm_{\mathcal R})$ is a Banach lattice and both $\eta$ and $\tilrho$ are convex, monotone and finite-valued on $(\heart, \Norm_{\mathcal R})$, $\tilrho|_{\heart}$ and $\eta|_{\heart}$ are continuous according to Remark~\ref{shapiro}(v).
\end{proof}
\noindent From Proposition \ref{R17} we can derive the following characterisation of $\Mrho$, a result which can also be found as \cite[Lemma 3.3]{3}.
\begin{corollary}\label{Mrhocharac}
$\Mrho=\{X\in\Lrho\mid \forall\lambda>0:~\lim_{k\to\infty}\rho(\lambda|X|\Ind_{\{|X|\geq k\}})=0\}.$
\end{corollary}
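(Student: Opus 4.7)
The plan is to first reformulate the tail condition in terms of the Minkowski norm $\|\cdot\|_\mathcal{R}$ and then reduce the equality to showing that a certain natural set is norm-closed in $\Lrho$.

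For the reformulation, I would show that for $X \in \Lrho$, the stated condition $\lim_{k \to \infty} \rho(\lambda|X|\Ind_{\{|X|\geq k\}}) = 0$ for every $\lambda > 0$ is equivalent to $\lim_{k\to\infty} \|X\Ind_{\{|X|\geq k\}}\|_\mathcal{R} = 0$. Both directions follow by unravelling the definition of the Minkowski gauge. For ``$\Leftarrow$'', one invokes the equivalence of the family $(\Norm_{c,\mathcal R})_{c>0}$ from Proposition~\ref{R12}(i): smallness of $\|X\Ind_{\{|X|\geq k\}}\|_\mathcal{R}$ transfers into smallness of $\|\lambda X\Ind_{\{|X|\geq k\}}\|_{c,\mathcal R}$ for any fixed $\lambda,c>0$, which by definition forces $\rho(\lambda|X|\Ind_{\{|X|\geq k\}})\leq c$ eventually. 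For ``$\Rightarrow$'', given $\eps>0$, apply the hypothesis with $\lambda = 1/\eps$ to get $\rho(|X|\Ind_{\{|X|\geq k\}}/\eps) \leq 1$ eventually, i.e., $\|X\Ind_{\{|X|\geq k\}}\|_\mathcal{R} \leq \eps$.

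With this reformulation, the inclusion ``$\supseteq$'' is immediate: if $X$ satisfies the tail condition, the truncations $X_k := X\Ind_{\{|X|<k\}}$ lie in $\Linfty$, and $\|X - X_k\|_\mathcal{R} = \|X\Ind_{\{|X|\geq k\}}\|_\mathcal{R} \to 0$, so $X \in \Mrho = \cl_{\|\cdot\|_\mathcal{R}}(\Linfty)$.

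The reverse inclusion reduces to showing that $A := \{X \in \Lrho \mid \|X\Ind_{\{|X|\geq k\}}\|_\mathcal{R} \to 0\}$ is norm-closed in $\Lrho$, since $A$ obviously contains $\Linfty$ (the indicators vanish as soon as $k > \|X\|_\infty$). The closedness is the main obstacle I anticipate. For $(X_n) \subseteq A$ with $X_n \to X$ in norm, I would decompose
\[
X\Ind_{\{|X|\geq k\}} = (X-X_n)\Ind_{\{|X|\geq k\}} + X_n\Ind_{\{|X|\geq k\}\cap\{|X_n|\geq k/2\}} + X_n\Ind_{\{|X|\geq k\}\cap\{|X_n|<k/2\}}.
\]
On the third summand, $|X|\geq k > 2|X_n|$ forces $|X_n| \leq |X|-|X_n| \leq |X-X_n|$, so by the solidity of the Minkowski norm (Proposition~\ref{R12}(iv)) the first and third summands both have $\|\cdot\|_\mathcal{R}$ at most $\|X-X_n\|_\mathcal{R}$. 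This gives
\[
\|X\Ind_{\{|X|\geq k\}}\|_\mathcal{R} \leq 2\|X-X_n\|_\mathcal{R} + \|X_n\Ind_{\{|X_n|\geq k/2\}}\|_\mathcal{R},
\]
and given $\eps>0$ I would first pick $n$ making the first term $<\eps/4$ and then, using $X_n \in A$, take $k$ large enough to bound the second term by $\eps/2$. The decomposition isolating the region where $X$ is large but $X_n$ is not is the key technical trick; once in place, solidity finishes the argument.
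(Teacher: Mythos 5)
Your proof is correct, but it follows a genuinely different route from the paper's. For the inclusion $\Mrho\subseteq\{X\mid \forall\lambda>0:\lim_k\rho(\lambda|X|\Ind_{\{|X|\geq k\}})=0\}$ the paper argues analytically: it invokes the continuity of $\tilrho$ on $\heart$ from Proposition~\ref{R17} (which rests on the Ruszczy\'nski--Shapiro continuity theorem), approximates $X$ by some $Y\in\Linfty$, and uses continuity from above to make $\|\lambda Y\Ind_{\{|X|\geq k\}}\|_{\mathcal R}$ small, bounding $|X|\Ind_{\{|X|\geq k\}}$ by the two-term majorant $|X-Y|\Ind_{\{|X|\geq k\}}+|Y|\Ind_{\{|X|\geq k\}}$. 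You instead work entirely at the level of the norm: after the (correct) reformulation of the tail condition as $\|X\Ind_{\{|X|\geq k\}}\|_{\mathcal R}\to 0$ via Proposition~\ref{R12}(i) and the definition of the gauge, you show the resulting set is norm-closed and contains $\Linfty$, the key device being the three-term decomposition that isolates the region $\{|X|\geq k\}\cap\{|X_n|<k/2\}$ where $|X_n|\leq|X-X_n|$, so that solidity (Proposition~\ref{R12}(iv)) absorbs that piece into $\|X-X_n\|_{\mathcal R}$. Your argument is more elementary and more general: it uses only that $\Norm_{\mathcal R}$ is a solid lattice norm, and is essentially the classical proof that the subspace of elements with order-continuous norm in a Banach function space is closed; it bypasses Proposition~\ref{R17} and any continuity-from-above step. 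The paper's proof is shorter given the machinery already in place and exhibits explicitly how the continuity of $\tilrho$ on $\heart$ drives the characterisation. One cosmetic point: in the ``$\Leftarrow$'' direction of your reformulation you should note that $\rho(|\cdot|)\geq\rho(0)=0$ by normalisation and monotonicity, so that being eventually below every $c>0$ indeed yields limit zero; this is immediate but worth a half-sentence.
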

\begin{proof}
Let $X\in\Mrho$ and $\lambda, \eps>0$ be arbitrary. Let $\delta>0$ be such that $\|Y\|_\mathcal R\leq \delta$, $Y\in\heart$, implies $\rho(|Y|)=\tilrho(|Y|)\leq \eps$. This is possible due to Proposition \ref{R17}. Choose now $Y\in \Linfty$ such that $\|\lambda(X-Y)\|_{\cal R}\leq \frac{\delta}2$ and $k\in\Nat$ such that $\|\lambda Y\Ind_{\{|X|\geq k\}}\|_{\cal R}\leq\frac{\delta}2$, the latter being due to continuity from above. Then $Z:=|X-Y|\Ind_{\{|X|\geq k\}}+|Y|\Ind_{\{|X|\geq k\}}$ satisfies $\|\lambda Z\|_{\cal R}\leq \delta$, 
and by monotonicity $\rho(\lambda|X|\Ind_{\{|X|\geq k\}})\leq\tilrho(\lambda Z)\leq \eps$. The converse inclusion above is obvious. 
\end{proof}

\noindent
As $\heart$ is closed, the set of directions along whose absolute value $\tilrho$ attains the value infinity is thus norm-open. In particular, we can only approximate such vectors with sequences of vectors along which $\tilrho$ behaves equally discontinuous, and limits of well-behaved financial positions are equally well-behaved. Hence shifting to $\Lrho$ yields a structure which conveniently separates regimes of ``good'' and ``bad'' risk behavior. 
In that respect consider the set $C^{\mathcal R}:=\dom(\tilrho)\backslash\heart\subseteq\Lrho$. $C^{\mathcal R}$ is the set of ``less bad'' positions, and shields $\heart$ from the financial positions that carry infinite risk. It has a nice interpretation in terms of liquidity risk in the sense of Lacker \cite{15}. In that paper the author considers liquidity risk profiles, i.e.\ curves of the form $\tilrho(t X)_{t\geq 0}$ capturing how risk scales when increasing the leverage. $C^{\mathcal R}$ consists of financial positions $X$ such that the liquidity risk profiles of $X^+$ or $X^-$ breach the infinite risk regimes. Whereas an agent could at least hypothetically hedge any position in $\heart$ at finite cost, no matter what the leverage, she has to be very careful in the case of elements in $C^{\mathcal R}$ that have finite risk themselves but which produce potentially completely non-hedgeable losses under incautious scaling. \\
Recalling that for any $X\in \Lrho$ there is $\lambda>0$ such that $\rho(|X|/\lambda)<\infty$, we obtain that $C^{\mathcal R}=\emptyset$ if and only if $\heart=\Lrho$, and $\tilrho$ is continuous. 
Moreover, if $\heart\subsetneq\Lrho$, both $\heart$ and $\Mrho$ 
are nowhere dense (as true subspaces of $\Lrho$) and---by Baire's Theorem---$C^{\mathcal R}\cup\{\tilrho=\infty\}$ is a dense open set.\\
Note that the inclusions $\Mrho\subseteq \heart\subseteq\Lrho$ can all be strict, as is illustrated by Example~\ref{inclusions}.

\begin{remark}\label{closureextension} Having introduced $\Mrho$ we can now discuss the extension given by the norm closure operation \eqref{clA}. Seen as a subset of $\Lrho$, $\overline\acc$ is unfortunately not an acceptance set in the sense of Definition~\ref{R1}, since $X\leq Y$ and $Y\in\overline\acc$ does not necessarily imply $X\in\overline\acc$, so the monotonicity property is violated. However, one can show that 
$\overline{\mathcal R}:=(\overline\acc, \Scal,\price)$ is a risk measurement regime on the Banach lattice $\Mrho$. 
By Proposition \ref{R17} it follows that  $\rho_{\overline{\cal A}}(X)=\tilrho(X)=\eta(X)$ for all $X\in\Mrho$, and $\rho_{\overline{\cal A}}$ is continuous on $\Mrho$. 
\end{remark} 

\subsection{The dual of the Minkowski domain}\label{sec:dual}

In this short interlude we discuss a few properties of the norm dual $(\Dualrho,\Norm_{\mathcal R*})$ of $(\Lrho,\Norm_{\mathcal R})$, the space of continuous linear functionals on the Minkowski domain, which will be essential when we study subgradients in Section~\ref{sec:subgrad}. 
\begin{theorem}\label{R22}
$\Dualrho$ is the direct sum of two subspaces $CA$ and $PA$, i.e. 
\begin{center}$\Dualrho=CA\oplus PA.$\end{center}
Elements in $CA$ have the shape $X\mapsto \int X\,d\mu$ for a unique $\mu\in\countable_\Prob$. $\lambda\in PA$ are characterised by $\lambda|_{\Mrho}=0$. For $\ell=\mu\oplus\lambda$,\footnote{ We shall stick to the abuse of notation of identifying functionals in $CA$ with the unique measure $\mu\in\countable_\Prob$ in their integral representation.} $\mu$ is the \textsc{regular part} of $\ell$ and $\lambda$ the \textsc{singular part}. Moreover, $\Linfty$ can be identified with a subspace of $\Dualrho$. 
\end{theorem}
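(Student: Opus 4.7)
The plan is to obtain the decomposition by a Yosida--Hewitt-style argument that exploits two facts we already have at our disposal: the continuous embedding $\Linfty\hookrightarrow\Lrho$ from Proposition \ref{R12}(i), together with the fact that $\Rho$ is continuous from above, and the solidity of the norm from Proposition \ref{R12}(iv). Given $\ell\in\Dualrho$, the first step is to restrict $\ell$ to $\Linfty$: since the inclusion $\Linfty\hookrightarrow\Lrho$ is continuous, $\ell|_{\Linfty}$ is $\|\cdot\|_\infty$-bounded, so Riesz representation on $\Linfty$ yields a unique $\nu\in\bounded_\Prob$ with $\ell(X)=\int X\,d\nu$ for $X\in\Linfty$.

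Next, I would promote $\nu$ from $\bounded_\Prob$ to $\countable_\Prob$. For any sequence $A_n\downarrow\emptyset$ in $\Fcal$, continuity from above of $\Rho$ gives $\Rho(\lambda\Ind_{A_n})\downarrow 0$ for every $\lambda>0$, so $\rho(\lambda\Ind_{A_n})\to 0$ and hence $\|\Ind_{A_n}\|_{\mathcal R}\to 0$. Consequently $\nu(A_n)=\ell(\Ind_{A_n})\to 0$, which is exactly countable additivity. So $\ell|_{\Linfty}$ corresponds to an element of $\countable_\Prob$, which is how we will represent the regular part.

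The crucial step is to show that the prescription $\mu(X):=\int X\,d\nu$ extends to a continuous linear functional on all of $\Lrho$. For $X\in\Lrho_+$ and $n\in\Nat$, solidity of $\Norm_{\mathcal R}$ gives $\|X\wedge n\|_{\mathcal R}\leq\|X\|_{\mathcal R}$, hence
\[\int (X\wedge n)\,d\nu=\ell(X\wedge n)\leq \|\ell\|_{\mathcal R\ast}\|X\|_{\mathcal R}\]
uniformly in $n$. Monotone convergence then shows $\int X\,d\nu=\lim_n\ell(X\wedge n)\leq\|\ell\|_{\mathcal R\ast}\|X\|_{\mathcal R}$, so the integral is finite. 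Splitting arbitrary $X\in\Lrho$ into positive and negative parts yields a continuous linear functional $\mu$ on $\Lrho$ represented by $\nu$. I expect this integrability argument (together with extracting it from $\nu$ without assuming $\nu\in\dom(\Rho^*)$) to be the main technical obstacle, since $\nu$ need not lie in $\dom(\Rho^*)$, so Proposition \ref{R12}(ii) does not directly apply---the solidity/monotone convergence combination is what replaces it. Define $CA:=\{\mu\in\Dualrho\mid\mu=\int\cdot\,d\nu,\,\nu\in\countable_\Prob\}$ and set $\lambda:=\ell-\mu\in\Dualrho$.

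Uniqueness of the decomposition $\ell=\mu\oplus\lambda$ follows because any two candidate regular parts agree on $\Linfty$ by the uniqueness in the Riesz representation there, hence (being regular) they coincide on $\Lrho$. Since $\lambda|_{\Linfty}=0$ by construction, continuity forces $\lambda|_{\Mrho}=0$; conversely any $\lambda'\in\Dualrho$ with $\lambda'|_{\Mrho}=0$ has trivial regular part (its restriction to $\Linfty$ represents the zero measure), so $PA=\{\lambda'\in\Dualrho\mid\lambda'|_{\Mrho}=0\}$ characterises the singular subspace. Finally, for the embedding $\Linfty\hookrightarrow\Dualrho$, use the weak reference measure: for $\varphi\in\Linfty$ and $X\in\Lrho$, set $\ell_\varphi(X):=\int X\varphi\,d\Prob$; the bound proved inside Theorem \ref{lem:tildeA} applied to $\gamma\Prob\in\dom(\Rho^*)$ yields $|\ell_\varphi(X)|\leq\gamma^{-1}\|\varphi\|_\infty(1+\Rho^*(\gamma\Prob))\|X\|_{\mathcal R}$, so $\ell_\varphi\in\Dualrho$, and injectivity is immediate from testing against indicators.
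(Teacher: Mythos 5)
Your proposal follows essentially the same route as the paper's proof: restrict $\ell$ to $\Linfty$ to get a finitely additive $\nu$, upgrade to countable additivity via continuity from above (the $\|\Ind_{A_n}\|_{\mathcal R}\to 0$ argument is verbatim the paper's), show $\int\cdot\,d\nu$ is bounded on $\Lrho$ via solidity of the norm and monotone convergence, and define $PA$ as the functionals vanishing on $\Mrho$. The identification of $\Linfty$ with a subspace of $\Dualrho$ via $\gamma\Prob\in\dom(\Rho^*)$ is also equivalent to the paper's use of the inclusion $\Lrho\subseteq L^1_\Prob$ together with \eqref{16}.

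There is, however, one step that does not work as written. In the boundedness argument you take $X\in\Lrho_+$, bound $\int(X\wedge n)\,d\nu=\ell(X\wedge n)\leq\|\ell\|_{\mathcal R*}\|X\|_{\mathcal R}$ uniformly in $n$, and invoke monotone convergence. But $\nu$ is in general a \emph{signed} measure, so $n\mapsto\int(X\wedge n)\,d\nu$ need not be monotone, and a uniform bound on this difference of two increasing quantities does not rule out $\int X\,d\nu^+=\int X\,d\nu^-=\infty$, in which case $\int X\,d\nu$ is not even defined. You need to reduce to the positive case first. The paper does this by observing that $\Dualrho$ is a Banach lattice and $\ell\mapsto\mu_\ell$ is positive and linear, so it suffices to treat $\ell\in\Dualrho_+$ (for which $\nu\geq 0$ and your argument is exactly right). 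Alternatively, a Hahn decomposition $\Omega=P\sqcup N$ for $\nu$ repairs your version directly: $\int(X\wedge n)\,d\nu^{+}=\ell\bigl((X\wedge n)\Ind_P\bigr)\leq\|\ell\|_{\mathcal R*}\|X\|_{\mathcal R}$ by solidity, monotone convergence applies to each of $\nu^{+}$ and $\nu^{-}$ separately, and the finiteness of $\int X\,d|\nu|$ follows. With that repair the remainder of your argument (uniqueness of the decomposition, the characterisation of $PA$ via density of $\Linfty$ in $\Mrho$) is sound and matches the paper.
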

\begin{proof}
Let $\ell\in\Dualrho$ and consider the additive set function $\mu=\mu_\ell:\mathcal F\to\Reals$, $\mu(A):=\ell(\Ind_A)$. It is straightforward to prove that $\mu\in\bounded_{\Prob}$ and that it is unique, given $\ell$. Let now $(A_n)_{n\in\Nat}\subseteq\mathcal F$ be a vanishing sequence of sets. For all $\lambda>0$ continuity from above implies 
\begin{center}$\lim_{n\to\infty}\rho(\lambda^{-1}\Ind_{A_n})=\rho(0)=0,$\end{center}
which reads as $\lim_n\|\Ind_{A_n}\|_{\mathcal R}=0$ and thus $\lim_n\mu(A_n)=\lim_n\ell(\Ind_{A_n})=0$. Hence $\mu\in\countable_\Prob$.\\
We will now show that the linear functional $X\mapsto\int X\,d\mu$ is bounded. To this end, note first that by its definition, $\ell(X)=\int X\,d\mu_\ell$ holds for all $X\in\Linfty$. Moreover, by \cite[Theorem 7.46]{Ali}, $\Dualrho$ is a Banach lattice in its own right, and the mapping $\ell\mapsto\mu_\ell$ is positive and linear in $\ell$, hence it suffices to show $X\mapsto \int X\,d\mu\in\Dualrho$ is bounded for $\mu=\mu_\ell\in(\countable_\Prob)_+$, $\ell\in\Dualrho_+$. Let $X\in\Lrho_+$ be arbitrary.
\[\int (X\wedge n)d\mu=\left|\ell(X\wedge n)\right|\leq\|\ell\|_{\mathcal R *}\|(X\wedge n)\|_{\mathcal R}\leq\|\ell\|_{\mathcal R*}\|X\|_{\mathcal R},\]
where the last inequality follows from Proposition \ref{R12}(iv). We apply the Monotone Convergence Theorem and obtain $\int X\,d\mu\leq\|\ell\|_{\mathcal R *}\|X\|_{\cal R}$. 
For a general $X\in\Lrho$, we get
\begin{align*}\left|\int X\,d\mu\right|&\leq\int |X|d\mu\leq\|\ell\|_{\mathcal R *}\| |X| \|_{\cal R}=\|\ell\|_{\mathcal R *}\|X\|_{\cal R}.
\end{align*}
$X\mapsto\int X\,d\mu\in\Dualrho$ follows, and from $\Linfty$ being dense in $\Mrho$, $\ell|_{\Mrho}=\int \cdot\,d\mu|_{\Mrho}$ has to hold. Let $CA:=\{\int\cdot\,d\mu_{\ell}\mid \ell\in\Dualrho\}$, which is a subspace of $\Dualrho$. For $\ell\in\Dualrho$, let $\lambda:=\ell-\int\cdot\,d\mu\in\Dualrho$, which satisfies $\lambda|_{\Mrho}=0$. Clearly, $\ell=\int \cdot\,d\mu+\lambda$ is a unique decomposition of $\ell$ as a sum of elements in $CA$ and $PA$.\\
If $Z\in\Linfty$, the inclusion $\Lrho\subseteq L^1_{\Prob}$, H\"older's inequality and \eqref{16} yield $\Lrho\ni X\mapsto\Erw_\Prob[ZX]$ is well-defined and continuous, i.e. $\Erw_\Prob[Z\cdot]\in\Dualrho$.
\end{proof}

\noindent $CA$ stands for ``countably additive'', $PA$ for ``purely additive''.\ One can show that $CA$ is a closed subspace of $\Dualrho$. The following corollary is a direct consequence of Theorem~\ref{R22}.
\begin{corollary}\label{singularities}For all $\lambda\in PA$, $X\in\Lrho$ and $r>0$, we have the identity
\[\lambda(X)=\lambda(X\Ind_{\{|X|\geq r\}}).\]
Moreover, if $\ell=\mu\oplus\lambda\in\Dualrho$, $\lim_{r\to\infty}\ell(X\Ind_{\{|X|\geq r\}})=\lambda(X)$ holds for all $X\in\Lrho$.
\end{corollary}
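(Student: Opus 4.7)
The plan is to leverage the direct sum decomposition $\Dualrho=CA\oplus PA$ established in Theorem~\ref{R22}, together with the facts that purely additive functionals annihilate $\Mrho\supseteq\Linfty$ and that the regular part $\mu$ defines a bounded linear functional on $\Lrho$ via integration.

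For the first identity, I would pick $r>0$ and split $X=X\Ind_{\{|X|<r\}}+X\Ind_{\{|X|\geq r\}}$. Both summands lie in $\Lrho$ since $\Lrho$ is solid by Proposition~\ref{R12}(iv). The first summand is bounded in absolute value by $r$, so it lies in $\Linfty\subseteq\Mrho$, and by Theorem~\ref{R22} we have $\lambda|_{\Mrho}=0$. Linearity of $\lambda$ then yields $\lambda(X)=\lambda(X\Ind_{\{|X|\geq r\}})$ at once.

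For the second identity, write $\ell=\mu\oplus\lambda$ and evaluate on $X\Ind_{\{|X|\geq r\}}\in\Lrho$ to obtain
\[\ell(X\Ind_{\{|X|\geq r\}})=\int X\Ind_{\{|X|\geq r\}}\,d\mu+\lambda(X\Ind_{\{|X|\geq r\}})=\int X\Ind_{\{|X|\geq r\}}\,d\mu+\lambda(X),\]
where the second equality applies the first part of the statement to the purely additive component. It therefore suffices to show $\int X\Ind_{\{|X|\geq r\}}\,d\mu\to 0$ as $r\to\infty$. The proof of Theorem~\ref{R22} guarantees that $\mu\in\countable_{\Prob}$ and that $\int\cdot\,d\mu$ is bounded on $\Lrho$, so $|X|$ is $\mu$-integrable. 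Since $\mu\ll\Prob$ and elements of $\Lrho$ are $\Prob$-a.s.\ finite, we have $|X|<\infty$ $\mu$-a.s., hence $X\Ind_{\{|X|\geq r\}}\to 0$ $\mu$-a.s.\ as $r\to\infty$. Dominated convergence with dominating function $|X|$ closes the argument.

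No serious obstacle is expected; the statement essentially unpacks the structural content of Theorem~\ref{R22}. The only delicate point is to justify $\mu$-integrability of $X$, but this is already bundled into the construction establishing $CA\subseteq\Dualrho$.
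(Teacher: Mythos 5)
Your proof is correct and follows exactly the route the paper intends: the paper offers no written proof, calling the corollary a ``direct consequence'' of Theorem~\ref{R22}, and the intended argument is precisely your splitting $X=X\Ind_{\{|X|<r\}}+X\Ind_{\{|X|\geq r\}}$ with $\lambda$ killing the bounded part, followed by dominated convergence for the countably additive part. The integrability of $|X|$ with respect to $\mu$ is indeed secured in the proof of Theorem~\ref{R22}, so there is no gap.
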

\noindent Theorem~\ref{R22} implies another characterisation of $\tilrho$. 
\begin{corollary}\label{EQUALITY}
Consider the following two classes of extensions of $\Rho$ to $\Lrho$:
\begin{center}$\mathcal E_1=\{g:\Lrho\to(-\infty,\infty]\mid g$ convex, $\sigma(\Lrho,CA)$-l.s.c., $g|_{\Linfty}=\Rho\}$,\\
$\mathcal E_2:=\{g:\Lrho\to(-\infty,\infty]\mid g$ monotone, $g=\sup_{m\in\Nat}g(\cdot\wedge m),\,g|_{\Linfty}=\Rho\}$.\end{center}
Then $\tilrho$ is maximal both in $\mathcal E_1$ and $\mathcal E_2$, i.e. $g\in\mathcal E_i$ implies $g\leq \tilrho$. 
\end{corollary}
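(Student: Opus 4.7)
The plan is to treat the two classes separately, using the monotone representation from Theorem~\ref{mainresult} for $\mathcal E_2$ and Fenchel--Moreau duality in the pairing $\langle\Lrho,CA\rangle$ for $\mathcal E_1$.

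For $\mathcal E_2$, fix $g$ in this class and $X\in\Lrho$. For any $m,n\in\Nat$ the double cutoff $(X\wedge m)\vee(-n)$ is bounded, hence lies in $\Linfty$, where $g$ agrees with $\Rho$. Monotonicity of $g$ together with $X\wedge m\leq(X\wedge m)\vee(-n)$ yields $g(X\wedge m)\leq\Rho((X\wedge m)\vee(-n))$. Taking first the infimum over $n$ and then the supremum over $m$, and invoking the approximation identity defining $\mathcal E_2$, I obtain
\[g(X)=\sup_{m\in\Nat}g(X\wedge m)\leq\sup_{m\in\Nat}\inf_{n\in\Nat}\Rho((X\wedge m)\vee(-n))=\xi(X)=\tilrho(X),\]
the last equality by Theorem~\ref{mainresult}.

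For $\mathcal E_1$, fix $g$ in this class. I would first verify that $\langle\Lrho,CA\rangle$ is a separating dual pairing: Theorem~\ref{R22} embeds $\Linfty$ into $CA$ through $Z\mapsto\Erw_\Prob[Z\,\cdot]$, which combined with $\Lrho\subseteq L^1_\Prob$ shows that $CA$ separates the points of $\Lrho$. Consequently, the topological dual of $(\Lrho,\sigma(\Lrho,CA))$ is exactly $CA$. Since $g$ maps into $(-\infty,\infty]$ and coincides with the finite function $\Rho$ on $\Linfty$, it is proper, and the Fenchel--Moreau Theorem yields
\[g(X)=\sup_{\mu\in CA}\int X\,d\mu-g^*(\mu),\quad X\in\Lrho,\qquad g^*(\mu):=\sup_{Y\in\Lrho}\int Y\,d\mu-g(Y).\]
Restricting the defining supremum of $g^*$ to $Y\in\Linfty$ and using $g|_{\Linfty}=\Rho$ delivers the key estimate $g^*(\mu)\geq\Rho^*(\mu)$ for every $\mu\in CA$, so $\dom(g^*)\subseteq\dom(\Rho^*)$. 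Combined with $\dom(\Rho^*)\subseteq CA$ (Theorem~\ref{lem:tildeA} shows each $\mu\in\dom(\Rho^*)$ induces a bounded linear functional on $\Lrho$) and the dual representation \eqref{eq:g2},
\[g(X)=\sup_{\mu\in\dom(g^*)}\int X\,d\mu-g^*(\mu)\leq\sup_{\mu\in\dom(\Rho^*)}\int X\,d\mu-\Rho^*(\mu)=\tilrho(X).\]

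The main point of care is the Fenchel--Moreau step, which requires the compatibility between the topology $\sigma(\Lrho,CA)$ and the dual $CA$; once this is secured via Theorem~\ref{R22}, the remainder is just a domination of conjugates. The $\mathcal E_2$ bound is essentially bookkeeping on bounded truncations together with Theorem~\ref{mainresult}.
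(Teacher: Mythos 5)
Your proof is correct and follows essentially the same route as the paper: Fenchel--Moreau in the $\langle\Lrho,CA\rangle$ pairing plus domination of conjugates for $\mathcal E_1$, and monotone truncation reduced to $\xi=\tilrho$ from Theorem~\ref{mainresult} for $\mathcal E_2$. The only difference is that you spell out the verification that $CA$ separates points of $\Lrho$ and that $g$ is proper, which the paper leaves implicit.
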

\begin{proof}
First assume $g\in\mathcal E_1$. By the Fenchel-Moreau Theorem (c.f.~\cite[Proposition 4.1]{Ekeland}) $g$ has a dual representation
\[g(X)=\sup_{\mu\in CA}\int X\,d\mu-g^*(\mu),\quad X\in\Lrho,\]
where $g^*(\mu)=\sup_{X\in\Lrho}\int X\,d\mu-g(X)$. By $g|_{\Linfty}=\Rho$, we have $\dom(g^*)\subseteq \dom(\Rho^*)$ and $g^*(\mu)\geq \Rho^*(\mu)$ for all $\mu\in \dom(\Rho^*)$.  Hence, for $X\in\Lrho$ arbitrary, we have 
\[g(X)\leq\sup_{\mu\in\dom(\Rho^*)}\int X\,d\mu-g^*(\mu)\leq \sup_{\mu\in\dom(\Rho^*)}\int X\,d\mu-\Rho^*(\mu)=\tilrho(X).\]
For the second claim, let $g\in\mathcal E_2$ and let $X\in\Lrho$ be arbitrary. Monotonicity of $g$ allows for the following estimate:
\[g(X)=\sup_{m\in\Nat}g(X\wedge m)\leq\sup_{m\in\Nat}\inf_{n\in\Nat}\underbrace{g((-n)\vee X\wedge m)}_{=\Rho((-n)\vee X\wedge m)}=\xi(X)=\tilrho(X).\]
\end{proof}

\subsection{Subgradients over the Minkowski domain}\label{sec:subgrad}
In this section we will study subgradients of $\tilrho$ and $\eta$, and how to ensure that subgradients correspond to measures on $(\Omega, \Fcal)$. Given Theorem~\ref{R22}, it does not seem surprising that this is not always the case. The reason for also considering subgradients of $\eta$ is that existence of regular subgradients of $\eta$ and $\tilrho$ is closely related to the question \eqref{eq:eta:eq:tilrho}, and the developed results pave the way for the proof of Theorem~\ref{thm:tilrho:eta}.
\begin{definition}
Let $(\mathcal X,\tau)$ be a topological vector space with dual space $\Xcal^*$. Given a proper convex function $f:\Xcal\to(-\infty,\infty]$, the \textsc{subgradient} of $f$ at $X\in \Xcal$ is the set
\begin{align*}\partial f(X)&:=\{\ell\in\Xcal^*\mid \forall Y\in \Xcal: f(Y)\geq  f(X)+\ell(Y-X)\}\\ & =\{\ell\in\Xcal^*\mid f(X)=\ell(X)-f^*(\ell)\},\end{align*}  where $f^\ast(\ell):=\sup_{X\in \Xcal} \ell(X)-f(X)$, $\ell \in \Xcal^\ast$.
\end{definition}
\noindent Note that if a convex function $f:\Lrho\to(-\infty,\infty]$ is additionally monotone and $\Scal$-additive, its subgradients will be positive functionals in $\Dualrho_+$ that agree with $\price$ on $\Scal$. \\
In the study of risk measures subgradients play an important role, for instance as pricing rules in equilibria. The following easy example serves as an economic motivation.
\begin{example}[Optimal investment]
For some capital constraint $c>0$ and some linear pricing rule $\ell\in\Dualrho_+$ consider the following optimisation problem: 
\[(\ast)\quad \tilrho(Y)\to\min, \quad \mbox{over all} \; Y\in\Lrho\tn{ with }\ell(-Y)\leq c.\]
In order to solve this, by monotonicity, we can without loss of generality focus on $Y$ satisfying $\ell(-Y)=c$. If $X\in\Lrho$ satisfies $\ell\in \partial\tilrho(X)$ and $\ell(-X)=c$, then $X$ solves ($\ast$). Indeed for all $Y\in\ell^{-1}(\{-c\})$, we have
\[\tilrho(X)=\tilrho(X)+\ell(Y-X) -\ell(Y)-c\leq\tilrho(Y)+\ell(-Y)-c=\tilrho(Y).\]  
\end{example}

\noindent An important feature of the space $(\Lrho, \Norm_{\cal R})$ is that $\dom(\tilrho)$ possesses a particularly rich interior, see Proposition~\ref{thm:eta}. Thus we have the following result: 
\begin{theorem}\label{thm:subdiff:general}
Suppose $X\in \tn{int}\,\dom(\tilrho)$, so in particular if $X\in\Gamma$, then $\partial\tilrho(X)\neq \emptyset$. Also $\partial\eta(Y)\neq\emptyset$ whenever $Y\in\Gamma$. 
\end{theorem}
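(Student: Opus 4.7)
The plan is to reduce both assertions to a single classical fact from convex analysis: a proper convex function on a Banach space that is continuous at a point $X_0$ of its effective domain has non-empty subdifferential at $X_0$. This is a standard Hahn--Banach separation argument (separate the epigraph of the function from a small open ball strictly below the graph at $X_0$, which is possible precisely because continuity provides such a ball), and it produces a continuous supporting linear functional that lies in $\partial f(X_0)$.

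First I would check that both $\tilrho$ and $\eta$ fit this framework. Both are proper convex functions on the Banach lattice $(\Lrho, \Norm_{\mathcal R})$: properness is immediate since $\tilrho|_{\Linfty} = \Rho$ is real-valued by Theorem~\ref{lem:tildeA}, and similarly $\eta|_{\Linfty} = \Rho$ is finite, so $\Linfty \subseteq \dom(\tilrho) \cap \dom(\eta)$. Convexity and monotonicity are established in Theorem~\ref{lem:tildeA} and Proposition~\ref{thm:eta}, respectively. Hence both functionals satisfy the hypotheses of Remark~\ref{shapiro}(v) (Shapiro's theorem): a proper convex monotone functional on a Banach lattice is automatically continuous on the interior of its effective domain. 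Consequently $\tilrho$ is continuous at every $X \in \tn{int}\,\dom(\tilrho)$, and $\eta$ is continuous at every $Y \in \tn{int}\,\dom(\eta)$.

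Applying the continuity-implies-subdifferentiability fact then yields $\partial\tilrho(X)\neq\emptyset$ for any $X\in\tn{int}\,\dom(\tilrho)$ and $\partial\eta(Y)\neq\emptyset$ for any $Y\in\tn{int}\,\dom(\eta)$. The characterisation $\tn{int}\,\dom(\eta)=\Gamma$ and the inclusion $\Gamma\subseteq\tn{int}\,\dom(\tilrho)$ were both established in Proposition~\ref{thm:eta}, so the statement of the theorem follows verbatim.

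I do not foresee any serious obstacles: the entire argument is essentially a bookkeeping exercise assembling Remark~\ref{shapiro}(v), Proposition~\ref{thm:eta}, and the textbook fact that continuity of a proper convex function at an interior domain point produces a supporting hyperplane. The only mildly subtle point is ensuring that the supporting functional obtained lies in the (norm) dual $\Dualrho$ rather than in a larger algebraic dual, but this is automatic from the continuity of $\tilrho$ (respectively $\eta$) at the point in question, since the hyperplane that separates the epigraph from a norm-open ball below the graph is itself continuous.
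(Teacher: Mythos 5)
Your proposal is correct and follows essentially the same route as the paper: both invoke the fact (Ruszczy\'nski--Shapiro, via Remark~\ref{shapiro}(v)) that a proper convex monotone function on a Banach lattice is subdifferentiable on the interior of its domain, and then apply Theorem~\ref{lem:tildeA} and the identity $\Gamma=\tn{int}\,\dom(\eta)\subseteq\tn{int}\,\dom(\tilrho)$ from Proposition~\ref{thm:eta}. Your extra step of factoring the argument through continuity plus the Hahn--Banach supporting-hyperplane construction is just an unpacking of that cited result, not a different proof.
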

\begin{proof}
It is well-known that a convex, proper and monotone function $f$ on a Banach lattice is subdifferentiable at every point in $\tn{int}\, \dom (f)$, see \cite[Proposition 1]{Shapiro}. The claim thus follows from Theorem~\ref{lem:tildeA} and Proposition~\ref{thm:eta}.
\end{proof}

\noindent 
We devote the remainder of this subsection to the question under which conditions $\partial\tilrho(X)$ will contain regular (that is $\sigma$-additive) elements.\footnote{ There are immediate---however very strong---sufficient conditions for this to happen, e.g. $\Dualrho\subseteq \countable_\Prob$, which is the case if and only if $\Mrho=\Lrho$, or continuity of $\tilrho$ with respect to the $\sigma(\Lrho,CA)$-topology.} To this end, note that by \eqref{apriori} we have that $\eta^*\leq\tilrho^*$, which implies $\dom(\tilrho^*)\subseteq\dom(\eta^*)$. Moreover, $CA\cap \dom (\tilrho^*)\subseteq CA\cap\dom(\eta^*)\subseteq\dom(\Rho^*)$, so regular subgradients of $\tilrho$ and $\eta$ are necessarily in $\dom(\Rho^*)$. Indeed, if $\mu\in CA\cap \dom (\eta^*)$, then $$\Rho^*(\mu)=\sup_{Y\in\Linfty}\int Y\,d\mu-\eta(Y)\leq \sup_{Y\in\Lrho}\int Y\,d\mu-\eta(Y)= \eta^*(\mu)<\infty.$$
Conversely, for all $Y\in\dom(\eta)$ the definition of $\eta$ and $CA\subseteq\countable_\Prob$ shows for $\mu\in CA$
\begin{align}\label{eq:dualsequal}\begin{split}\int Y\,d\mu-\eta(Y)&=\lim_{n\to\infty}\lim_{m\to\infty}\int Y^m_nd\mu-\Rho(Y^m_n)\\
&\leq\sup_{U\in\Linfty}\int U\,d\mu-\Rho(U)=\Rho^*(\mu).\end{split}\end{align}
This shows that $\eta^*(\mu)=\Rho^*(\mu)$, which provides a first step towards the proof of Theorem~\ref{thm:tilrho:eta}:
\begin{lemma}\label{lem:eta=tilrho} Let $X\in\dom(\eta)$  and suppose that $\mu\oplus \lambda \in \partial\eta(X)$, where $\mu \in CA$ and $\lambda\in PA$. Then $\lambda(X^-)=0$. If, moreover, 
$\lambda=0$, i.e.\ $\mu \in \partial\eta(X)$, then $\eta(X)=\tilrho(X)$. 
\end{lemma}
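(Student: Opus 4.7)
My plan is to extract information from the subgradient inequality along the truncation sequence $Y_n := (-n) \vee X$, which is bounded below, decreases pointwise to $X$, and satisfies $Y_n - X = (X+n)^- \geq 0$. Because $Y_n \geq -n$, the representation \eqref{approx:alt} specialises to $\eta(Y_n) = \tilrho(Y_n)$, and applying \eqref{approx:alt} to $X$ itself gives $\tilrho(Y_n) \downarrow \eta(X)$, so $\eta(Y_n) - \eta(X) \to 0$. Plugging $Y = Y_n$ into the subgradient inequality for $\mu \oplus \lambda \in \partial \eta(X)$ yields
\[
\eta(Y_n) - \eta(X) \; \geq \; \int (X+n)^{-} \, d\mu + \lambda((X+n)^{-}).
\]
By Theorem \ref{R22} every element of $CA$ induces a bounded linear functional on $\Lrho$, so $X^{-} \in \Lrho$ is $\mu$-integrable; since $0 \leq (X+n)^{-} \leq X^{-}$ pointwise and $(X+n)^{-} \to 0$ almost surely, dominated convergence forces $\int (X+n)^{-}\, d\mu \to 0$. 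Passing to the limit, $\limsup_n \lambda((X+n)^{-}) \leq 0$.

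Next I convert this into a statement about $\lambda(X^{-})$. The decomposition $X^{-} = X^{-}\Ind_{\{X^{-} \leq n\}} + X^{-}\Ind_{\{X^{-} > n\}}$ together with $X^{-}\Ind_{\{X^{-} > n\}} = (X+n)^{-} + n\Ind_{\{X^{-} > n\}}$ and the fact that bounded random variables lie in $\Linfty \subseteq \Mrho = \ker \lambda$ show that $\lambda(X^{-}) = \lambda((X+n)^{-})$ for every $n$, hence $\lambda(X^{-}) \leq 0$. The reverse inequality uses that $\ell = \mu\oplus\lambda \in \partial\eta(X)$ is positive because $\eta$ is monotone, and the representation $\lambda(Y) = \lim_{r\to\infty}\ell(Y\Ind_{\{|Y|\geq r\}})$ from Corollary \ref{singularities} then forces $\lambda \geq 0$ on $\Lrho_+$. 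Consequently $\lambda(X^{-}) = 0$.

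For the second assertion, the key ingredient is the identity $\eta^*(\mu) = \Rho^*(\mu)$ for $\mu \in CA$: the bound $\eta^*(\mu) \leq \Rho^*(\mu)$ is precisely \eqref{eq:dualsequal}, while $\Rho^*(\mu) \leq \eta^*(\mu)$ is immediate from $\eta|_{\Linfty} = \Rho$. If $\mu \in \partial\eta(X)$, this gives
\[
\eta(X) = \int X\, d\mu - \eta^*(\mu) = \int X\, d\mu - \Rho^*(\mu) \; \leq \; \sup_{\nu\in\dom(\Rho^*)} \Bigl(\int X\, d\nu - \Rho^*(\nu)\Bigr) = \tilrho(X),
\]
which combined with the \textit{a priori} bound $\tilrho \leq \eta$ from \eqref{apriori} yields equality. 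The step requiring the most care is the bookkeeping that matches $\lambda((X+n)^{-})$ to $\lambda(X^{-})$ and the justification that $X^{-}$ is $\mu$-integrable without knowing \textit{a priori} that $\mu \in \dom(\Rho^*)$, which is ultimately resolved by noting that $\mu \in CA \subseteq \Dualrho$ automatically acts as a bounded linear functional on $\Lrho$.
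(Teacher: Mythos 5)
Your proof is correct, and for the first assertion it takes a genuinely different route from the paper. The paper fixes the subgradient $\mu\oplus\lambda$ and compares it with the modified functional $\tilde\lambda(Y):=\lambda(Y\Ind_{\{X\geq 0\}})$: a chain of inequalities shows $\eta^*(\mu\oplus\tilde\lambda)\leq\eta^*(\mu\oplus\lambda)$, and then the Fenchel--Young equality $\eta(X)=\int X\,d\mu+\lambda(X)-\eta^*(\mu\oplus\lambda)$ forces $\tilde\lambda(X)-\lambda(X)=\lambda(X^-)$ to vanish. You instead test the subgradient inequality directly on the truncations $Y_n=(-n)\vee X$, exploiting that \eqref{approx:alt} makes $\eta(Y_n)=\tilrho(Y_n)\downarrow\eta(X)$ automatic, that $\int(X+n)^-\,d\mu\to 0$ by dominated convergence (legitimate since $X^-\in\Lrho$ is $\mu$-integrable by Theorem~\ref{R22}), and that $\lambda((X+n)^-)=\lambda(X^-)$ because the difference is bounded and $\lambda$ kills $\Mrho$; positivity of $\lambda$ on $\Lrho_+$ (via Corollary~\ref{singularities} and positivity of the subgradient of a monotone function) closes the loop. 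Your argument is arguably more transparent and avoids having to verify that the auxiliary functional $\mu\oplus\tilde\lambda$ is again a continuous dual element with controlled conjugate; the paper's argument, in exchange, shows as a by-product that all the intermediate inequalities are equalities, i.e.\ that $\mu\oplus\tilde\lambda$ is itself a subgradient at $X$, though this is not used later. For the second assertion your argument coincides with the paper's: both rest on the identity $\eta^*(\mu)=\Rho^*(\mu)$ for $\mu\in CA$ from \eqref{eq:dualsequal} together with $\tilrho\leq\eta$ from \eqref{apriori}.
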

\begin{proof} Let $\mu\oplus \lambda \in \partial\eta(X)$. Define $\tilde \lambda$ by $\tilde \lambda(Y)=\lambda(Y\Ind_{\{X\geq 0\}})$, $Y\in \Lrho$. One verifies that $\tilde\lambda\in (\Lrho)^\ast$. Also we have \begin{eqnarray*}\eta^\ast(\mu \oplus \tilde\lambda) &= &\sup_{Y\in \Lrho} \int Y\, d\mu + \lambda (Y\Ind_{\{X\geq 0\}})-\eta (Y)\\ &\leq & \sup_{Y\in \Lrho}\lim_{n\to \infty} \int (-n)\vee Y\, d\mu + \lambda (Y^+)-\eta ((-n)\vee Y)\\  & \leq &  \limsup_{n\to \infty}\sup_{Y\in \Lrho} \int (-n)\vee Y\, d\mu + \lambda ((-n)\vee Y)-\eta ((-n)\vee Y)\\ & \leq & \eta^\ast(\mu\oplus \lambda),\end{eqnarray*} where we used monotonicity  of $\lambda$. Hence, $$\eta(X)=\int X\, d\mu + \lambda (X)-\eta^\ast(\mu\oplus\lambda) \leq \int X\, d\mu + \tilde \lambda (X)-\eta^\ast(\mu\oplus\tilde \lambda)\leq \eta(X),$$ and the first inequality would be strict if $\lambda(X^-)>0$. Thus $\lambda(X^-)=0$ follows. For the last assertion, suppose that $\mu\in \partial\eta(X)\cap CA$. The observations preceding the lemma  and \eqref{apriori}~show
\[\eta(X)=\int X\,d\mu-\Rho^*(\mu)\leq\tilrho(X)\leq \eta(X).\]
\end{proof}
\noindent Consequently, if $\partial \eta(X)\neq \emptyset$, so for instance for $X\in \Gamma$, then $\eta$ may display a ``jump'' $\lambda(X^+)$ produced by the unbounded risk $X^+$. If that jump is not present, then $\eta(X)=\tilrho(X)$. 
In the following we will introduce a weak local continuity assumption, \textit{tail continuity}, which quantifies which tails are not too fat to lead to such jumps. In \cite{4}, a version of it is studied for law-invariant monetary risk measures.

\begin{definition} Let $f:\Lrho\to(-\infty,\infty]$ be monotone and proper, and let $X\in\dom(f)$. We call $f$ \textsc{tail continuous} at $X$ along $Y\in\Lrho$ if $X+Y^+\in\dom(f)$ and  
\[f(X)=\lim_{r\rightarrow\infty}f\left(X+Y\Ind_{\{Y\geq r\}}\right)\]
holds. $\mathcal T^f_X$ denotes the set of tails $Y$ along which $f$ is tail continuous at $X$. With a slight abuse of language, we call $f$ tail continuous at $X$ if $\mathcal T^f_X=\{Y\in\Lrho\mid X+Y^+\in\dom(f)\}$. 
\end{definition}
\noindent Note that $\mathcal T^f_X$ is monotone in that $Y_1\leq Y_2$ $\Prob$-a.s. and $Y_2\in\mathcal T^f_X$ implies $Y_1\in\mathcal T^f_X$. 
The next proposition shows that sufficient tail continuity can eliminate non-$\sigma$-additive elements in the subgradient.  We prove this for general monotone functions $f$, but we clearly have $f=\tilrho$ or $f=\eta$ in mind.
\begin{proposition}\label{prop:subgrad} Let $f:\Lrho\to(-\infty,\infty]$ be proper, monotone, and convex, and let $X\in \dom(f)$. Suppose that $\{sY\mid s\geq 0, Y\in \mathcal T^{f}_X\}$ is norm-dense (or equivalently $\mathcal T^{f}_X$ separates the points of $\Dualrho$). Then $\partial f(X)\subseteq CA$. In particular, if $f$ is tail continuous at $X\in\tn{int}\,\dom(f)$, then $\partial f(X)\subseteq CA$.
\end{proposition}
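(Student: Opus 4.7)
My plan is to pick an arbitrary $\ell=\mu\oplus\lambda\in\partial f(X)$ with $\mu\in CA$ and $\lambda\in PA$ and show that $\lambda$ must vanish. The first observation is that monotonicity of $f$ forces $\ell\in\Dualrho_+$: for $W\geq 0$, the subgradient inequality applied at the point $X-W$ gives $f(X-W)\geq f(X)-\ell(W)$, and since $f(X-W)\leq f(X)$ by monotonicity, we obtain $\ell(W)\geq 0$.

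The core estimate I would extract concerns positive tails of elements in $\mathcal T^f_X$. For a fixed $Y\in\mathcal T^f_X$ and $r>0$, set $Z_r:=Y\Ind_{\{Y\geq r\}}=Y^+\Ind_{\{Y^+\geq r\}}\in\Lrho_+$. On the one hand, the subgradient inequality $\ell(Z_r)\leq f(X+Z_r)-f(X)$ combined with the definition of tail continuity gives $\limsup_{r\to\infty}\ell(Z_r)\leq 0$; on the other, $\ell\geq 0$ and $Z_r\geq 0$ yield $\ell(Z_r)\geq 0$. Hence $\lim_{r\to\infty}\ell(Z_r)=0$. The second assertion of Corollary~\ref{singularities} applied to the vector $Y^+\in\Lrho$ (noting $|Y^+|=Y^+$) identifies this same limit as $\lambda(Y^+)$, yielding $\lambda(Y^+)=0$ for every $Y\in\mathcal T^f_X$.

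To globalise this to $\lambda\equiv 0$, I would first check that $Y\in\mathcal T^f_X$ implies $Y^+\in\mathcal T^f_X$, which is immediate since $X+(Y^+)^+=X+Y^+\in\dom(f)$ and $Y^+\Ind_{\{Y^+\geq r\}}=Y\Ind_{\{Y\geq r\}}$ for $r>0$. Using this together with the $1$-Lipschitz lattice inequality $|A^+-B^+|\leq|A-B|$ (hence $\|A^+-B^+\|_{\mathcal R}\leq\|A-B\|_{\mathcal R}$ in the Banach lattice $\Lrho$), I would show that $\{sY^+\mid s\geq 0,\,Y\in\mathcal T^f_X\}$ is norm-dense in $\Lrho_+$: given $Z\in\Lrho_+$, take a sequence $s_nY_n\to Z$ provided by the hypothesis; then $s_nY_n^+=(s_nY_n)^+\to Z^+=Z$ and each term lies in the desired set. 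Since $\lambda(sY^+)=s\lambda(Y^+)=0$ on this set by the previous step, continuity of $\lambda\in\Dualrho$ forces $\lambda|_{\Lrho_+}=0$, and linearity via $W=W^+-W^-$ then yields $\lambda\equiv 0$, so $\ell=\mu\in CA$. For the ``in particular'' clause, if $X\in\tn{int}\,\dom(f)$ one picks $\delta>0$ with $B_\delta(X)\subseteq\dom(f)$; the bound $\|Y^+\|_{\mathcal R}\leq\|Y\|_{\mathcal R}<\delta$ then places $Y$ in $\mathcal T^f_X$ for every $Y\in B_\delta(0)$, making the density hypothesis automatic.

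The main conceptual obstacle is that tail continuity only controls the positive-tail behaviour of $f$, so the subgradient inequality alone only produces information on $\lambda(Y^+)$ rather than on $\lambda$ directly. The bridging step is the lattice-theoretic observation that $\mathcal T^f_X$ is stable under the positive-part map, which, combined with the Banach lattice structure of $\Lrho$, lets the density hypothesis transfer the partial conclusion to all of $\Lrho_+$, after which linearity closes the argument.
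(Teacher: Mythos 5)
Your proof is correct and follows essentially the same route as the paper's: positivity of the subgradient from monotonicity, the identification $\lim_{r\to\infty}\ell(Y\Ind_{\{Y\geq r\}})=\lambda(Y^+)$ via Corollary~\ref{singularities}, the subgradient inequality combined with tail continuity to force this limit to vanish, and the density hypothesis to conclude $\lambda=0$. You argue directly where the paper argues by contradiction, and you spell out details the paper leaves implicit (stability of $\mathcal T^f_X$ under the positive-part map and the lattice inequality used to transfer density to $\Lrho_+$), but the underlying argument is the same.
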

\begin{proof}
Let $\ell=\mu\oplus\lambda\in\partial f(X)$. Assume $\lambda\neq 0$. The density assumption and monotonicity allows to pick  $Y\in \mathcal T_X^{f}$, $Y\geq 0$, such that $\lambda(Y)>0$. Corollary~\ref{singularities} and $\ell$ being a subgradient together with tail continuity along $Y$ yield the \textsc{contradiction}
\begin{align*}f(X)&<f(X)+\lambda(Y)=\lim_{r\rightarrow\infty}f(X)+\lambda(Y\Ind_{\{Y\geq r\}})\\ 
&=\lim_{r\to\infty}f(X)+\ell(Y\Ind_{\{Y\geq r\}}) = \lim_{r\to\infty}\ell(X)-f^\ast(\ell)+\ell(Y\Ind_{\{Y\geq r\}})\\ & = \lim_{r\to\infty}\ell(X+Y\Ind_{\{Y\geq r\}})-f^\ast(\ell)\leq\liminf_{r\rightarrow\infty}f(X+Y\Ind_{\{Y\geq r\}})=f(X).
\end{align*}
\end{proof}

\noindent Unfortunately, in general we only have tail continuity along $\Mrho$, as is shown in the following Lemma~\ref{lem:1}. As we have already observed, if $\Lrho=\Mrho$, then $\Dualrho=CA$ and therefore trivially $\partial\tilrho(X)\subseteq CA$, so just knowing tail continuity along $\Mrho$ is not sufficient for the existence of countably additive subgradients in non-trivial cases.

\begin{lemma}\label{lem:1} Let $f:\Lrho\to(-\infty,\infty]$ be proper, monotone, and convex such that $\Linfty\subseteq \dom(f)$. If $X\in\tn{int}\,\dom(f)$, then $\Mrho_+- \Lrho_+\subseteq\mathcal T_X^{f}$.
\end{lemma}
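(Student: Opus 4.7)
The plan is to verify, for an arbitrary $Y=Y_1-Y_2\in \Mrho_+-\Lrho_+$, the two defining conditions of $\mathcal T^f_X$: namely $X+Y^+\in\dom(f)$, and $f(X+Y\Ind_{\{Y\geq r\}})\to f(X)$ as $r\to\infty$. Throughout the argument I fix $\delta>0$ with $B_\delta(X)\subseteq\dom(f)$, which exists since $X\in\tn{int}\,\dom(f)$, and I use $f$ being continuous on $\tn{int}\,\dom(f)$ by Remark~\ref{shapiro}(v).

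First I would prove an intermediate amplification: $X+V\in\dom(f)$ for \emph{every} $V\in\Linfty$. For a constant $M$, pick $\alpha\in(0,1)$ close enough to $1$ that $\|X/\alpha-X\|_{\mathcal R}=(1/\alpha-1)\|X\|_{\mathcal R}<\delta$; the convex decomposition
\[
X+M=\alpha(X/\alpha)+(1-\alpha)\bigl(M/(1-\alpha)\bigr)
\]
then exhibits $X+M$ as a convex combination of $X/\alpha\in B_\delta(X)\subseteq\dom(f)$ and $M/(1-\alpha)\in\Linfty\subseteq\dom(f)$, so $f(X+M)<\infty$ by convexity. For a general $V\in\Linfty$, monotonicity and $V\leq\|V\|_\infty$ yield $f(X+V)\leq f(X+\|V\|_\infty)<\infty$.

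Second, to verify $X+Y^+\in\dom(f)$, note that $0\leq Y^+\leq Y_1\in\Mrho$, so solidity of $\Mrho$ (Proposition~\ref{R17}) places $Y^+\in\Mrho_+$, and Corollary~\ref{Mrhocharac} gives $\|Y^+-Y^+\wedge n\|_{\mathcal R}\to 0$. Fixing any $\alpha\in(0,1)$ and choosing $n$ so large that $\|(Y^+-Y^+\wedge n)/\alpha\|_{\mathcal R}<\delta$, the decomposition
\[
X+Y^+=\alpha\bigl[X+(Y^+-Y^+\wedge n)/\alpha\bigr]+(1-\alpha)\bigl[X+(Y^+\wedge n)/(1-\alpha)\bigr]
\]
writes $X+Y^+$ as a convex combination of a point in $B_\delta(X)\subseteq\dom(f)$ and a point in $X+\Linfty\subseteq\dom(f)$ by the first step, whence $X+Y^+\in\dom(f)$ by convexity.

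Third, for the tail-continuity limit, observe that for $r>0$ the inclusion $\{Y\geq r\}\subseteq\{Y_1\geq r\}$ holds because $Y\leq Y_1$, so $0\leq Y\Ind_{\{Y\geq r\}}\leq Y_1\Ind_{\{Y_1\geq r\}}$. Since $Y_1\in\Mrho$, Corollary~\ref{Mrhocharac} yields $\|Y_1\Ind_{\{Y_1\geq r\}}\|_{\mathcal R}\to 0$, hence $X+Y\Ind_{\{Y\geq r\}}\to X$ in $\Norm_{\mathcal R}$. For large $r$ these approximants lie in $B_\delta(X)\subseteq\tn{int}\,\dom(f)$, and continuity of $f$ there gives $f(X+Y\Ind_{\{Y\geq r\}})\to f(X)$, concluding the verification. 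The main obstacle I anticipate is the amplification $\Linfty\subseteq\dom(f)\Longrightarrow X+\Linfty\subseteq\dom(f)$: the interior hypothesis alone only provides local information around $X$, and it has to be ``propagated'' along constants via convexity before one can afford, in the second step, to peel off the bounded truncation $(Y^+\wedge n)/(1-\alpha)$ from the small tail remainder.
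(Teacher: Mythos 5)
Your proof is correct and follows essentially the same route as the paper: membership of $X+Y^+$ in $\dom(f)$ via the convexity splitting that the paper invokes by reference to \eqref{eq:finite}, the vanishing of $\|Y\Ind_{\{Y\geq r\}}\|_{\mathcal R}$ via Corollary~\ref{Mrhocharac} and solidity, and the conclusion via continuity of $f$ on $\tn{int}\,\dom(f)$ from Remark~\ref{shapiro}(v). The only cosmetic difference is that you handle a general $Y=Y_1-Y_2$ directly through the bound $Y^+\leq Y_1$ and spell out the amplification $X+\Linfty\subseteq\dom(f)$, where the paper instead reduces to $Y\in\Mrho_+$ by monotonicity of $\mathcal T_X^f$ and leaves the convexity argument implicit.
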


\begin{proof} $\mathcal T_{X}^{f}$ is monotone, hence it suffices to consider $Y\in \Mrho_+$. The condition $X\in\tn{int}\,\dom(f)$ guarantees $X+Y\in\dom(f)$ as in \eqref{eq:finite}.
From Corollary \ref{Mrhocharac} we obtain $\lim_n\|Y\Ind_{\{Y\geq n\}}\|_{\mathcal R}=0$, hence $X+Y\Ind_{\{Y\geq n\}}\in\tn{int}\,\dom(f)$ for all $n$ large enough. The desired tail continuity follows from the continuity of $f|_{\tn{int}\,\dom(f)}$ (see Remark~\ref{shapiro}(v)).  
\end{proof}

\noindent While in Proposition \ref{prop:subgrad} we gave a condition under which the subgradient contains regular dual elements only, we will now turn to conditions guaranteeing the existence of at least one regular element in the subgradient, namely by means of projection. 
\begin{proposition}\label{subgradproj1}
Let $X\in\dom(\tilrho)$ and  $\ell=\mu\oplus\lambda\in\partial\tilrho(X)$. Then also $\mu\in\partial\tilrho(X)$ whenever $\mu$ satisfies $\int X\,d\mu\geq\ell(X)$. Similarly, if $X\in\dom(\eta)$ and  $\ell=\mu\oplus\lambda\in\partial\eta(X)$, then $\mu\in\partial\eta(X)$ whenever $\mu$ satisfies $\int X\,d\mu\geq\ell(X)$. In particular, the assumption $\int X\,d\mu\geq\ell(X)$ is met if $X\in \Mrho_+-\Lrho_+$.
\end{proposition}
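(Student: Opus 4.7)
The plan is to reduce the subgradient identity at $X$ for $\mu$ to the one already given for $\ell$ by exploiting the compatibility between $\tilrho^\ast$ (respectively $\eta^\ast$) and $\Rho^\ast$ on the regular part $CA$.

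First, I would establish that $\tilrho^\ast|_{CA}=\Rho^\ast$. The inequality $\tilrho^\ast(\mu)\geq \Rho^\ast(\mu)$ follows for any $\mu\in CA$ simply by restricting the supremum in \eqref{eq:dualrep3} to $\Linfty\subseteq \Lrho$, where $\tilrho$ coincides with $\Rho$ (Theorem~\ref{lem:tildeA}). For the converse, the dual representation \eqref{eq:g2} yields $\tilrho(Y)\geq \int Y\, d\mu-\Rho^\ast(\mu)$ for every $Y\in \Lrho$, hence $\int Y\,d\mu-\tilrho(Y)\leq \Rho^\ast(\mu)$, and taking the supremum over $Y\in\Lrho$ gives $\tilrho^\ast(\mu)\leq \Rho^\ast(\mu)$. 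The same argument applied to $\eta$ and the observation \eqref{eq:dualsequal} made just before Lemma~\ref{lem:eta=tilrho} give $\eta^\ast|_{CA}=\Rho^\ast$ as well.

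Next, for $\ell=\mu\oplus \lambda\in\partial\tilrho(X)$, since $\lambda|_{\Linfty}=0$, the same restriction of the supremum shows $\tilrho^\ast(\ell)\geq \Rho^\ast(\mu)=\tilrho^\ast(\mu)$; in particular $\mu\in\dom(\tilrho^\ast)$. The defining identity $\tilrho(X)=\ell(X)-\tilrho^\ast(\ell)$ and Fenchel--Young combine to
\[
\tilrho(X)\;\geq\; \int X\,d\mu-\tilrho^\ast(\mu)\;\geq\; \int X\,d\mu-\tilrho^\ast(\ell)\;=\;\tilrho(X)+\bigl(\textstyle\int X\, d\mu-\ell(X)\bigr).
\]
The hypothesis $\int X\,d\mu\geq \ell(X)$, i.e.\ $\lambda(X)\leq 0$, makes the rightmost term non-negative, forcing equality throughout and hence $\mu\in\partial\tilrho(X)$. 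The argument for $\eta$ is identical, substituting $\eta^\ast$ for $\tilrho^\ast$ and invoking $\eta^\ast|_{CA}=\Rho^\ast$.

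For the ``in particular'' clause, assume $X=U-V$ with $U\in \Mrho_+$ and $V\in\Lrho_+$. Since $\lambda|_{\Mrho}=0$ (Theorem~\ref{R22}), $\lambda(U)=0$. To conclude $\lambda(X)=-\lambda(V)\leq 0$ it suffices to verify $\lambda(V)\geq 0$. Monotonicity of $\tilrho$ yields $\ell\in\Dualrho_+$, hence for any $Y\in\Lrho_+$ and $n\in\Nat$ one has $\ell(Y)\geq \ell(Y\wedge n)=\int (Y\wedge n)\,d\mu$; letting $n\to\infty$ via monotone convergence (using $\mu\in\dom(\Rho^\ast)\subseteq(\countable_\Prob)_+$ together with Proposition~\ref{R12}(ii) for integrability) produces $\ell(Y)\geq \int Y\,d\mu$, i.e.\ $\lambda(Y)\geq 0$. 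Applying this with $Y=V$ finishes the proof. The only slightly delicate point is this last positivity verification for $\lambda$, which is precisely where one must be careful, since the algebraic decomposition $\ell=\mu\oplus\lambda$ does not a priori preserve positivity of the summands.
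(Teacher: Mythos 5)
Your proof is correct and follows essentially the same route as the paper's: establish $\tilrho^*(\mu)=\Rho^*(\mu)$ on $CA$ and $\tilrho^*(\mu)\leq\tilrho^*(\ell)$ via $\ell|_{\Linfty}=\int\cdot\,d\mu$, then close the Fenchel--Young chain using $\int X\,d\mu\geq\ell(X)$ (and likewise for $\eta$). Your explicit verification of the ``in particular'' clause---showing $\lambda\geq 0$ on $\Lrho_+$ by monotone truncation and hence $\lambda(X)=-\lambda(V)\leq 0$ for $X=U-V\in\Mrho_+-\Lrho_+$---is a welcome addition that the paper's proof leaves implicit.
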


\begin{proof}
By the same argument employed in \eqref{eq:dualsequal} and the equality $\tilrho=\xi$, we obtain that $\eta^\ast(\mu)=\tilrho^*(\mu)=\sup_{U\in\Linfty}\int U\,d\mu-\Rho(U)=\Rho^*(\mu)$ holds for all $\mu\in CA$. From this and $\ell|_{\Mrho}=\int \cdot\,d\mu$ we infer $\tilrho^*(\mu)\leq\tilrho^*(\ell)$, and $\eta^*(\mu)\leq\eta^*(\ell)$. The assumption $\int X\,d\mu\geq\ell(X)$ and $\ell\in \partial \tilrho(X)$ imply 
\[\tilrho(X)\geq\int X\,d\mu-\tilrho^*(\mu)\geq \ell(X)-\tilrho^*(\ell)=\tilrho(X).\] The assertion for $\eta$ follows in the same way.
\end{proof}
\begin{remark}
In the situation of Proposition \ref{subgradproj1}, as $\int X\,d\mu-\tilrho^*(\mu)=\ell(X)-\tilrho^*(\ell)$, $\tilrho^*(\mu)\leq\tilrho^*(\ell)$, and $\int X\,d\mu\geq\ell(X)$, we in fact obtain $\int X\,d\mu=\ell(X)$ and $\tilrho^*(\ell)=\tilrho^*(\mu)$. In other words, singularities in the subgradient cannot be excluded, but they are redundant for $X$.
\end{remark}
\noindent The following proposition establishes a handy criterion for $\int X\,d\mu\geq \ell(X)$. 

\begin{proposition}\label{subgradproj3}
Suppose that $f:\Lrho\to(-\infty,\infty]$ is monotone, proper and convex, and that $\ell=\mu\oplus\lambda\in\partial f(X)$. Then $\int X\,d\mu\geq\ell(X)$ whenever $sX^+\in\mathcal T_X^f$ for some $s>0$. 
\end{proposition}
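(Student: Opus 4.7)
The plan is to exploit the decomposition $\ell=\mu\oplus\lambda$ from Theorem~\ref{R22} and reduce the claim to showing $\lambda(X)\leq 0$, since then $\ell(X)=\int X\,d\mu+\lambda(X)\leq\int X\,d\mu$. Writing $\lambda(X)=\lambda(X^+)-\lambda(X^-)$ it suffices to prove the two one-sided bounds $\lambda(X^+)\leq 0$ and $\lambda(X^-)\geq 0$ separately. These rely on different features: the upper bound on $\lambda(X^+)$ uses the tail-continuity hypothesis along $sX^+$, whereas the lower bound on $\lambda(X^-)$ uses only the positivity of $\ell$ inherited from monotonicity of $f$.

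For the first bound, set $Y_r:=X+sX^+\Ind_{\{sX^+\geq r\}}$ for $r>0$. Since $Y_r\leq X+sX^+$ and the assumption $sX^+\in\mathcal T^f_X$ forces $X+sX^+\in\dom(f)$, monotonicity of $f$ gives $Y_r\in\dom(f)$ for every $r$. The subgradient inequality then yields
\[ f(Y_r)\;\geq\;f(X)+\ell\bigl(sX^+\Ind_{\{sX^+\geq r\}}\bigr). \]
Letting $r\to\infty$, tail continuity makes the left-hand side tend to $f(X)$, while Corollary~\ref{singularities} applied with $Z=sX^+\geq 0$ gives $\ell(sX^+\Ind_{\{sX^+\geq r\}})\to\lambda(sX^+)=s\lambda(X^+)$. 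The inequality collapses to $s\lambda(X^+)\leq 0$, hence $\lambda(X^+)\leq 0$.

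For the second bound, first establish $\ell\geq 0$ on $\Lrho_+$. For any $Z\in\Lrho_+$ monotonicity and properness of $f$ guarantee $-\infty<f(X-Z)\leq f(X)<\infty$, and the subgradient inequality $f(X-Z)\geq f(X)+\ell(-Z)$ rearranges to $\ell(Z)\geq f(X)-f(X-Z)\geq 0$. Applying this positivity to $X^-\Ind_{\{X^-\geq r\}}\in\Lrho_+$, and combining with the identity $\lambda(X^-\Ind_{\{X^-\geq r\}})=\lambda(X^-)$ valid for every $r>0$ by Corollary~\ref{singularities}, one obtains
\[ 0\;\leq\;\ell\bigl(X^-\Ind_{\{X^-\geq r\}}\bigr)\;=\;\int X^-\Ind_{\{X^-\geq r\}}\,d\mu+\lambda(X^-). \]
By Theorem~\ref{R22} the functional $\int\cdot\,d\mu$ is continuous on $\Lrho$, so $X^-\in L^1_\mu$, and dominated convergence forces $\int X^-\Ind_{\{X^-\geq r\}}\,d\mu\to 0$ as $r\to\infty$. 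Thus $\lambda(X^-)\geq 0$, and the proof concludes.

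The only delicate point is the verification that $\ell$ is in fact positive, which relies on using the subgradient inequality in the ``downward'' direction together with monotonicity and properness of $f$; the remaining work is careful bookkeeping between the $CA$ and $PA$ parts of $\ell$ in combination with Corollary~\ref{singularities} and dominated convergence.
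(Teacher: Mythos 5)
Your proof is correct and follows essentially the same route as the paper's: perturb $X$ by the tail $sX^+\Ind_{\{sX^+\geq r\}}$, feed this into the subgradient inequality, and let tail continuity on the $f$-side and Corollary~\ref{singularities} on the $\ell$-side identify the limits, which forces $\lambda(X^+)\leq 0$. Your explicit verification that $\lambda(X^-)\geq 0$ (via positivity of $\ell$ and dominated convergence) merely spells out a step the paper leaves implicit, so there is nothing to add.
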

\begin{proof}
Suppose that $\lambda(X^+)=:\delta>0$. By monotonicity one obtains for all $n\in\Nat$
\begin{align*}\delta&=\lambda(X^+)=\lambda(X^+\Ind_{\{X^+\geq n\}})\leq\ell(X^+\Ind_{\{X^+\geq n\}}).
\end{align*}
Define $X_n=X+s X^+\Ind_{\{X^+\geq n\}}\geq X$, $n\in\Nat$, where $s>0$ is chosen like in the assumption of the proposition. We estimate 
\begin{align*}\ell(X)-f(X)&=f^*(\ell)\geq\ell(X_n)-f(X_n)=\ell(X)+s\ell(X^+\Ind_{\{X^+\geq n\}})-f(X_n)\\
&\geq\ell(X)+s \delta-f(X_n).
\end{align*}
Consequently, we arrive at the \textsc{contradiction} $0=\lim_{n\rightarrow\infty}f(X_n)-f(X)\geq s\delta$.
Hence, $\lambda(X^+)=0$, and thus $\int X\,d\mu\geq \ell(X).$ 
\end{proof}

\noindent We now have the tools at hand to provide the proof of Theorem \ref{thm:tilrho:eta}.

\begin{proof}[Proof of Theorem~\ref{thm:tilrho:eta}]
Note that condition (i) implies (ii), and suppose that one of them holds. $X\in \Gamma$ implies that  $\partial\eta(X)\neq \emptyset$ (Theorem~\ref{thm:subdiff:general}), and Propositions~\ref{subgradproj1} and \ref{subgradproj3} in conjunction with the second part of Lemma~\ref{lem:eta=tilrho} do the rest. \\
Condition~(iii) is equivalent to $X^+\in \Mrho$ by Corollary~\ref{Mrhocharac}. Hence 
Proposition~\ref{subgradproj1} applies, and Lemma~\ref{lem:eta=tilrho} yields the assertion.
\end{proof}

\section{Examples}\label{sec:examples}

\begin{example}\label{ex:counterex}
Consider $(\Omega,\Fcal)$ to be the natural numbers endowed with their power set. Let $\zeta\in\countable_+$ be defined by the discrete density $(2^{-\om})_{\om\in\Nat}$ and let $\nu\in\bounded_+$ be the purely finitely additive measure on $(\Omega,\Fcal)$ arising from the Banach-Mazur limit (c.f. \cite[Definition 15.46]{Ali}); the reader should keep in mind that $\nu(F)=0$ for all finite sets $F\subseteq\Nat$. Moreover, for $\lambda \in [0,1]$ we set $\mu_\lambda=(1-\lambda)\zeta+\lambda\nu$ and define the closed acceptance set
$$\acc:=\left\{X\in\Bd\Big| \forall\lambda\in[0,1]:~\int X\,d\mu_\lambda\leq \lambda\right\}. $$
Clearly, $\mathcal B(\acc)\backslash\countable\neq \emptyset$. Now let $\emptyset\neq A\subset\Nat$ be any finite subset, $\Scal=\{U(\alpha,\beta):=\alpha\Ind_A+\beta\mid \alpha,\beta\in\Reals\}$, $\price(U(\alpha,\beta))=\int U(\alpha,\beta)\,d\zeta$. We first show that $\mathcal R:=(\acc,\Scal,\price)$ is a risk measurement regime and $\Rho$ is finite. To this end, note first that, for arbitrary $X\in\Bd$, $X+U(\alpha,\beta)\in\acc$ implies 
$$0\geq\int(X+U(\alpha,\beta))d\mu_0=\int X\,d\zeta+\price(U(\alpha,\beta)), $$
hence $\price(U(\alpha,\beta))\leq -\int X\,d\zeta<\infty$, and $\mathcal R$ is a risk measurement regime. Moreover, for any $k\in\Nat$ we have that $k-U(0,k)\in\acc$, which means that $\Rho(k)\leq k$. By monotonicity, $\Rho$ does not attain the value $+\infty$.\\ 
Next we prove that $\Rho$ is continuous from above even though $\mathcal B(\acc)\setminus \countable\neq \emptyset$. We proceed in three steps.\\
\textit{Step 1:} $\sigma_\acc(\mu_\lambda)=\lambda$. Indeed, let $A_n:=\{n,n+1,...\}$ and note that
$$\mu_\lambda(A_n)=(1-\lambda)\sum_{i=n}^\infty 2^{-i}+\lambda.$$
Hence $Y_n:=\Ind_{A_n}-\sum_{i=n}^\infty2^{-i}\in\acc$, and 
$$\sigma_\acc(\mu_\lambda)\geq \lim_{n\to\infty}\int Y_n\,d\mu_\lambda=\lim_{n\to\infty}-\lambda\sum_{i=n}^\infty 2^{-i}+\lambda=\lambda.$$
The converse inequality $\sigma_\acc(\mu_\lambda)\leq \lambda$ is due to the definition of $\acc$.\\ 
\textit{Step 2:} $\mathcal B(\acc)=\tn{cone}(\{\zeta,\nu\})$, where cone$(E)$ refers to the smallest convex and pointed cone containing $E\subseteq\bounded$ and $0$. The inclusion $\mathcal B(\acc)\supseteq\tn{cone}(\{\zeta,\nu\})$ is clear, for the other one note that $\mathcal B(\acc)=\tn{cone}(\mathcal B(\acc)_1)$ always holds, where $\mathcal B(\acc)_1:=\{\frac 1{\mu(\Omega)}\mu\mid 0\neq\mu\in\mathcal B(\acc)\}$. Assume we can find $\mu\in\mathcal B(\acc)_1\backslash co(\{\zeta,\nu\})$, where $co(\{\zeta,\nu\})$ denotes the convex hull of $\zeta$ and $\nu$. As $ co(\{\zeta,\nu\})$ is $\sigma(\bounded,\Bd)$-compact and convex, by means of separation we can find a $Y\in\Bd$ such that 
\[\max_{\lambda\in[0,1]}\left(\int Y\,d\mu_\lambda-\lambda\right)\leq \max_{\lambda\in[0,1]}\int Y\,d\mu_{\lambda}=0<\int Y\,d\mu.\]
The same holds true when $Y$ is replaced by $tY$, $t>0$. Thus $\{tY\mid t>0\}\subseteq\acc$, and 
$$\sigma_\acc(\mu)\geq\sup_{t>0}\int tY\,d\mu=\infty.$$
We conclude that $\mathcal B(\acc)_1=co(\{\zeta,\nu\})$ and thus $\mathcal B(\acc)=\tn{cone}(\{\zeta,\nu\})$.\\
\textit{Step 3:} $\mathcal E_{\price}\cap \mathcal B(\acc)=\{\zeta\}$ and therefore $\Rho(X)=\int X\,d\zeta$ by Proposition \ref{ancillary}(i), which is continuous from above. Indeed, $\mu\in\mathcal E_{\price}\cap \mathcal B(\acc)$ only if $\mu\in\mathcal B(\acc)_1$, therefore by Step 2 we can assume $\mu_\lambda\in\mathcal E_\price\cap\mathcal B(\acc)$ for some $\lambda\in [0,1]$. We reformulate the condition as for all $\alpha,\beta\in\Reals$ it has to hold
$$\alpha\zeta(A)+\beta=(1-\lambda)(\alpha\zeta(A)+\beta)+\lambda\beta=(1-\lambda)\alpha\zeta(A)+\beta,$$
which is the case if and only if $\lambda=0$. 
\end{example}

\begin{example}[$\tilrho\neq \eta$]\label{ex:etaneqtilrho}
Let $(\Omega,\Fcal,\Prob)$ be the integers $\mathbb Z$ endowed with their power set and a probability measure specified below. Let 
$$\Quot_k:=\frac 1{2k}(\delta_k+\delta_{-k})+\left(1-\frac 1 k\right)\delta_0,\quad k\in\Nat,$$ and define $\Prob:=\sum_{k\in\Nat}2^{-k}\Quot_k.$
It is straightforward to check that  
\[\acc:=\{X\in \Linfty\mid\forall k\in\Nat:~\Erw_{\Quot_k}[X]\leq 0\},\quad\Scal=\Reals,\quad\price=id_{\Reals},\]
is a risk measurement regime on $\Linfty$ such that $\Rho(X):=\sup_{k\in\Nat}\Erw_{\Quot_k}[X]$, $X\in\Linfty$, is a coherent monetary risk measure which is continuous from above and sensitive with respect to the strong reference model $\Prob$. We consider $X:=id_{\mathbb Z}$. We first observe that for all $k\in\Nat$ it holds that $\Erw_{\Quot_k}[|X|]=1$, which is sufficient for $X\in\Lrho$. Using the notational conventions of Theorem \ref{mainresult}, for all $n\in\Nat$
\[\tilrho(X_n)\geq \Rho(X_n^{n^2})\geq \Erw_{\Quot_{n^2}}[X_n^{n^2}]=\frac 1 2\left(1-\frac 1 n\right).\]
Hence $\eta(X)\geq\frac 12$. 
However, for $m\in\Nat$ fixed, we obtain for all $n>m$ that 
\[\Erw_{\Quot_k}[X^m_n]=\begin{cases}0\tn{, if }k\leq m,\\
\frac m{2k}-\frac 1 2\tn{, if }m<k\leq n,\\ \frac{m-n}{2k}\tn{, if }k>n.\end{cases}\]
This implies 
\[\tilrho(X^m)=\xi(X^m)=\lim_{n\to\infty}\Rho(X^m_n)=0,\]
and therefore
\[\tilrho(X)=\lim_{m\to\infty}\tilrho(X^m)=0<\frac 1 2\leq\eta(X).\]
\end{example}

\begin{example}[$\Mrho\subsetneq \heart\subsetneq \Lrho$]\label{inclusions}
Let $(\Omega,\Fcal)$ be the real numbers endowed with their Borel sets $\Borel(\Reals)$. Let $\Prob_0$ be the probability measure $\Prob$ from Example \ref{ex:etaneqtilrho} extended to $\Borel(\Reals)$, and define $\Prob_1$ by its Lebesgue density $d\Prob_1=e^{1-x}\Ind_{(1,\infty)}dx$. Let $\Prob:=\frac 1 2(\Prob_0+\Prob_1)$, and consider the risk measurement regime
\begin{center}$\acc:=\{X\in\Linfty\mid \forall k\in\Nat:~\Erw_{\Quot_k}[X]\leq 0\tn{, and }\Erw_{\Prob_1}[e^X]\leq 1\},\quad \Scal=\Reals,\quad \price=id_{\Reals},$\end{center}
where the probability measures $(\Quot_k)_{k\in\Nat}$ are chosen as in Example \ref{ex:etaneqtilrho} and extended to $\Borel(\Reals)$. One can easily show that $\Rho(X)=\rho_0(X)\vee\rho_1(X)$, where 
\begin{center}
$\rho_0(V)=\sup_{k\in\Nat}\Erw_{\Quot_k}[V],\quad\rho_1(U)=\log\left(\Erw_{\Prob_1}[e^U]\right),\quad U,V\in\Linfty$.\end{center}
$\Rho$ is a sensitive finite risk measure on $\Linfty$ being continuous from above, and $\Prob$ is a strong reference probability model. \\
Consider first $X\in L^0_{\Prob}$ be generated by $id_\mathbb Z$. We have already shown in Example \ref{ex:etaneqtilrho} that $\rho(t|X|)=\rho_0(t|X|)=t$ for all $t\geq 0$, hence $X\in\heart$. Nevertheless, it holds for all $k\in\Nat$ that 
\begin{center}$1\geq \rho(|X|\Ind_{\{|X|>k\}})\geq\Erw_{\Quot_{k+1}}[|X|\Ind_{\{|X|>k\}}]=1$.\end{center}
Hence $\lim_k\rho(|X|\Ind_{\{|X|\geq k\}})= 1$, which is sufficient for $X\in\heart\backslash\Mrho$ by Corollary~\ref{Mrhocharac}.\\
Let now $\lambda>0$ and define $Y\in L^0_\Prob$ generated by 
\[\omega\mapsto\begin{cases}\frac 1{\lambda}(\omega-1),~\om\in (1,\infty)\backslash\mathbb Z,\\0\tn{ otherwise.}\end{cases}\]
$Y$ is exponentially distributed under $\Prob_1$ with parameter $\lambda$. Moreover, $Y\in\Lrho$ and satisfies $\tilrho(Y)<\infty$. However, for every $t>\lambda$, $\rho(t|Y|)=\log(\Erw_{\Prob_1}[e^{tY}])=\infty$, hence $Y\in C^{\mathcal R}$.
\end{example}

\begin{example}[Entropic Risk Measure]\label{ex:entropic}
On a probability space $(\Omega,\Fcal,\Prob)$ consider for $\beta>0$ fixed the \textit{entropic risk measure} $\Rho(X):=\frac 1 {\beta}\log\left(\Erw_\Prob[e^{\beta X}]\right)$, $X\in\Linfty$. One can easily show 
\begin{center}$\Lrho=\{X\in L^0_\Prob\mid\exists k>0: e^{k|X|}\in L^1_\Prob\},\quad\tilrho(X)=\frac 1 {\beta}\log\left(\Erw_\Prob[e^{\beta X}]\right),\quad X\in\Lrho$.\end{center}
$\tilrho$ is tail continuous. Indeed, choose $X\in\dom(\tilrho)$ arbitrary and $Y\in\Lrho$ such that $X+Y^+\in\dom(\tilrho)$, i.e. $e^{\beta X+\beta Y^+}\in L^1_\Prob$. By continuity of $\log$ and dominated convergence, we obtain
\[\lim_r\tilrho(X+Y\Ind_{\{Y\geq r\}})=\lim_r\frac 1 {\beta}\log\left(\Erw_\Prob[e^{\beta (X+Y^+)}\Ind_{\{Y\geq r\}}]+\Erw_\Prob[e^{\beta X}\Ind_{\{Y<r\}}]\right)=\tilrho(X).\] 
\end{example}

\begin{example}[AVaR-based risk measures]\label{ex:avar}
Consider the Average Value at Risk $AVaR_\alpha$ for some $\alpha\in (0,1]$ on $\Linfty$, which is known to have the minimal dual representation
$$AVaR_\alpha(X)=\max_{\Quot\in {\mathcal Q}_\alpha}\Erw_\Quot[X], \quad X\in \Linfty,$$ where $${\mathcal Q}_\alpha:=\left\{\Quot\ll \Prob\mid \frac{d\Quot}{d\Prob}\leq \frac{1}{1-\alpha}\right\},$$ see \cite[Theorem~4.52]{5}. Given the  acceptance set $\acc:=\{X\in \Linfty\mid AVaR_\alpha(X)\leq 0\}$ we define the risk measurement regime $\mathcal{R}=(\acc, \Scal, \price)$ by $\Scal=\Reals\cdot U$ for some $U\in \Linfty$ with $\Prob(U>0)=1$, and $\price(mU):=m$, $m\in \Reals$. By \cite[Proposition 4.4]{FKM2013} and Proposition~\ref{ancillary} the resulting risk measure $\Rho$ is finite, continuous from above with strong reference model $\Prob$, and $\Lrho=\heart=\Mrho$. 
Hence, Theorem~\ref{thm:tilrho:eta} applies for all $X\in \Lrho$.   
\end{example}

\begin{appendix}

\section{Proofs of Proposition \ref{ancillary} and Corollary \ref{cor:ancillary}}\label{appendix:A}

\begin{proof}[Proof of Proposition \ref{ancillary}](i): Suppose $\Rho(0)$ is a real number and let $X=Y+N$ for some $Y\in\acc$ and some $N\in\ker(\price)$. By $\Scal$-additivity, $\Rho(X)=\Rho(Y)\leq 0$ holds. $\Rho$ being l.s.c. implies $\Rho(X)\leq 0$ for all $X\in\cl_{|\cdot|_{\infty}}(\acc+\ker(\price))$. Again, $\Scal$-additivity of $\Rho$ allows to infer $\price(Z)=\Rho(Z)-\Rho(0)\leq -\Rho(0)$ for all $Z\in\cl_{|\cdot|_{\infty}}(\acc+\ker(\price))\cap\Scal$, and $\price$ is bounded from above on the latter set. By virtue of \cite[Theorems 2 and 3]{FKM2015}, $\mathcal B(\acc)\cap \Ecal_{\price}\neq\emptyset$ and 
\[\Rho(X)=\sup_{\mu\in\mathcal B(\acc)\cap\Ecal_{\price}}\int X\,d\mu-\sigma_{\acc}(\mu),\quad X\in\Bd,\]
from which the claimed equations \eqref{support} and \eqref{eq:dualrep1} are derived easily. 
If $\Rho$ is coherent, its positive homogeneity implies that $\Rho^*|_{\dom(\Rho^*)}\equiv 0$. As furthermore $\dom(\Rho^*)=\mathcal B(\acc)\cap\mathcal E_\price$, \eqref{support:coh} and \eqref{eq:dualrep:coh} are special cases of \eqref{support} and \eqref{eq:dualrep1}.

\smallskip\noindent
(ii): $\Rho^*$ is by definition a $\sigma(\bounded, \Bd)$-l.s.c. function, hence its lower level sets are closed in this topology. Let $c\in\Reals$ and suppose $\mu\in E_c$, thus \textit{a fortiori} $\mu\in \bounded_+$. \eqref{eq:dualrep1} implies 
\begin{center}$\forall\mu\in E_c:\quad\mu(\Omega)\leq\Rho^*(\mu)+\Rho(1)\leq c+\Rho(1)<\infty.$\end{center}
Thus being a closed subset of a dilation of the closed unit ball of $\bounded$, $E_c$ is weakly* compact by virtue of the Banach-Alaoglu Theorem \cite[Theorem 6.25]{Ali}.

\smallskip\noindent
(iii): Assume first a risk measure $\Rho$ associated to the acceptance set $\acc$ is finite and continuous from above. $\Rho$ is in particular norm-continuous by Remark \ref{shapiro}(v) and statements (i) and (ii) apply. Continuity from above implies $\Rho(k\Ind_{A_n})\downarrow \Rho(0)$ for all $k>0$ whenever $(A_n)_{n\in\Nat}\subseteq \mathcal F$ is a sequence of events decreasing to $\emptyset$. For $\mu\in\dom(\Rho^*)$, 
\[-\Rho(0)\leq \sup_{k>0}k\lim_{n\to\infty}\mu(A_n)-\Rho(0)=\sup_{k>0}\lim_{n\to\infty}k\mu(A_n)-\Rho(k\Ind_{A_n})\leq\Rho^*(\mu)<\infty.\]
This can only hold if $\lim_n\mu(A_n)=0$, i.e. $\mu\in\countable_+$. By (ii), this is equivalent to all level sets $E_c$ of $\Rho^*$, $c\in\Reals$, being $\sigma(\countable,\Bd)$-compact. \\
For the converse, assume that $\dom(\Rho^\ast)\subseteq \countable_+$. Let $(X_n)_{n\in\Nat}$ be any sequence in $\Bd$ such that $X_n\downarrow X$ for some $X\in\Bd$. Let $Y\in\{X,X_1, X_2, ...\}$ and suppose that $\mu \in\dom(\Rho^*)$ satisfies 
$\Rho(Y)-1\leq \int Y\,d\mu-\Rho^*(\mu).$
We can thus use the monotonicity of $\Rho$ and the positivity of $\mu$ to estimate 
\begin{align*}\Rho^*(\mu)&\leq \int Y\,d\mu-\Rho(Y)+1\leq\int X_1d\mu-\Rho(X)+1\\
&\leq \frac 1 2\Rho(2X_1)+\frac 1 2 \Rho^*(\mu)-\Rho(X)+1.
\end{align*}
Rearranging this inequality yields that 
\[\Rho^*(\mu)\leq 2+\Rho(2X_1)-2\Rho(X)=:c,\]
a bound which is independent of $Y$. Therefore, for all $Y\in\{X,X_1,X_2,...\}$ it holds that
\[\Rho(Y)=\sup_{\mu\in E_c}\int Y\,d\mu-\Rho^*(\mu)=\max_{\mu\in E_c}\int Y\,d\mu-\Rho^*(\mu),\]
where in the last equality we used the $\sigma(\countable,\Bd)$-continuity of $\mu\mapsto\int Y\,d\mu-\Rho^*(\mu)$ and the compactness of $E_c$. For each $n\in\Nat$ choose $\mu_n\in E_c$ such that 
$$\Rho(X_n)=\int X_n\,d\mu_n-\Rho^*(\mu_n).$$
Note that $E_c$ is $\sigma(\countable, \countable^\ast)$-compact by virtue of \cite[Theorem 4.7.25]{13}. The Eberlein-Smulian Theorem (see e.g.\ \cite[Theorem~6.38]{Ali}) now implies that we may select a $\sigma(\countable,\countable^\ast)$-convergent subsequence $(\mu_{n_k})_{k\in\Nat}$ with limit $\bar\mu\in E_c$. Choose a measure $\nu$, for instance $$\nu:=\bar \mu+ \sum_{k\in \N}\frac{1}{2^k}\mu_{n_k},$$ such that for all $\mu\in{\cal K}:=\{\bar \mu, \mu_{n_1}, \mu_{n_2}, \ldots \}$ we have $\mu\ll \nu $. As $\nu(A)\leq\eps$ implies $\mu(A)\leq \eps$ for all $A\in\Fcal$ and the set of Radon-Nikodym derivatives $\{\frac{d\mu}{d\nu}\mid\mu\in {\cal K}\}$ is $\Norm_{L^1_\nu}$-bounded as a subset of $L^1(\nu)$, we conclude that they form a $\nu$-uniformly integrable family by \cite[Proposition 4.5.3]{13}. Abbreviating $Z_k:=\frac{d\mu_{n_k}}{d\nu}$, we obtain for all constants $L>0$
\begin{align*}\limsup_{k\to \infty} \left|\int X\,d\bar\mu-\int X_{n_k}\,d\mu_{n_k}\right|&\leq\limsup_{k\to \infty} \left|\int X\,d\bar\mu-\int X\,d\mu_{n_k}\right|+\left|\int (X-X_{n_k})\,d\mu_{n_k}\right|\\
&\leq \limsup_{k\to \infty }\int_{\{Z_k\geq L\}}|X_1-X|Z_k\,d\nu+\int_{\{Z_k<L\}}|X_{n_k}-X|L\,d\nu\\
&= \limsup_{k\to \infty}\int_{\{Z_k\geq L\}}|X_1-X|Z_k\,d\nu=0,
\end{align*}
where we applied monotone convergence for the second but last equality and where the last equality follows from the uniform $\nu$-integrability of the densities $Z_k$ and the fact that $|X_1-X|$ is bounded by a constant. Hence $\lim_k\int X_{n_k}\,d\mu_{n_k}=\int X\,d\bar\mu$, and from lower semicontinuity of $\Rho^*$, we arrive at 
\begin{align*}\lim_{k\to\infty}\Rho(X_{n_k})=\limsup_{k\to\infty}\int X_{n_k}\,d\mu_{n_k}-\Rho^*(\mu_{n_k})\leq\int X\,d\bar\mu-\Rho^*(\bar\mu)\leq\Rho(X).\end{align*}
$\Rho(X)\leq\inf_{n\in\Nat}\Rho(X_n)=\lim_{k\to\infty}\Rho(X_{n_k})$ holds \textit{a priori}, however. We infer $\Rho(X)=\lim_n\Rho(X_n)$.\\
Suppose $\mathcal B(\acc)\subseteq \countable_+$. By \eqref{support} and statement (ii), the lower level sets of the dual conjugate of any finite risk measure $\Rho$ associated to $\acc$ are $\sigma(\countable,\Bd)$-compact, and continuity from above follows from the equivalence proved just before.

\smallskip
\noindent (iv): Suppose that the risk measurement regime $\mathcal R=(\acc,\Scal,\price)$ is such that $\Scal=\Reals\cdot U$ for some $U\in\Bd_{++}$ and such that the resulting risk measure is finite. Assume for contradiction the existence of a $0\neq \mu\in\mathcal B(\acc)$ such that $\int U\,d\mu=0$. Recall that $\mu$ is necessarily positive, and let $k>\frac{\sigma_{\acc}(\mu)}{\mu(\Omega)}$. For any $r\in\Reals$
$$\int(k-r U)d\mu=k\mu(\Omega)>\sigma_\acc(\mu),$$
which would imply that $k-rU\notin\acc$ for any $r\in\Reals$, and thus $\Rho(k)=\infty$ in \textsc{contradiction} to the finiteness of $\Rho$. As hence $\int U\,d\mu>0$ has to hold for all $0\neq \mu\in\mathcal B(\acc)$, we can identify with \eqref{support} 
$$\dom(\Rho^*)=\mathcal E_\price\cap\mathcal B(\acc)=\left\{\frac{\price(U)}{\int U\,d\mu}\mu\Big| 0\neq\mu\in\mathcal B(\acc)\right\},$$ and by \eqref{eq:dualrep1}, 
$$\Rho(X)=\sup_{0\neq\mu}\frac{\price(U)}{\int U\,d\mu}\left(\int X\,d\mu-\sigma_{\acc}(\mu)\right),\quad X\in\Bd,$$
is the minimal dual representation of $\Rho$. From this representation and (iii), we infer that $\Rho$ is continuous from above if and only if $\mathcal B(\acc)\subseteq\countable$.
\end{proof}

\begin{proof}[Proof of Corollary \ref{cor:ancillary}]
As $\Rho(X)\leq 0$ for all $X\in\acc+\ker(\price)$, finiteness and $\Scal$-additivity of $\Rho$ together with \cite[Remark 6]{FKM2015} show that $\acc+\ker(\price)$ is proper and thus an acceptance set. Fix $U\in\Scal\cap\Bd_{++}$ and recall from \cite[Lemma 3]{FKM2015} the identity
$$\Rho(X)=\inf\{\price(rU)\mid r\in\Reals, X-rU\in\acc+\ker(\price)\},\quad X\in \Bd.$$
\textit{A fortiori}, $\mathcal R':=(\acc+\ker(\price),\Reals\cdot U,\price|_{\Reals\cdot U})$ is a risk measurement regime and the associated risk measure $\rho_{\mathcal R'}$ is continuous from above if and only if $\Rho$ is continuous from above. As the identity $\mathcal B(\acc+\ker(\price))=\mathcal B(\acc)\cap\ker(\price)^\perp$ is easily verified, the claimed equivalence follows from Proposition \ref{ancillary}(iv). 
\end{proof}

\section{Proof of Theorem \ref{R1111}}\label{appendix:B}

The proof heavily relies on the following result. 

\begin{lemma}[Grothendieck; see Exercise 1, Chapter 5, part 4 of \cite{12}]\label{R11} A convex subset $C$ of $\Linfty$ is closed in the $\sigma(\Linfty,\countable_\Prob)$-topology if and only if for arbitrary $r>0$ the set $C_r:=\{X\in C\mid\|X\|_{\infty}\leq r\}$
is closed with respect to convergence in probability, i.e.\ with respect to the metric \[d_{\Prob}(X,Y):=\Erw_\Prob[|X-Y|\wedge 1].\] 
\end{lemma}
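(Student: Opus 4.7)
The plan is to identify $\sigma(\Linfty,\countable_\Prob)$ with the classical weak$^\ast$-topology $\sigma(\Linfty,L^1_\Prob)$ via Radon--Nikod\'ym (each $\mu\in\countable_\Prob$ corresponds to its density $d\mu/d\Prob\in L^1_\Prob$, and vice versa by absolute continuity), and then to handle the two directions separately. The forward direction is a short dominated-convergence computation; the converse is the substantive one, built on Krein--Smulian (reduction to $B_r$) together with Mazur (norm closure equals weak closure on convex sets).

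\textbf{Forward implication.} Suppose $C$ is $\sigma(\Linfty,\countable_\Prob)$-closed and let $(X_n)\subseteq C_r$ with $d_\Prob(X_n,X)\to 0$. Passing to an a.s.\ convergent subsequence, one obtains $\|X\|_\infty\leq r$. For any $Y\in L^1_\Prob$ the bound $|X_n Y|\leq r|Y|$ and dominated convergence yield $\int X_nY\,d\Prob\to\int XY\,d\Prob$, so $X_n\to X$ in $\sigma(\Linfty,L^1_\Prob)$. Closedness of $C$ gives $X\in C\cap B_r=C_r$, where $B_r:=\{X\in\Linfty\mid \|X\|_\infty\leq r\}$.

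\textbf{Converse implication.} Assume each $C_r$ is closed under $d_\Prob$. Since $\Linfty=(L^1_\Prob)^\ast$ is a dual Banach space, Krein--\v{S}mulian (equivalently Banach--Dieudonn\'e) says that a convex $C\subseteq\Linfty$ is $\sigma(\Linfty,L^1_\Prob)$-closed as soon as $C\cap B_r=C_r$ is $\sigma(\Linfty,L^1_\Prob)$-closed for every $r>0$. Fix $r>0$. On $B_r$ dominated convergence shows that $d_\Prob$-convergence coincides with $\|\cdot\|_{L^1_\Prob}$-convergence, so $C_r$ is $L^1_\Prob$-norm closed and convex. Mazur's theorem then upgrades this to closedness in the weak topology $\sigma(L^1_\Prob,\Linfty)$ on $L^1_\Prob$.

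Now let a net $(X_\alpha)\subseteq C_r$ converge to some $X\in\Linfty$ in $\sigma(\Linfty,L^1_\Prob)$. Banach--Alaoglu gives $X\in B_r$. Because $\Prob$ is finite, $\Linfty\subseteq L^1_\Prob$, hence testing against every $Y\in\Linfty$ (viewed as a member of $L^1_\Prob$) yields $\int X_\alpha Y\,d\Prob\to\int XY\,d\Prob$; thus, seen inside $L^1_\Prob$, the net $(X_\alpha)$ converges to $X$ in $\sigma(L^1_\Prob,\Linfty)$. Closedness of $C_r$ in this topology forces $X\in C_r$, completing the Krein--Smulian reduction and hence the proof.

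\textbf{Main obstacle.} The delicate point is the interplay among four topologies on the $\Linfty$-ball $B_r$: $\sigma(\Linfty,L^1_\Prob)$, $\sigma(L^1_\Prob,\Linfty)$, $\|\cdot\|_{L^1_\Prob}$, and $d_\Prob$. The two bridges that make the argument work are (a) $d_\Prob$ and $\|\cdot\|_{L^1_\Prob}$ coincide on $B_r$ by dominated convergence, which hands Mazur's theorem a foothold, and (b) the inclusion $\Linfty\subseteq L^1_\Prob$ (finite measure) forces every $\sigma(\Linfty,L^1_\Prob)$-convergent net in $B_r$ to converge also in $\sigma(L^1_\Prob,\Linfty)$. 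Without the reduction to bounded pieces supplied by Krein--Smulian one could not pivot between the weak$^\ast$ topology of $\Linfty$ and the weak topology of $L^1_\Prob$, since in general these topologies are incomparable.
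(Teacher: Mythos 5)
Your proof is correct. Note that the paper itself offers no proof of this lemma --- it is quoted verbatim from Grothendieck's book with a citation --- so there is nothing internal to compare against; your argument is the standard textbook derivation and it supplies exactly what the citation points to. The two pillars are sound: the identification of $\sigma(\Linfty,\countable_\Prob)$ with the weak$^\ast$ topology $\sigma(\Linfty,L^1_\Prob)$ via Radon--Nikod\'ym, and the Krein--\v{S}mulian reduction of weak$^\ast$ closedness of a convex set to weak$^\ast$ closedness of its intersections with norm balls, after which the equivalence of $d_\Prob$-convergence and $L^1$-convergence on $B_r$ plus Mazur's theorem closes the loop. One cosmetic point in the forward direction: dominated convergence applies along the almost surely convergent \emph{subsequence}, not the full sequence; this is harmless because that subsequence already lies in $C$ and converges weak$^\ast$ to $X$, which is all that weak$^\ast$ closedness of $C$ requires (or one invokes the usual sub-subsequence argument to get convergence of the full sequence of integrals). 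Likewise, in the last step of the converse the membership $X\in B_r$ is not actually needed beyond guaranteeing $X\in L^1_\Prob$, since $\sigma(L^1_\Prob,\Linfty)$-closedness of $C_r$ alone yields $X\in C_r$. Neither remark affects the validity of the argument.
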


\noindent(i): For a set $\Gamma\subseteq \Linfty$ we define $\Gamma^{\diamond}:=\{\mu\in \countable_\Prob\mid\forall X\in\Gamma:\,\int X\,d\mu\leq 1\}$, the one-sided polar of $\Gamma$. Moreover, $\Gamma^{\diamond}=(\cl_{\sigma(\Linfty,\countable_\Prob)}(\Gamma))^{\diamond}$. Having this information at hand, one can easily identify\footnote{ The only difficult part is the following: Recall from Remark \ref{shapiro}(iv) that $\Rho(Y)\leq 0$ if and only if $Y\in\cl_{\Norm_{\infty}}(\acc+\ker(\price))$. Assume $\nu\in\mathbf{C}^{\diamond}$, then $\nu\in(\countable_\Prob)_+$ by normalisation and monotonicity of $\acc$. Also, $\mathbf C$ being a cone shows
$$\cenv^\diamond=\{\nu\in\countable_\Prob\mid\forall Y\in\cenv:~\int Y\,d\nu\leq 0\}.$$ Let $U\in\Scal\cap(\Linfty)_{++}$ and $X\in\Linfty$. Since $\Rho(X-\frac{\Rho(X)}{\price(U)}U)=0$, we obtain that either $c:=\int\frac 1 {\price(U)}U\,d\nu=0$, which implies $\nu(\Omega)=0$ and $\nu=0$, or 
$$c\sup_{X\in\Linfty}\left(\int X\,d\left(\frac{\nu}c\right)-\Rho(X)\right)\leq 0\quad\Longrightarrow\quad\frac{\nu}c\in E_0.$$} 
\begin{center}$\{c\mu\mid c\geq 0, \mu\in E_0\}=\left(\bigcup\{t\acc+\ker(\price)\mid t\geq 0\}\right)^{\diamond}=\cenv^{\diamond}.$\end{center}
From the Bipolar Theorem \cite[Theorem 5.91]{Ali} we deduce $$\cenv=\{X\in\Linfty\mid \forall \mu\in E_0: \int X\,d\mu\leq 0\}.$$ Hence $\cenv$ is an acceptance set. Consider the following risk measurement regime and its implied risk measure:
\[\check{\cal R}:=\left(\cenv,\Scal,\price\right),\quad \rho_{\check{\cal R}}(X)=\sup_{\mu\in E_0}\int X\,d\mu,~ X\in\Linfty.\]
$\rho_{\check{\cal R}}$ is finite, coherent, and continuous from above by Lemma~\ref{lem:strongref}. Hence, by Lemma~\ref{lem:cohpcal}, $\mathcal P\neq\emptyset$ is equivalent to $\rho_{\check{\cal R}}$ being sensitive, i.e. $\cenv$ does not contain any element in $(\Linfty)_{++}$. 

\smallskip\noindent
(ii): Suppose that $\cenv\cap(\Linfty)_{++}$ is non-empty. Using the monotonicity and conicity of $\cenv$, we can find some $B\in {\cal F}_+$ such that $\Ind_B\in \cenv$.
Let us define the sets
\begin{center}$\mathbf{D} :=\{Y=d_{\Prob}\tn-\lim_nt_nW_n\mid t_n\geq 0, W_n\in\acc+\ker(\price)\}$,\\
$\mathbf{D}_r=\{Y\in \mathbf D\mid\|Y\|_{\infty}\leq r\},~ r>0.$\end{center}
$\mathbf D$ is a convex cone. It is straightforward to check that $\mathbf{D}_r$ is $d_\Prob$-closed. We apply Grothendieck's Lemma \ref{R11} to infer that $\mathbf{D}$ is a $\sigma(\Linfty, \countable_\Prob)$-closed cone. Thus the inclusion $\cenv\subseteq \mathbf{D}$ holds and we must be able to find sequences $(X_n)_{n\in\Nat}\subseteq\acc$, $(Z_n)_{n\in\Nat}\subseteq\ker(\price)$ and $(t_n)_{n\in\Nat}\subseteq(0,\infty)$ such that $\Ind_B=d_{\Prob}$-$\lim_nt_n(X_n+Z_n)$. Define $V_n:=t_n(X_n+Z_n)$, $n\in\Nat$. 
Without loss of generality we can assume that $\|X_n+Z_n\|_{\infty}\leq 1$. Otherwise, note that by normalisation $0\in\mathbb A:=\cl_{\Norm_{\infty}}(\acc+\ker(\price))$, and $\cenv$ is also the smallest $\sigma(\Linfty,\countable_\Prob)$-closed cone that contains $\mathbb A$; thus we can shift to 
\[\frac{X_n+Z_n}{\|X_n+Z_n\|_{\infty}}\in\mathbb A,\quad \tilde t_n=\|X_n+Z_n\|_{\infty}t_n.\]
As $\mathbb A$ is convex, $(t_n)_{n\in\Nat}$ cannot be bounded. If there is some $M>0$ such that $\sup_{n\in\Nat}t_n\leq M$,
we can define the sequence $(t_n(X_n+Z_n)/2M)_n\subseteq\mathbb A$ which converges in probability and with respect to $\sigma(\Linfty,\countable_\Prob)$ to $\Ind_B/2M\in \mathbb A$. As $\mathbb A=\{Y\mid \Rho(Y)\leq 0\}$, we would obtain a \textsc{contradiction} to the sensitivity of $\Rho$ and can therefore assume $t_n\uparrow\infty$. $d_{\Prob}(V_n,\Ind_B)\to 0$ for $n\to\infty$ implies
\begin{center}$V_n^-\overset{d_{\Prob}}{\longrightarrow}0,\quad V_n^+\overset{d_{\Prob}}{\longrightarrow}\Ind_B,\quad n\to\infty,$\end{center}
and this means that
\[\limsup_{n\rightarrow\infty}\Prob\left(\left\{V_n^+\geq\frac 1 2\right\}\cap B\right)=\Prob(B),\]
The rule of equal speed of convergence is violated.

\smallskip\noindent
(iii): Let $(X_n)_{n\in\Nat}\subseteq\acc,(Z_n)_{n\in\Nat}\subseteq\ker(\price)$ and $(t_n)_{n\in\Nat}$ be sequences violating the rule of equal speed of convergence such that the rescaled sequence $(V_n)_{n\in\Nat}$ is bounded in the $\Norm_{\infty}$-norm. Let $B$ be a measurable set with positive probability such that
\[\limsup_{n\rightarrow\infty}\Prob(\{V_n\geq \eps\}\cap B)=\Prob(B).\]
Let $\mu\approx\Prob$ be a finite measure with $\int Z\,d\mu=\price(Z)$ for all $Z\in\Scal$, and let $\eta>0$ be an arbitrary positive number. 
Note that due to the Dominated Convergence Theorem and the bounded vanishing in probability of $V_n^-$, we obtain for $n\to\infty$ the behaviour $\lim_n\int V_n\Ind_{\{V_n\leq-\eta\}}\,d\mu\rightarrow 0.$
For all $n$ large enough such that $|\int V_n\Ind_{\{V_n\leq-\eta\}}\,d\mu|<\eta$ we can estimate 
\begin{align*}\int V_n\,d\mu&\geq\eps\mu(\{V_n\geq \eps\}\cap B)-\eta \mu(V_n\in(-\eta,\eps))-\eta.\end{align*}
Thus for all $\eta>0$ our assumption yields the estimate $$\limsup_{n\rightarrow\infty}\int V_n\,d\mu\geq\eps \mu(B)-\eta(\mu(\Omega)+1).$$
Sending $\eta\downarrow 0$, we obtain from $\mu\approx\Prob$ that $\limsup_{n\rightarrow\infty}\int V_n\,d\mu\geq \eps \mu(B)>0.$
After choosing $n$ suitably we have found a vector in $\acc+\ker(\price)$ such that $\int(X_n+Z_n)\,d\mu>0$, hence $\mu\notin E_0$. 

\end{appendix}

\bigskip

\noindent\textbf{Acknowledgements:} We would like to thank an anonymous referee for helpful comments that aided in improving the original draft of this manuscript.

\bibliographystyle{plain}

\end{document}